\newcommand{\newsection}{\setcounter{equation}{0}\section}
\newcommand{\mbf}[1]{{\boldsymbol {#1} }}
\def\appendix#1{\addtocounter{section}{1}\setcounter{equation}{0}
\renewcommand{\thesection}{\Alph{section}}
\section*{Appendix \thesection\protect\indent \parbox[t]{11.715cm} {#1}}
\addcontentsline{toc}{section}{Appendix \thesection\ \ \ #1} }
\newcommand{\eq}{\begin{equation}}
\newcommand{\eqend}{\end{equation}}
\newbox\ncintdbox \newbox\ncinttbox
\def\Dirac{{D\!\!\!\!/\,}} 
\def\={\ =\ }
\def\spinc{spin$^c$~}
\newcommand{\complex}{{\mathbb C}} 
\newcommand{\zed}{{\mathbb Z}} 
\newcommand{\nat}{{\mathbb N}} 
\newcommand{\real}{{\mathbb R}} 
\newcommand{\rat}{{\mathbb Q}} 
\newcommand{\id}{{1\!\!1}} 
\def\alg{{\mathcal A}}
\def\balg{{\mathcal B}}
\def\hil{{\mathcal H}}
\def\bun{{\mathcal E}}
\def\lin{{\mathcal L}}
\def\comp{{\mathcal K}}
\def\spinor{{S}}
\def\pt{{\rm pt}}
\def\ev{{\rm ev}}
\def\CS{{\rm WZ}}
\def\cpt{{\rm cpt}}
\def\K{{\rm K}}
\def\H{{\rm H}}
\def\C{{\rm C}}
\def\E{{\rm E}}
\def\B{{\mathbb{B}}}
\def\Ltwo{{\rm L}^2}
\def\S{{\mathbb{S}}}
\def\F{{\rm filt}}
\def\U{{\rm U}}
\def\im{{\rm im}}
\def\Tor{{\rm Tor}}
\def\Hom{{\rm Hom}}
\def\End{{\rm End}}
\def\MSpin{{\rm MSpin}}
\def\Cl{{{\rm C}\ell}}
\def\Fr{{\rm Fr}}
\def\ch{{\rm ch}}
\def\Id{{\rm id}}
\def\Index{{\rm Index}}
\def\Todd{{\rm Todd}}
\def\ind{{\rm ind}}
\def\Fred{{\rm Fred}}
\def\Cliff{{\rm Cliff}}
\def\Thom{{\rm Thom}}
\def\Euler{{\Lambda}}
\newcommand{\Tr}[1]{\:{\rm Tr}\,#1}
\newcommand{\STr}[1]{\:{\rm STr}\,#1}
\def\be{\begin{equation}}
\def\ee{\end{equation}}
\def\bea{\begin{eqnarray}}
\def\eea{\end{eqnarray}}
\def\bd{\begin{displaymath}}
\def\ed{\end{displaymath}}
\def\dd{{\rm d}}
\def\ii{{\,{\rm i}\,}}
\def\Vect{{\rm Vect}}
\newdimen\normalarrayskip              
\newdimen\minarrayskip                 
\newif\ifold             \oldtrue            
\def\arraymode{\ifold\relax\else\displaystyle\fi} 
\def\@arrayskip{\ifold\baselineskip\z@\lineskip\z@
     \else
     \baselineskip\minarrayskip\lineskip2\minarrayskip\fi}
\def\@arrayclassz{\ifcase \@lastchclass \@acolampacol \or
\@ampacol \or \or \or \@addamp \or
   \@acolampacol \or \@firstampfalse \@acol \fi
\edef\@preamble{\@preamble
  \ifcase \@chnum
     \hfil$\relax\arraymode\@sharp$\hfil
     \or $\relax\arraymode\@sharp$\hfil
     \or \hfil$\relax\arraymode\@sharp$\fi}}
\def\@array[#1]#2{\setbox\@arstrutbox=\hbox{\vrule
     height\arraystretch \ht\strutbox
     depth\arraystretch \dp\strutbox
     width\z@}\@mkpream{#2}\edef\@preamble{\halign \noexpand\@halignto
\bgroup \tabskip\z@ \@arstrut \@preamble \tabskip\z@ \cr}%
\let\@startpbox\@@startpbox \let\@endpbox\@@endpbox
  \if #1t\vtop \else \if#1b\vbox \else \vcenter \fi\fi
  \bgroup \let\par\relax
  \let\@sharp##\let\protect\relax
  \@arrayskip\@preamble}
\newcommand{\beq}{\begin{eqnarray}}
\newcommand{\eeq}{\end{eqnarray}}
\newcommand{\KO}{{\rm KO}}
\newcommand{\KK}{{\rm KK}}
\newcommand{\Spin}{{\rm Spin}}
\newcommand{\SO}{{\rm SO}}
\newcommand{\SL}{{\rm SL}}
\newcommand{\ass}{{\rm ass}}
\newcommand{\ocat}[1]{\textsf{Or}(#1)}
\newcommand{\subc}[1]{\textsf{Sub}(#1)}
\newcommand{\cat}[1]{\textsf{#1}}
\newcommand{\lmod}[1]{{#1}\!-\!\text{Mod}}
\newcommand{\rmod}[1]{\text{Mod}\!-\!{#1}}
\def\appendix#1{\addtocounter{section}{1}\setcounter{equation}{0}
\renewcommand{\thesection}{\Alph{section}}
\section*{Appendix \thesection. #1}
\addcontentsline{toc}{section}{Appendix \thesection\ \ \ #1} }
\newcommand{\dslash}{\not{\hbox{\kern-2pt $\partial$}}}
\newcommand{\pslash}{\not{\hbox{\kern-2.3pt $p$}}}
 \newtoks\nslashfraction
 \newcommand{\nslash}[1]{\setbox0\hbox{$ #1 $}
   \setbox0\hbox to \the\nslashfraction\wd0{\hss \box0}/\box0 }
\def\ii{{\,{\rm i}\,}}
\newtheorem{theorem}{Theorem}[section]
\newtheorem{lemma}[theorem]{Lemma}
\newtheorem{cor}[theorem]{Corollary}
\newtheorem{proposition}[theorem]{Proposition}
\theoremstyle{definition}
\newtheorem{definition}[theorem]{Definition}
\theoremstyle{remark}
\newtheorem{example}[theorem]{Example}
\numberwithin{equation}{section}
\begin{document}

\hfill{\small HWM--07--41}

\hfill{\small EMPG--07--19}

\vskip 1cm

\title[Ramond-Ramond Fields, Fractional Branes and Orbifold
Differential K-theory]
{Ramond-Ramond Fields, Fractional Branes\\ and Orbifold
Differential K-theory}

\author{Richard J. Szabo and Alessandro Valentino}

\address{Department of Mathematics and Maxwell Institute for
  Mathematical Sciences, Heriot-Watt University, Colin Maclaurin
  Building, Riccarton, Edinburgh EH14 4AS, U.K.}

\email{R.J.Szabo@ma.hw.ac.uk}

\email{A.Valentino@ma.hw.ac.uk}

\begin{abstract}
We study D-branes and Ramond-Ramond fields on global orbifolds of
Type~II string theory with vanishing $H$-flux using methods of
equivariant K-theory and K-homology. We illustrate how Bredon
equivariant cohomology naturally realizes stringy orbifold
cohomology. We emphasize its role as the correct cohomological tool
which captures known features of the low-energy effective field
theory, and which provides new consistency conditions for fractional
D-branes and Ramond-Ramond fields on orbifolds. We use an equivariant
Chern character from equivariant K-theory to Bredon cohomology to
define new Ramond-Ramond couplings of D-branes which generalize
previous examples. We propose a definition for groups of
differential characters associated to equivariant K-theory. We
derive a Dirac quantization rule for Ramond-Ramond fluxes, and
study flat Ramond-Ramond potentials on orbifolds.
\end{abstract}

\maketitle

\renewcommand{\thefootnote}{\arabic{footnote}}
\setcounter{footnote}{0}

\newsection*{Introduction}

The study of fluxes and D-branes has been of fundamental importance in
understanding the nonperturbative structures of string theory and
M-theory. It has also established a common ground on which a fruitful
interaction between physics and mathematics takes place. For example,
the seminal papers~\cite{Minasian1997,Witten1998} demonstrated that
D-brane charges in Type~II superstring theory are classified by the
K-theory of the spacetime manifold, and that ordinary cohomology alone
cannot account for certain physical features induced by the dynamics
of D-branes. As emphasized by
refs.~\cite{Asakawa2001,Periwal:2000,Harvey:2000,Szabo2002}, and
analyzed in great detail in refs.~\cite{Reis2005,Reis2006}, another
description of D-branes is provided by K-homology which sheds light on
their geometrical nature and suggests that the standard picture
of a D-brane as a submanifold of spacetime equiped with a vector
bundle (and connection) should be modified.

Ramond-Ramond fields are dual objects to D-branes and have also been
extensively investigated, but until recently their geometric nature
has remained somewhat obscure. In ref.~\cite{Moore2000} it was
proposed that Ramond-Ramond fields are also classified by K-theory,
and that their total field strengths lie in the image of the Chern
character homomorphism from K-theory to ordinary cohomology. This
result led to the understanding that the Ramond-Ramond field is
correctly understood as a self-dual field quantized by K-theory, and
it explains various subtle issues surrounding the partition functions
of these fields. In refs.~\cite{Freed2000,Freed2000a} it was proposed
that these properties are most naturally formulated by regarding
Ramond-Ramond fields as cocycles for the differential K-theory of
spacetime, an elegant description that allows one to study the gauge
theory of Ramond-Ramond fields in topologically non-trivial
backgrounds which naturally incorporates consistency conditions such
as anomaly cancellation on branes in string theory and M-theory. These
issues were among the motivations that led to the
foundational paper~\cite{Hopkins2005}, in which a detailed, elaborate
construction for generalized differential cohomology theories is
given. The importance of these mathematical theories has been greatly
emphasized in refs.~\cite{Freed2006a,Freed:2006}, where they are used
to define and understand certain novel properties of quantum Hilbert
spaces of abelian gauge field fluxes. A twisted version of
differential K-theory has been proposed in
refs.~\cite{Freed:2006,Belov2006} and applied to the quantization of
Ramond-Ramond fields in an $H$-flux background, while a rigorous
geometrical definition of this theory has been developed recently
in~ref.~\cite{Carey2007}.

The goal of this paper is to extend these lines of developments to
study properties of Ramond-Ramond fields and D-branes in orbifolds of
Type~II superstring theory with vanishing $H$-flux. We limit our study
to the cases of good (or global) orbifolds $[X/G]$, where $X$ is a
manifold and $G$ is a finite group acting via diffeomorphisms of
$X$. It is possible to resolve singularities in the orbifold where it
fails to be a manifold, and replace the quotient space by a
non-compact manifold with appropriate asymptotic behaviour. However,
orbifold singularities do not pose a problem and one can still have
consistent superstrings propagating on
orbifolds~\cite{Dixon1985,Dixon1986}. It was proposed in
ref.~\cite{Witten1998} that D-branes on the orbifold spacetime $[X/G]$
are classified by the $G$-equivariant K-theory of the covering space
$X$, as defined in ref.~\cite{Segal1968}. A recent overview of related
developments in the case of abelian orbifolds can be found in
ref.~\cite{Kriz2007}.

One of the main new ingredients that we introduce into the description
of D-branes and fluxes on orbifolds is the use of Bredon
cohomology~\cite{Bredon1967,Dieck1987}. This is a powerful equivariant
cohomology theory that has both advantages and pitfalls. In contrast
to the more commonly used Borel equivariant cohomology, Bredon
cohomology is a good ``approximation'' to the classification of
D-brane charges. We will support this statement by showing that it
correctly captures the properties of Ramond-Ramond fields on an
orbifold, in particular it naturally takes into account the twisted
sectors of the string theory. It thereby gives a precise, rigorous
realization of stringy orbifold cohomology. We will also see that it
naturally arises in the Atiyah-Hirzebruch spectral sequence for
equivariant K-theory, a fact that we shall exploit to describe new
consistency conditions for D-branes and fluxes on orbifolds in terms
of classes in the Bredon cohomology of the covering space $X$. Related
to this feature is the fact that this equivariant cohomology theory is
the target for a Chern character homomorphism on equivariant K-theory,
defined in ref.~\cite{Luck1998}, which induces an isomorphism when
tensored over $\mathbb{R}$. By means of this technology, we present
new compact and elegant expressions for the Wess-Zumino couplings of
Ramond-Ramond fields to D-branes on $[X/G]$. This generalizes the
usual Ramond-Ramond couplings~\cite{Minasian1997} to orbifolds, and
yields appropriate correction terms to previous flat space
formulas. The major drawback of Bredon cohomology is that it is a
rather difficult, abstract theory to define, and is even more
difficult to explicitly calculate than other equivariant cohomology
theories.

Another main achievement of this paper is a proposed
definition of differential K-theory suitable for orbifolds. Though
extremely powerful and general, the machinery developed in
ref.~\cite{Hopkins2005} cannot be immediately applied to an
equivariant cohomology functor on the category of $G$-manifolds. By
using Bredon cohomology and the equivariant Chern character, we define
abelian groups that behave as natural generalizations of the ordinary
differential K-theory groups, in the sense that they agree in the case
of a trivial group and they satisfy analogous exact
sequences. Although far from having the generality of the work of
ref.~\cite{Hopkins2005}, our construction gives a systematic framework
in which to study Ramond-Ramond fields on orbifolds with a Dirac
quantization condition, including non-trivial contributions from flat
potentials, and it represents a first step in the development of
generalized differential cohomology theories in the equivariant
setting. It is here that the use of Bredon cohomology is particularly
important, both because of the equivariant Chern character isomorphism
and because the framework requires explicit use of differential forms,
neither of which can be accomodated directly by the Borel
construction.

The outline of the remainder of this paper is as follows. In
Section~\ref{Cohomologies} we summarize some basic notions about the
cohomology theories of spaces with group actions. In
Section~\ref{EqChern} we present a detailed definition of Bredon
cohomology and the construction of the equivariant Chern character of
ref.~\cite{Luck1998}, as these have not made appearences before in the
physics literature. These first two sections give the main
mathematical background for the rest of the paper. In
Section~\ref{DKhom} we make a brief excursion into the description of
D-branes using geometric equivariant K-homology, showing that the use
of K-cycles is very well-suited to the description of {\it fractional}
D-branes and their topological charges computed using equivariant
Dirac operator theory. In Section~\ref{RRCouplings} we use Bredon
cohomology and the equivariant Chern character to define Ramond-Ramond
couplings to D-branes on orbifolds and compare it with previous
examples in the literature. Our formulas include the appropriate
gravitational contributions which are derived from an equivariant
version of the Riemann-Roch theorem and equivariant index theory. In
Section~\ref{OrbdiffKG} we give a detailed mathematical construction
of the orbifold differential K-theory groups, and prove that
they fit into appropriate exact sequences which are useful in
applications. In Section~\ref{Fluxquant} we use the orbifold
differential K-theory to describe the flux quantization of
Ramond-Ramond fields on orbifolds by writing an equivariant version
of the Ramond-Ramond current in terms of the equivariant Chern
character. We also study the group of flat potentials in detail, and
illustrate how the spectral sequence for equivariant K-theory can be
used to determine obstruction classes in Bredon cohomology which yield
stability conditions for D-branes and fluxes on orbifolds. Appendix~A
contains some background material on functor categories used in the
main text, Appendix B records the definitions of equivariant
K-homology, while Appendix C demonstrates the use of geometric
equivariant K-cycles in the classification of D-brane charges on
orbifolds.
 
\subsection*{Acknowlegments}

We are grateful to J.~Figueroa-O'Farrill, D.~Freed, J.~Greenlees,
J.~Howie, A.~Konechny, W.~L\"uck, M.~Lawson, R.~Reis, P.~Turner and
S.~Willerton for helpful discussions and correspondence. This work was
supported in part by the Marie Curie Research Training Network Grant
{\sl ForcesUniverse} (contract no.~MRTN-CT-2004-005104) from the
European Community's Sixth Framework Programme.

\newsection{Cohomology of spaces with symmetries\label{Cohomologies}}

In this section we will recall some basic notions about (generalized)
equivariant cohomology theories that we will need throughout this
paper. In the following, $X$ denotes a topological space and $G$ a
finite group, unless otherwise stated. Throughout a (left) action $G\times
X\to X$ of $G$ on $X$ will be denoted $(g,x)\mapsto g\cdot x$. The
stabilizer or isotropy group of a point $x\in X$ is denoted
$G_x=\{g\in G~|~g\cdot x=x\}$. Recall that a continuous map $f:X\to Y$
of $G$-spaces is a \emph{$G$-map} if $f(g\cdot x)=g\cdot f(x)$ for all
$g\in G$ and $x\in X$.

\subsection{$\mbf G$-complexes\label{Gcomplex}}

A \emph{$G$-equivariant CW-decomposition} of a $G$-space $X$ consists
of a filtration $X_{n}$, $n\in{\mathbb{N}_0}$ such that
\begin{displaymath}
X=\bigcup_{n\in\nat_0}\,{X_{n}}
\end{displaymath}
and $X_{n}$ is obtained from $X_{n-1}$ by ``attaching'' equivariant
cells by the following procedure. Define
\begin{displaymath}
X_{0}=\coprod_{j\in{J_0}}\,G/K_{j} \ ,
\end{displaymath}
with $K_{j}$ a collection of subgroups of ${G}$ and the standard
(left) $G$-action on any coset space $G/K_j$. For $n\geq1$ set
\beq
X_{n}=\Big(X_{n-1}\amalg\coprod_{j\in J_n}\,\big(\B^{n}_{j}
\times{G/{K_{j}}}\big)\Big)\,\Big/\,\sim
\label{Xnattach}\eeq
where the equivalence relation $\sim$ is generated by $G$-equivariant
``attaching maps''
\beq
\phi^{n}_{j}\,:\,\S^{n-1}_{j}\times{G/K_{j}}
~\longrightarrow~ X_{n-1} \ .
\label{Gattach}\eeq
One requires that $X$ carries the colimit topology with
respect to ($X_{n}$), \emph{i.e.}, $B\subset{X}$ is closed if and only
if $B\cap{X_{n}}$ is closed in $X_{n}$ for all $n\in\nat_0$. We call
the image of $\B^{n}_{j}\times{G/{K_{j}}}$
(resp.~$\mathring{\B}^{n}_{j}\times{G/{K_{j}}}$) a
\emph{closed} (resp.~\emph{open}) $n$-cell of orbit type $G/K_{j}$. As
usual, we call the subspace $X_{n}$ the {$n$-skeleton} of
$X$. If $X=X_{n}$ and $X\neq{X_{n-1}}$, then $n$ is called
the (\emph{cellular}) \emph{dimension} of $X$ and $X$ is said to be of
\emph{finite type}. A $G$-space with a $G$-equivariant
CW-decomposition is called a \emph{$G$-complex}.

When $G=e$ is the trivial group, a $G$-complex is just an ordinary
CW-complex. In general, if $X$ is a $G$-complex then the orbit space
$X/G$ is an ordinary CW-complex. Conversely, there is an intimate
relation between $G$-complexes and ordinary CW-complexes whenever $G$
is a discrete group. Let $X$ be a $G$-space which is an ordinary
CW-complex. We say that \emph{$G$ acts cellularly} on $X$ if 
\begin{itemize}
\item[1)] For each $g\in{G}$ and each open cell $E$ of $X$, the left
  translation $g\cdot E$ is again an open cell of $X$; and
\item[2)] If $g\cdot E=E$, then the induced map $E \to E$, $x \mapsto
  g\cdot x$ is the identity.
\end{itemize}
Then we have the following
\begin{proposition} Let $X$ be a CW-complex with a cellular action of
  a discrete group $G$. Then $X$ is a $G$-complex with $n$-skeleton
  $X_{n}$.
\end{proposition}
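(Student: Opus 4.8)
The plan is to reorganize the cells of the CW-complex $X$ into $G$-orbits and to assemble each orbit into a single equivariant cell of the form $\B^n_j\times G/K_j$. Fix $n\in\nat_0$ and, using hypothesis~(1), choose representatives $\{e^n_j\}_{j\in J_n}$ for the $G$-orbits on the set of open $n$-cells of $X$; let $K_j=\{g\in G~|~g\cdot e^n_j=e^n_j\}$ be the setwise stabilizer of $e^n_j$, a subgroup of $G$, and fix a characteristic map $\Phi^n_j:\B^n_j\to X$ for $e^n_j$. The technical heart of the argument is the rigidity statement that $k\circ\Phi^n_j=\Phi^n_j$ for every $k\in K_j$: by hypothesis~(2) the map $y\mapsto k\cdot y$ is the identity on $e^n_j$, so $k\circ\Phi^n_j$ and $\Phi^n_j$ are continuous maps $\B^n_j\to X$ agreeing on the dense subset $\mathring{\B}^n_j$, hence they agree everywhere since $X$ is Hausdorff. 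In particular $gK_j\mapsto g\cdot e^n_j$ is a well-defined injection of $G/K_j$ into the set of open $n$-cells, with image the orbit $G\cdot e^n_j$.

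Using rigidity I can now define the equivariant characteristic map $\widetilde\Phi^n_j:\B^n_j\times G/K_j\to X$ by $(x,gK_j)\mapsto g\cdot\Phi^n_j(x)$: it is well-defined, $G$-equivariant for the standard left $G$-action on $G/K_j$, and restricts to a homeomorphism of $\mathring{\B}^n_j\times G/K_j$ onto the open, $G$-invariant subset $G\cdot e^n_j\subseteq X$. Its restriction $\phi^n_j:=\widetilde\Phi^n_j|_{\S^{n-1}_j\times G/K_j}$ is the desired equivariant attaching map, and it actually takes values in $X_{n-1}$: indeed $\Phi^n_j(\S^{n-1}_j)\subseteq X_{n-1}$ by definition of a characteristic map, and each $g\in G$ acts as a homeomorphism of $X$ carrying open cells to open cells of the same dimension, so $g$ preserves every skeleton. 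For $n=0$ there are no attaching maps and the above simply records $X_0=\coprod_{j\in J_0}G/K_j$ with the standard $G$-action, as required by the definition of a $G$-complex.

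It remains to check that $X_n$ is obtained from $X_{n-1}$ by attaching the equivariant cells $\B^n_j\times G/K_j$ along the maps $\phi^n_j$. Since $X$ is an ordinary CW-complex, $X_n$ is the quotient of $X_{n-1}\amalg\coprod_{\alpha}\B^n_\alpha$ (the disjoint union ranging over \emph{all} open $n$-cells, with their ordinary characteristic maps) by the equivalence relation generated by the ordinary attaching maps. Grouping the open $n$-cells into their $G$-orbits and invoking the bijection $G/K_j\leftrightarrow G\cdot e^n_j$ from the first paragraph, this disjoint union is canonically $X_{n-1}\amalg\coprod_{j\in J_n}(\B^n_j\times G/K_j)$ and the generating relation is precisely the one generated by the $\phi^n_j$; hence $X_n$ is $G$-equivariantly homeomorphic to $\big(X_{n-1}\amalg\coprod_{j\in J_n}\B^n_j\times G/K_j\big)/{\sim}$ in the sense of~\eqn{Xnattach}. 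Finally, $X$ carries the colimit topology with respect to $(X_n)_{n\in\nat_0}$ because it does so as a CW-complex. Thus $(X_n)_{n\in\nat_0}$ is a $G$-equivariant CW-decomposition of $X$ with $n$-skeleton $X_n$, so $X$ is a $G$-complex of the asserted form.

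The main obstacle is the rigidity statement together with the attendant bookkeeping: one must know that the setwise stabilizer $K_j$ fixes the characteristic map of $e^n_j$ (so that $\widetilde\Phi^n_j$ is well-defined and $G/K_j$ genuinely injects into the set of cells), and then verify that the colimit topology on the regrouped pushout coincides with that of $X_n$. Both points are standard consequences of $X$ being Hausdorff and carrying the weak topology with respect to its skeleta, but they are exactly where hypotheses~(1) and~(2) enter, and where a cellular action is strictly stronger than a merely continuous action permuting the cells.
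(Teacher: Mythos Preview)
The paper states this proposition without proof; it is quoted as a standard structural fact, with the surrounding section citing tom~Dieck~\cite{Dieck1987} as the general reference. Your argument is correct and is precisely the standard one: partition the ordinary $n$-cells into $G$-orbits, use hypothesis~(2) to force the setwise stabilizer $K_j$ to fix the chosen characteristic map $\Phi^n_j$ pointwise (your ``rigidity'' step, via density of $\mathring{\B}^n_j$ and Hausdorffness of $X$), and then regroup the ordinary pushout $X_{n-1}\cup_{\phi_\alpha}\coprod_\alpha \B^n_\alpha$ orbit-by-orbit into the equivariant pushout~\eqn{Xnattach}. Since the paper gives no proof of its own, there is nothing further to compare.
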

In the case that $X$ is a smooth manifold, we require the $G$-action
on $X$ to be smooth and there is an analogous result. Recall that the
applicability of algebraic topology to manifolds relies on the fact
that any manifold comes equiped with a canonical CW-decomposition. In
the case in which a group acts on the manifold one has the following
result due to Illman~\cite{Illman1978,Illman1983}.
\begin{theorem}
If $G$ is a compact Lie group or a finite group acting on a smooth
compact manifold $X$, then $X$ is triangulable as a finite
$G$-complex.
\end{theorem}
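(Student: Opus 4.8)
The plan is to establish the stronger assertion that every smooth manifold carrying a smooth action of a compact Lie group $G$ admits an \emph{equivariant smooth triangulation}: a $G$-homeomorphism $h\colon |K|\to X$ from the geometric realization of a simplicial complex $K$ on which $G$ acts simplicially, such that the restriction of $h$ to each closed simplex is a smooth embedding. Given such an $h$, after barycentrically refining $K$ if necessary so that each orbit of simplices has trivial stabilizer on each of its simplices, conditions (1) and (2) of the ``cellular action'' in Subsection~\ref{Gcomplex} are met, and the skeleta of $K$ exhibit $X$ as a $G$-complex; when $X$ is compact, $K$ is finite, so $X$ is of finite type. (This is Illman's theorem; what follows only indicates the strategy of~\cite{Illman1978,Illman1983}.)

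First I would pass to local linear models. Averaging an arbitrary Riemannian metric on $X$ over $G$ against normalized Haar measure produces a $G$-invariant metric, and its exponential map yields the equivariant tube theorem: a $G$-invariant neighborhood of an orbit $G\cdot x\cong G/G_x$ is $G$-diffeomorphic to the associated bundle $G\times_{G_x}V_x$, where $V_x$ is the slice representation of the isotropy group $G_x$. Since $X$ is compact, only finitely many conjugacy classes $(H)$ of isotropy subgroups occur, so the orbit type decomposition $X=\coprod_{(H)}X_{(H)}$ is finite and partially ordered by $(H)\le(K)$ iff $H$ is subconjugate to $K$. I would then build the triangulation by induction over this poset, adding the strata one orbit type at a time starting from the most singular (closed) stratum, at each stage having a $G$-triangulation of a $G$-invariant open neighborhood $U$ of the union of the already-treated strata.

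The inductive step reduces to triangulating a $G$-vector bundle compatibly with a given triangulation on a closed invariant subset of the base, and this has two ingredients. First, one triangulates the ``orbit directions'': the base of $G\times_{G_x}V_x$ is a smooth manifold, to which the ordinary triangulation theorem applies, while the $G/G_x$ directions are handled by equivariantly triangulating the compact homogeneous space $G/G_x$ (itself a smooth $G$-manifold of a single orbit type, so treated at the bottom of the induction; for finite $G$ it is just a finite set). Second, one triangulates the fibre, a linear representation $V$ of a compact group, so that every fixed subspace $V^K$ is a subcomplex: pick a $G$-invariant inner product, triangulate the unit sphere $S(V)$ by radial projection of an invariant triangulation adapted to the subspace arrangement $\{V^K\}$, and cone it off at the origin. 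The local pieces are then glued by means of an equivariant collar theorem and the equivariant simplicial approximation theorem, matching the bundle triangulation to the one already present on the overlap with $U$.

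The hard part is precisely this matching: it requires an equivariant uniqueness-up-to-isotopy statement for smooth triangulations --- the $G$-analogue of J.\,H.\,C.~Whitehead's theorems that a smooth triangulation exists and is essentially unique --- so that the locally constructed triangulations can be deformed $G$-equivariantly, through smooth triangulations, until they agree where they overlap. Establishing this equivariant Whitehead theory (with, as a key input, the equivariant triangulability of linear representations with all fixed-point sets as subcomplexes) is the technical core of Illman's work. Granting it, the induction terminates after finitely many steps and yields the desired finite $G$-complex structure on $X$.
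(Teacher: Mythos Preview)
The paper does not give its own proof of this theorem; it simply states the result and attributes it to Illman~\cite{Illman1978,Illman1983}. Your proposal is therefore not being compared against a proof in the paper but against the literature it cites, and on that score your sketch is a faithful outline of Illman's strategy: invariant metric and slice theorem to get local linear models, induction over the finite orbit-type stratification, and the equivariant Whitehead-type uniqueness results to glue the local triangulations. You correctly flag the last of these as the technical heart of the matter and appropriately label the whole thing a strategy rather than a proof. One minor point: for compact Lie $G$ the passage from ``simplicial $G$-action'' to ``$G$-CW-complex'' is slightly more delicate than a barycentric subdivision (the orbits $G/H$ are not discrete), but Illman handles this and you defer to him, so there is no genuine gap.
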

The collection of $G$-complexes with $G$-maps as morphisms form a
category. We are interested in equivariant cohomology theories defined
on this category (or on subcategories thereof).

\subsection{Equivariant cohomology theories\label{EqCohTh}}

We will now briefly spell out the main ingredients involved in
building an equivariant cohomology theory on the category of finite
$G$-complexes, leaving the details to the comprehensive treatments
of~refs.~\cite{Dieck1987} and~\cite{Luck2006}, and focusing instead on some
explicit examples. Fix a group $G$ and a commutative
ring $R$. A \emph{$G$-cohomology theory $\E^\bullet_{G}$ with values in
  $R$-modules} is a collection of contravariant functors $\E^{n}_{G}$
from the category of $G$-CW~pairs to the category of $R$-modules
indexed by $n\in\mathbb{Z}$ together with natural transformations
\begin{displaymath}
\delta^{n}_{G}(X,A)\,:\,\E^{n}_{G}(X,A) ~\longrightarrow~
\E^{n+1}_{G}(X):=\E^{n+1}_{G}(X,\emptyset)
\end{displaymath}
for all $n\in{\mathbb{Z}}$ satisfying the axioms of $G$-homotopy
invariance, long exact sequence of a pair, excision, and disjoint
union. The theory is called \emph{ordinary} if for any orbit $G/H$ one
has $\E_{G}^{q}(G/H)=0$ for all $q\neq{0}$. These axioms are
formulated in an analogous way to that of ordinary cohomology. The new
ingredient in an equivariant cohomology theory (which we have not
yet defined) are the \emph{induction structures}, which we shall now
describe.

Let $\alpha:H\to{G}$ be a group homomorphism, and let $X$ be an
$H$-space. Define the \emph{induction of $X$ with respect to $\alpha$}
to be the $G$-space $\ind_{\alpha}X$ given by
\begin{displaymath}
\ind_{\alpha}X:=G\times_{\alpha}X \ .
\end{displaymath}
This is the quotient of the product $G\times X$ by the $H$-action
$h\cdot(g,x):=(g\,\alpha(h^{-1}),h\cdot x)$, with the $G$-action on
$\ind_{\alpha}X$ given by $g'\cdot[g,x]=[g'\,g,x]$. If $H<G$ and
$\alpha$ is the subgroup inclusion, the induced $G$-space is denoted
$G\times_HX$.

An \emph{equivariant cohomology theory $\E_{(-)}^{\bullet}$ with values in
  $R$-modules} consists of a collection of $G$-cohomology theories
$\E_{G}^{\bullet}$ with values in $R$-modules for each group $G$ such
that for any group homomorphism $\alpha:H \to G$ and any $H$-CW~pair
$(X,A)$ with $\ker(\alpha)$ acting freely on $X$, there are for each
$n\in\mathbb{Z}$ natural isomorphisms
\beq
\ind_{\alpha}\,:\,\E^{n}_{G}\big(\ind_{\alpha}(X,A)\big)
~\xrightarrow{\approx}~\E^{n}_{H}(X,A)
\label{indgen}\eeq
satisfying
\begin{itemize}
\item[(a)] Compatibility with the coboundary homomorphisms:
\begin{displaymath}
\delta^{n}_{H}\circ{\ind_{\alpha}}=\ind_{\alpha}\circ{\delta^{n}_{G}}
\ ;
\end{displaymath}
\item[(b)] Functoriality: If $\beta:G\to{K}$ is another group
  homomorphism such that $\ker(\beta\circ\alpha)$ acts freely on $X$,
  then for every $n\in\mathbb{Z}$ one has
\begin{displaymath}
\ind_{\beta\circ\alpha}=\ind_{\alpha}\circ{\ind_{\beta}}\circ{}\E^{n}_{K}(f_{1})
\end{displaymath}
where
\begin{eqnarray*}
f_{1}\,:\,\ind_{\beta}\big(\ind_{\alpha}(X,A)\big)&\xrightarrow{\approx}&
\ind_{\beta\circ{\alpha}}(X,A) \\
(k,g,x)&\longmapsto&\big(k\,\beta(g)\,,\,x\big)
\end{eqnarray*}
is a $K$-homeomorphism and $\E^{n}_{K}(f_{1})$ is the morphism on
$K$-cohomology induced by $f_1$; and
\item[(c)] Compatibility with conjugation: For $g,g'\in{G}$ define
  ${\rm Ad}_g(g'\,)=g\,g'\,g^{-1}$. Then the homomorphism $\ind_{{\rm Ad}_g}$
  coincides with $\E^{n}_{G}(f_{2})$, where
\begin{eqnarray*}
f_{2}\,:\,(X,A)&\xrightarrow{\approx}& \ind_{{\rm Ad}_g}(X,A)\\
x &\longmapsto& \big(e\,,\,g^{-1}\cdot x\big)
\end{eqnarray*}
is a $G$-homeomorphism, where throughout $e$ denotes the identity
element in the group~$G$.
\end{itemize}
Thus the induction structures connect the various $G$-cohomologies and
keep track of the equivariance. They will be very important in the
construction of the equivariant Chern character for equivariant
K-theory in the next section, even if we are only interested in a
fixed group $G$.

\begin{example}[\emph{Borel cohomology}]\label{Borelex}
Let $\H^{\bullet}$ be a cohomology theory for CW-pairs (for example,
singular cohomology). Define
\begin{displaymath}
\H^{n}_{G}(X,A):=\H^{n}\big(EG\times_{G}(X,A)\big)
\end{displaymath}
where $EG$ is the total space of the classifying principal
$G$-bundle $EG\to BG$ which is contractible and carries a free
$G$-action. This is called (\emph{equivariant}) \emph{Borel
  cohomology}, and it is the most commonly used form of equivariant
cohomology in the physics literature. Note that $\H_G^\bullet$ is
well-defined because the quotient $EG\times_GX$ is unique up to the
homotopy type of $X/G$. The ordinary $G$-cohomology structures on
$\H^{\bullet}_{G}$ are inherited from the cohomology structures on
$\H^{\bullet}$. The induction structures for $\H^{\bullet}_{G}$ are
constructed as follows. Let $\alpha:H \to G$ be a group
homomorphism and $X$ an $H$-space. Define
\begin{eqnarray*}
b\,:EH\times_{H}X &\longrightarrow& EG\times_{G}G\times_{\alpha}X\\
(\varepsilon,x) &\longmapsto& \big(E\alpha(\varepsilon)\,,\,e\,,\,x
\big)
\end{eqnarray*}
where $\varepsilon\in{EH}$, $x\in{X}$ and $E\alpha:EH \to EG$ is the
$\alpha$-equivariant map induced by $\alpha$. The induction map
$\ind_{\alpha}$ is then given by pullback
\begin{displaymath}
\ind_{\alpha}:=b^{*}\,:\,\H^{n}_{G}(\ind_{\alpha}X)=\H^{n}(EG\times_{G}G
\times_{\alpha}X)
~\longrightarrow~\H^{n}(EH\times_{H}X)={\H^{n}_{H}(X)} \ .
\end{displaymath}
If $\ker(\alpha)$ acts freely on $X$, then the map $b$ is a homotopy
equivalence and hence the map $\ind_{\alpha}$ is an isomorphism.
\end{example}

\begin{example}[\emph{Equivariant K-theory}]\label{EqKex}
In~ref.~\cite{Segal1968}, equivariant topological K-theory is defined for
any $G$-complex $X$ as the abelian group completion of the semigroup
$\Vect_G^\complex(X)$ of complex $G$-vector bundles over $X$,
\emph{i.e.}, bundles $E\to X$ together with a lift of the $G$-action
on $X$ to the fibres. The higher groups are defined via iterated
suspension. To define the induction structures, recall that if $X$ is
an $H$-space and $\alpha:H \to G$ is a group homomorphism, then the
map
\begin{eqnarray*}
\varphi\,:\,X &\longrightarrow& G\times_{\alpha}X\\
x &\longmapsto& (e,x)
\end{eqnarray*}
is an $\alpha$-equivariant map which embeds $X$ as the subspace
$H\times_{\alpha}X$ of $G\times_{\alpha}X$, and which induces via
pullback of vector bundles the homomorphism
\begin{displaymath}
\varphi^{*}\,:\,\K^{\bullet}_{G}(G\times_{\alpha}X) ~\longrightarrow~
\K^{\bullet}_{H}(X) \ .
\end{displaymath}
This map defines the induction structure. It is invertible when
$\ker(\alpha)$ acts freely on $X$, with inverse the ``extension'' map
$E\mapsto G\times_{H}E$ for any $H$-vector bundle $E$ over $X$. The
induction structure can be used to prove the well-known
\emph{equivariant excision theorem}
\beq
\K^{\bullet}_{G/N}(X/N)\cong\K^{\bullet}_{G}(X)
\label{eqexcision}\eeq
where $N$ is a normal subgroup of $G$ acting freely on $X$. Indeed,
one has
\begin{displaymath}
X/N\cong(G/N)\times_{G}X
\end{displaymath}
and if we define $\alpha:G\to{G/N}$ to be the quotient map, then
\begin{displaymath}
\K^{\bullet}_{G/N}\big((G/N)\times_{\alpha}X\big)\cong{\K^{\bullet}_{G}(X)}
\end{displaymath}
since $\ker(\alpha)=N$ acts freely on $X$.
\end{example}

\newsection{The equivariant Chern character\label{EqChern}}
In this section we will describe the equivariant Chern character for
the $\K^{\bullet}_{G}$ functor and its target cohomology theory, Bredon
cohomology. To this end, we will introduce some technology related to
modules over functor categories, giving the necessary definitions and
directing the reader to the relevant literature for further
details. Some pertinent aspects of functor categories are summarized
in Appendix~A.

\subsection{Chern character in topological K-theory\label{CherntopK}}
Let us begin by recalling some basic notions about the ordinary Chern
character. Define $\pi_{-\bullet}\K$ to be the complex K-theory ring
of the point. It is the $\zed$-graded ring $\mathbb{Z}[[u,u^{-1}]]$ of
Laurent polynomials freely generated by an element $u$ of degree ${\rm
  deg}(u)=2$, where $u^{-1}\in{\K^{-2}(\pt)}$ is called the Bott
element and is represented by the Hopf bundle over $\S^2$. One then
has a homomorphism
\begin{displaymath}
\ch\,:\,\K^{\bullet}(X) ~\longrightarrow~\H(X;\mathbb{R}\otimes
\pi_{-\bullet}\K)^{\bullet}
\end{displaymath}
which induces the natural $\zed$-graded ring isomorphism
\begin{displaymath}
\K^{\bullet}(X)\otimes{\mathbb{R}}~ \xrightarrow{\approx}~ 
\H(X;\mathbb{R}\otimes\pi_{-\bullet} \K)^{\bullet}
\end{displaymath}
for any finite CW-complex $X$. This statement is true
even if we tensor over $\mathbb{Q}$. The use here of the
K-theory of the point as the coefficient ring serves just as a
re-grading of the cohomology ring $\H^{\bullet}(X;\mathbb{R})$. For
example, it is easy to check that
\begin{displaymath}
\H(X;\mathbb{R}\otimes\pi_{-\bullet} \K)^{0}\cong{\H^{\rm
    even}(X;\mathbb{R})} \ .
\end{displaymath}
In particular, the Chern character tells us that K-theory and
cohomology are the same thing up to torsion.

It is natural now to ask if there exists such a morphism for
equivariant K-theory. One might naively think that the correct target
theory for the equivariant Chern character would naturally be
Borel cohomology. But the problem is much more subtle than
it first may seem. The crucial point is that while in the ordinary
cohomology of (finite) CW-complexes the building blocks are the
cohomology groups of a point, in the equivariant case they are
the cohomology groups of the orbits $G/H$ for all subgroups $H$ of
$G$, as we saw in Section~\ref{Gcomplex}. Any equivariant
cohomology theory $\E_G^\bullet$ on the category of finite
$G$-complexes is completely specified by its value on the
orbit spaces $G/H$. A \emph{localization
  theorem} due to Atiyah and Segal~\cite{Atiyah1969} tells us that the
Borel cohomology of a $G$-space $X$ is isomorphic to its equivariant
K-theory localized at the augmentation ideal in the representation
ring $R(G)$ consisting of all elements whose characters vanish at the
identity $e$ in $G$ (regarding $\K^\bullet_G(X)$ as a module over
$R(G)$). Localizing at a prime ideal of $R(G)$ corresponds to
restricting $X$ to the set of fixed points of an associated conjugacy
class of cyclic subgroups of $G$. In this sense, Borel cohomology does
not take into account the ``contributions'' of the non-trivial
elements in $G$, and hence of the fixed points of the $G$-action.

There are several approaches to the equivariant Chern character (see
refs.~\cite{Atiyah1989,Slominska1976,Block1994,Freed2002a,Adem2003},
for example) which strongly depend on the types of groups involved
(discrete, continuous, \emph{etc.}) and on the ring one tensors with
($\real$, $\complex$, \emph{etc.}). As we are interested in finite
groups and real coefficients for our physical applications later on,
we will use the Chern character constructed in refs.~\cite{Luck2006}
and~\cite{Luck1998}. Thus we proceed to the more abstract, but
powerful and compact, definition of Bredon cohomology, which will turn
out to be the best suited equivariant cohomology theory for all of our
purposes.

\subsection{Bredon cohomology}\label{bredon}

Let $G$ be a discrete group. The \emph{orbit category}
$\ocat{G}$ of $G$ is defined as the category whose objects are
homogeneous spaces $G/H$, with $H<{G}$, and whose morphisms are
$G$-maps between them. From general considerations~\cite{Dieck1987} it
follows that a $G$-map between two homogeneous spaces $G/H$ and $G/K$
exists if and only if $H$ is conjugate to a subgroup of $K$, and hence
any such map is of the form
\beq
\big(g\,H ~\longmapsto~ g\,a\,K\big)
\label{MapGH}\eeq
for some $a\in{G}$ such that $a^{-1}\,H\,a<{K}$. If
$\mathfrak{F}$ is any family of subgroups of $G$ then there is a
subcategory $\ocat{G,\mathfrak{F}}$ with objects $G/H$ for
$H\in\mathfrak{F}$. A simple example is provided by the cyclic groups
$G=\mathbb{Z}_{p}$ with $p$ prime, for which the orbit category has
just two objects, $G/e=G$ and $G/G=\pt$.

If $\cat{Ab}$ denotes the category of abelian groups, then a
\emph{coefficient system} is a functor
\begin{displaymath}
\underline{F}\,:\,\ocat{G}^{\text{op}}~\longrightarrow~\cat{Ab}
\end{displaymath}
where $\ocat{G}^{\rm op}$ denotes the dual category to
$\ocat{G}$. With such a functor and any $G$-complex $X$,\footnote{When
  $G$ is an infinite discrete group, one should restrict to
  \emph{proper} $G$-complexes, \emph{i.e.}, with finite stabilizer for
  any point of $X$. Some further minor assumptions are needed when $G$
  is a Lie group.} one can define for each $n\in\mathbb{Z}$ the
group
\beq
C^{n}_{G}(X,\,\underline{F}\,):=\text{Hom}_{\ocat{G}}\big(\,
\underline{C}\,_{n}(X)\,,\,\underline{F}\,\big)
\label{CGnXF}\eeq
where $\underline{C}\,_{n}(X):\ocat{G}^{\rm op}\to\cat{Ab}$ is the
projective functor defined by
$$\underline{C}\,_{n}(X)(G/H):=C_{n}\big(X^{H}\big) \ , $$ the cellular
homology of the fixed point complex 
\beq
X^{H}:=\big\{x\in{X}~\big|~h\cdot{x}=x \quad 
\forall{h\in{H}}\big\} \ .
\label{XHdef}\eeq
In eq.~(\ref{CGnXF}), $\Hom_{\ocat{G}}(-,-)$ denotes the group of
natural transformations between two contravariant functors, with the
group structure inherited by the images of the functors in
$\cat{Ab}$. The functoriality property of
$\underline{C}\,_{n}(X)(G/H)$ is the natural one induced by the
identification $X^{H}\cong\text{Map}_{G}(G/H,X)$. Indeed, the two maps
\begin{eqnarray*}
X^{H}~\longrightarrow~\text{Map}_{G}(G/H,X) & , & \qquad
x~\longmapsto~f_{x}\big([g\,H]\big)=g\cdot{x} \ , \\[4pt]
\text{Map}_{G}(G/H,X)~\longrightarrow~X^{H} & , & \qquad
f~\longmapsto~{f(H)}
\end{eqnarray*}
are easily seen to be inverse to each other, and the desired
homeomorphism is obtained by giving the space
$\text{Map}_{G}(G/H,X)$ the \emph{compact-open} topology. In
particular, a $G$-map (\ref{MapGH}) induces a cellular map $X^K\to
X^H$, $x\mapsto a\cdot x$.

These groups can be expressed in terms of the $G$-complex structure of
$X$. If the $n$-skeleton $X_{n}$ is obtained by attaching equivariant
cells as in eq.~(\ref{Xnattach}) with $K_j$ the stabilizer of an
$n$-cell of $X$, then the cellular chain complex $C_\bullet(X)$
consists of $G$-modules $C_n(X)=\bigoplus_{j\in J_n}\,\zed[G/K_j]$ and
hence
\begin{displaymath}
\underline{C}^{}\,_{n}(X)(G/H)\cong\bigoplus_{j\in J_n}\,\mathbb{Z}\big[
\text{Mor}_{\ocat{G}}(G/H,G/K_j)\big] \ .
\end{displaymath}
For each $n\geq0$, the group $C_G^n(X\,,\,\underline{F}\,)$ is the
direct limit functor over all $n$-cells of orbit type $G/K_j$ in $X$
of the groups $\underline{F}\,(G/K_j)$. This follows by restricting
eq.~(\ref{CGnXF}) to the full subcategory $\ocat{G,\mathfrak{F}(X)}$,
with $\mathfrak{F}(X)$ the family of subgroups of $G$ which occur as
stabilizers of the $G$-action on $X$~\cite{Mislin2003}.

The $\zed$-graded group
$C^{\bullet}_{G}(X,\,\underline{F}\,)=\bigoplus_{n\in\mathbb{Z}}\,
C^{n}_{G}(X,\,\underline{F}\,)$ inherits a coboundary operator
$\delta$, and hence the structure of a cochain complex,
from the boundary operator on cellular chains. To a natural
transformation $f:\underline{C}\,_{n}(X)\to{\underline{F}\,}$, one
associates the natural transformation $\delta{f}$ defined by
\begin{eqnarray*}
\delta{f}(G/H)\,:\,C_{n}\big(X^{H}\big)&\longrightarrow&
{\underline{F}\,(G/H)}\\
\sigma&\longmapsto&f(G/H)(\partial\sigma)
\end{eqnarray*}
for $\sigma\in{C_{n-1}(X^{H})}$, with naturality induced from that of
the cellular boundary operator $\partial$. Then the \emph{Bredon
  cohomology} of $X$ with coefficient system $\underline{F}$ is
defined as
\begin{displaymath}
\H^{\bullet}_{G}(X;\,\underline{F}\,):=\text{H}\big(C^{\bullet}_{G}(X,\,
\underline{F}\,)\,,\,\delta\big) \ .
\end{displaymath}
This defines a $G$-cohomology theory. See~ref.~\cite{Luck2002} for the
proof that $\H^{\bullet}_{G}(X;\,\underline{F}\,)$ is an equivariant
cohomology theory, \emph{i.e.}, for the definition of the induction
structure. One can also define cohomology groups by restricting the
functors in eq.~(\ref{CGnXF}) to a subcategory
$\ocat{G,\mathfrak{F}}$. The definition of Bredon cohomology is
independent of $\mathfrak{F}$ as long as $\mathfrak{F}$ contains the
family $\mathfrak{F}(X)$ of stabilizers~\cite{Mislin2003}. This fact
is useful in explicit calculations. In particular, by taking
$\mathfrak{F}=H$ to consist of a single subgroup, one shows that the
Bredon cohomology of $G$-homogeneous spaces is given by
\beq
\H^\bullet_{G}(G/H;\,\underline{F}\,)\=
\H^0_{G}(G/H;\,\underline{F}\,)\=\underline{F}\,(G/H) \ .
\label{BredonGH}\eeq

\begin{example}[\emph{Trivial group}] 
When $G={e}$ is the trivial group, \emph{i.e.}, in the non-equivariant
case, the functors $\underline{C}\,_{n}(X)$ and $\underline{F}$ can be
identified with the abelian groups
$C_{n}(X)=\underline{C}\,_{n}(X)(e)$ and $F=\underline{F}\,(e)$. Then
$$C^{n}_{e}(X,F)=C^{n}(X,F)$$ and one has
$\H^{n}_{e}(X;\,\underline{F}\,)=\text{H}\left(C^{n}(X,F),\delta\right)$,
\emph{i.e.}, the ordinary $n$-th cohomology group of $X$ with
coefficients in $F$.
\label{Bredontrivgpex}\end{example}

\begin{example}[\emph{Free action}]
If the $G$-action on $X$ is \emph{free}, then all
stabilizers $K_j$ are trivial and $X^H=\emptyset$ for every $H\leq G$,
$H\neq e$. In this case one may take $\mathfrak{F}=e$ to compute the
cochain complex
$$
C_G^\bullet(X,\,\underline{F}\,)\cong\Hom_G\big(C_\bullet(X)\,,\,
\underline{F}\,(G/e)\big)
$$
and so the Bredon cohomology $\H^{\bullet}_{G}(X;\,\underline{F}\,)$
coincides with the equivariant cohomology
$$\H_G^\bullet\big(X\,;\,\underline{F}\,(G/e)\big)$$ of $X$ with
coefficients in the $G$-module
$\underline{F}\,(G/e)=\underline{F}\,(G)$. In the case of the constant
functor $\underline{F}=\underline{\zed}$, with
$\underline{\zed}\,(G/H)=\zed$ for every $H\leq G$ and the value on
morphisms in $\ocat{G}^{\rm op}$ given by the identity homomorphism of
$\zed$, this group reduces to the ordinary cohomology
$\H^\bullet(X/G;\zed)$.
\label{Bredonfreeactionex}\end{example}

\begin{example}[\emph{Trivial action}]
If the $G$-action on $X$ is \emph{trivial}, then the collection of
isotropy groups $K_j$ for the $G$-action is the set of all subgroups
of $G$ and $X^H=X$ for all $H\leq G$. In this case the functor
$\underline{C}\,_n(X)$ can be decomposed into a sum over $n$-cells of
projective functors $\underline{P}\,_{K_j}$ with
$K_{j}=G$~\cite{Mislin2003}, and so one has
$$
\Hom_{\ocat{G}}\big(\,\underline{C}\,_n(X)\,,\,\underline{F}\,\big)\cong
\Hom\big(C_n(X)\,,\,\lim_{\longleftarrow}{}_{\ocat{G}^{\rm op}}\,
\underline{F}\,(G/H)\big)
$$
where the inverse limit functor is taken over the opposite category
$\ocat{G}^{\rm op}$. It follows that the Bredon cohomology
$$
\H^\bullet_G(X;\,\underline{F}\,)=\H^\bullet\big(X\,;\,
\underline{F}\,(G/G)\big)
$$
is the ordinary cohomology of $X$ with coefficients in the abelian
group $\underline{F}\,(G/G)=\underline{F}\,(\pt)$.
\label{Bredontrivactionex}\end{example}

\subsection{Representation ring functors}

In what follows we will specialize the coefficient system for Bredon
cohomology to the
\emph{representation ring functor} $\underline{F}=\underline{R}(-)$
defined on the orbit category $\ocat{G}$ by sending the left coset
$G/H$ to $R(H)$, the complex representation ring of the group
$H$. A morphism (\ref{MapGH}) is sent to the
homomorphism $R(K) \to R(H)$ given by first restricting the
representation from $K$ to the subgroup conjugate to $H$, and then
conjugating by $a$. Since $\underline{R}(-)$ is a functor to rings,
the Bredon cohomology $\H^{\bullet}_{G}(X;\underline{R}(-))$ naturally
has a ring structure. Note that
\beq
R(H)~\cong~{\K^{0}_{G}(G/H)}\=\K^{\bullet}_{G}(G/H) \ ,
\label{RHGHKiso}\eeq
which follows from the induction structure of Example~\ref{EqKex} with
$X=\pt$ and $\alpha$ the subgroup inclusion $H\hookrightarrow G$. 
By eq.~(\ref{BredonGH}) the group (\ref{RHGHKiso}) also coincides with
the Bredon cohomology group $\H^{\bullet}_{G}(G/H;\underline{R}(-))$,
which is already an indication that Bredon cohomology is a better
relative of equivariant K-theory than Borel cohomology. Indeed, using
the induction structure of Example~\ref{Borelex} one shows that the
Borel cohomology
$$
\H_G^\bullet(G/H)=\H^\bullet(BH)
$$
coincides with the cohomology of the classifying space $BH=EH/H$,
which computes the group cohomology of $H$ and is typically
infinite-dimensional (even for finite groups $H$). In this paper we
will show that the Bredon cohomology
$\H^{\bullet}_{G}(X;\underline{R}(-))$ gives a more precise
realization of the stringy orbifold cohomology of $X$ in the context
of \emph{open} string theory.

In the construction of the equivariant Chern character in
Section~\ref{section2} below, it will be important to represent
the rational Bredon cohomology
$\H^{\bullet}_{G}(X;\mathbb{Q}\otimes{\underline{R}(-)})$ as a certain
group of homomorphisms of functors, similarly to the cochain groups
(\ref{CGnXF}). For this, we introduce another
category $\subc{G}$. The objects of $\subc{G}$ are the
subgroups of $G$,\footnote{If $G$ is infinite then one should restrict
  to finite subgroups of $G$.} and the morphisms are given by
\begin{displaymath}
\text{Mor}_{\subc{H,K}}:=\left\{f:H\to{K}~\big|~\exists\:{g\in{G} \ ,
    \ g\,H\,g^{-1}\leq{K}}\ , \ f={\rm
    Ad}_g\right\}\,\big/\,\text{Inn}(K) \ .
\end{displaymath}
In particular, there is a functor $\ocat{G}\to{\subc{G}}$ which sends
the object $G/H$ to $H$ and the morphism (\ref{MapGH}) in $\ocat{G}$
to the homomorphism $(g\mapsto{a^{-1}\,g\,a})$ in $\subc{G}$. If
$a$ lies in the centralizer
\beq
{Z_{G}(H)}:=\big\{g\in G~\big|~g^{-1}\,H\,g=H\big\}
\label{ZGH}\eeq
 of $H$ in $G$, then the morphism (\ref{MapGH}) is sent to the
identity map. Any functor
$\underline{F}:\subc{G}^{\text{op}}\to\cat{Ab}$ can be naturally
regarded as a functor on $\ocat{G}^{\text{op}}$.

Define the quotient functors
$\underline{C}\,_{\bullet}^{\text{qt}}(X)\,,\,
\underline{\H}\,_{\bullet}^{\text{qt}}(X):\subc{G}^{\rm
  op}\to\cat{Ab}$ by
\begin{displaymath}
\underline{C}\,_{\bullet}^{\text{qt}}(X)(H)~:=~
C_{\bullet}\big(X^{H}/Z_{G}(H)\big) \qquad\text{and}\qquad
\underline{\H}\,_{\bullet}^{\text{qt}}(X)(H)~:=~\H_{\bullet}
\big(X^{H}/Z_{G}(H)\big) \ .
\end{displaymath}
For any functor $\underline{F}:\subc{G}^{\text{op}}\to\cat{Ab}$
one has
\begin{displaymath}
\text{Hom}\big(C_{\bullet}(X^{H}/Z_{G}(H))\,,\,\underline{F}\,(H)\big)
\cong\text{Hom}_{Z_{G}(H)}\big(C_{\bullet}(X^{H})\,,\,\underline{F}\,(H)
\big) \ .
\end{displaymath}
By observing that the centralizer (\ref{ZGH}) is precisely the group of
automorphisms of $G/H$ in the orbit category $\ocat{G}$ sent to the
identity map in the subgroup category $\subc{G}$, we finally
have
\beq
C_G^\bullet(X,\,\underline{F}\,)\=
\text{Hom}_{\ocat{G}}\big(\,\underline{C}\,_{\bullet}(X)\,,\,
\underline{F}\,\big)~\cong~\text{Hom}_{\subc{G}}\big(\,
\underline{C}\,_{\bullet}^{\text{qt}}(X)\,,\,\underline{F}\,\big) \ .
\label{HomOrSubG}\eeq
At this point one can apply eq.~(\ref{HomOrSubG}) to the rational
representation ring functor
$\underline{F}=\rat\otimes\underline{R}(-)$, which by construction can
be regarded as an injective functor $\subc{G}^{\rm op}\to\cat{Ab}$, to
prove the
\begin{lemma}[\cite{Luck1998}]\label{subcat}
For any finite group $G$ and any $G$-complex $X$, there exists an
isomorphism of rings
\begin{displaymath}
\Phi_{X}\,:\,\H^{\bullet}_{G}\big(X\,;\,\mathbb{Q}
\otimes{\underline{R}(-)}\big)~\xrightarrow{\approx}~{\Hom}_{{\sf
    Sub}(G)}\big(\,\underline{\H}\,_{\bullet}^{\rm qt}(X)\,,\,\mathbb{Q}
\otimes\underline{R}(-)\big) \ .
\end{displaymath}
\end{lemma}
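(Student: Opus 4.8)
The plan is to build $\Phi_X$ by combining the identification \eqref{HomOrSubG} with the standard chain-level computation of Bredon cohomology, exploiting the fact that $\rat\otimes\underline R(-)$ is an injective object in the functor category $\cat{Mod}$-$\subc{G}$. First I would recall why $\rat\otimes\underline R(-)$ is injective: by Artin's induction theorem the rationalized representation ring $\rat\otimes R(H)$ decomposes, compatibly with the morphisms in $\subc{G}$, into a product indexed by conjugacy classes of cyclic subgroups $C\leq H$ of pieces built from the $\rat$-vector spaces of class functions on the $N_G(C)/C$-sets of generators; equivalently one invokes the result of L\"uck that $\rat\otimes\underline R(-)$ is injective as a $\rat[\subc{G}]$-module (this is exactly the input from \cite{Luck1998}). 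Injectivity is what makes the $\Hom_{\subc{G}}(-,\rat\otimes\underline R(-))$ functor exact.

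Next I would run the computation on the chain level. By \eqref{HomOrSubG} the Bredon cochain complex with coefficients in $\rat\otimes\underline R(-)$ is canonically isomorphic to $\Hom_{\subc{G}}\big(\underline C_{\bullet}^{\rm qt}(X)\,,\,\rat\otimes\underline R(-)\big)$, where $\underline C_\bullet^{\rm qt}(X)(H)=C_\bullet(X^H/Z_G(H))$ carries the cellular differential. Applying the exact functor $\Hom_{\subc{G}}(-,\rat\otimes\underline R(-))$ to the short exact sequences of $\subc{G}$-modules $0\to B_\bullet\to Z_\bullet\to \underline\H_\bullet^{\rm qt}(X)\to 0$ and $0\to Z_\bullet\to \underline C_\bullet^{\rm qt}(X)\to B_{\bullet-1}\to 0$ (cycles, boundaries, homology of the quotient complexes, taken objectwise since over $\rat$ every object of $\subc{G}$-$\cat{Mod}$ is a submodule-of/quotient-of free modules and these sequences split objectwise) shows that cohomology commutes with $\Hom_{\subc{G}}(-,\rat\otimes\underline R(-))$. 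Concretely, exactness gives
\[
\H^{\bullet}\Big(\Hom_{\subc{G}}\big(\underline C_{\bullet}^{\rm qt}(X),\rat\otimes\underline R(-)\big)\Big)\ \cong\ \Hom_{\subc{G}}\big(\underline\H_{\bullet}^{\rm qt}(X),\rat\otimes\underline R(-)\big),
\]
and the left-hand side is $\H^\bullet_G(X;\rat\otimes\underline R(-))$ by \eqref{HomOrSubG}. This produces the map $\Phi_X$; it is independent of the chosen $G$-CW structure because the identifications $X^H/Z_G(H)$ and their homology are, up to canonical isomorphism, and because a cellular $G$-map of $G$-complexes induces a chain homotopy equivalence of the quotient complexes $\underline C_\bullet^{\rm qt}$.

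Finally I would check that $\Phi_X$ is a ring homomorphism. The product on the Bredon side is the cup product induced by a diagonal approximation on $\underline C_\bullet^{\rm qt}(X)$ together with the ring structure of $\underline R(-)$; on the target side one uses the induced comultiplication on $\underline\H_\bullet^{\rm qt}(X)$ and the same coefficient ring structure. Since the diagonal approximation is unique up to $\subc{G}$-chain homotopy, passing to homology is multiplicative, so $\Phi_X$ respects products, and it is unital because it sends the class of the augmentation cocycle to the constant natural transformation $1\in R(H)$. The main obstacle is the injectivity of $\rat\otimes\underline R(-)$ as a module over $\subc{G}$ — that is the nontrivial representation-theoretic input (Artin induction) on which the whole argument pivots; everything else is homological bookkeeping, and I would simply cite \cite{Luck1998} for it. A secondary technical point to handle carefully is the passage \eqref{HomOrSubG} from $\ocat{G}$-modules to $\subc{G}$-modules, i.e.\ verifying that the $Z_G(H)$-coinvariants appearing in $\underline C_\bullet^{\rm qt}$ match the $\ocat{G}$-to-$\subc{G}$ reduction of the Bredon cochain functor; this was essentially established already in the text preceding the lemma and I would only need to note that it is compatible with differentials and products.
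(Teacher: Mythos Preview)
Your proposal is correct and follows exactly the approach the paper indicates: use the identification \eqref{HomOrSubG} to rewrite the Bredon cochain complex over $\subc{G}$, then invoke the injectivity of $\rat\otimes\underline{R}(-)$ as a $\subc{G}$-module (the key input from \cite{Luck1998}) so that $\Hom_{\subc{G}}(-,\rat\otimes\underline{R}(-))$ is exact and commutes with taking (co)homology. The paper's own treatment is the single sentence preceding the lemma, which names precisely these two ingredients; your write-up simply unpacks them and adds the check of multiplicativity.
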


\subsection{Chern character in equivariant K-theory}\label{section2}

Before spelling out the definition of the equivariant Chern character,
we recall some basic properties of the equivariant K-theory of a
$G$-complex $X$. Let $H$ be a subgroup of $G$, and consider the fixed
point subspace of $X$ defined by eq.~(\ref{XHdef}). The action of $G$
does not preserve $X^{H}$, but the action of the normalizer $N_G(H)$ of
$H$ in $G$ does. If we denote with $i:X^{H}\hookrightarrow{X}$ the
inclusion of $X^{H}$ as a subspace of $X$, and with
$\alpha:N_G(H)\hookrightarrow{G}$ the inclusion of $N_G(H)$ as a
subgroup of $G$, then we naturally have the equality
\begin{displaymath}
i(n\cdot{x})=\alpha(n)\cdot{i(x)}
\end{displaymath}
for all ${n\in{N_G(H)}}$ and $x\in X^H$. It follows that the induced
homomorphism on equivariant K-theory is a map~\cite{Segal1968}
\begin{displaymath}
i^{*}\,:\,\K^\bullet_{G}(X)~\longrightarrow~{\K^\bullet_{N_G(H)}\big(X^{H}
\big)}
\end{displaymath}
which is called a \emph{restriction morphism}.

We also  need a somewhat less known property~\cite{Luck1998}. Let
$N\lhd\,{G}$ be a finite normal subgroup, and let $\text{Rep}(N)$ be the
category of (isomorphism classes of) irreducible complex representations of
$N$. Let $X$ be a (proper) $G/N$-complex, and let $G$ act on $X$ via
the projection map $G\to{G/N}$. Then for any complex $G$-vector bundle
$E\to{X}$ and any representation $V\in{\text{Rep}(N)}$, define
$\text{Hom}_{N}(V,E)$ as the vector bundle over $X$ with total space
\begin{displaymath}
\text{Hom}_{N}(V,E):=\bigcup_{x\in{X}}\,\text{Hom}_{N}(V,E_{x})
\end{displaymath}
where $N$ acts on the fibres of $E$ because of the action of $G$ via
the projection map. Now if $H\leq{G}$ is a subgroup which commutes
with $N$, $[H,N]=e$, then one can induce an $H$-vector bundle from
$\text{Hom}_{N}(V,E)$ by defining $(h\cdot f)(v)=h\cdot{f(v)}$, $v\in
V$ for any $h\in{H}$ and any $f\in\text{Hom}_{N}(V,E)$ (remembering
that $G$ acts on $E$). Hence there is a homomorphism of rings
\begin{displaymath}
\Psi\,:\,\K^\bullet_{G}(X)~\longrightarrow~{\K^\bullet_{H}(X)\otimes{R(N)}}
\end{displaymath}
defined on $G$-vector bundles by
\beq
\Psi\big([E]\big):=\sum_{V\in{{\rm Rep}(N)}}\,\big[\text{Hom}_{N}(V,E)
\big]\otimes[V] \ .
\label{PsiEdef}\eeq
This homomorphism satisfies some naturality properties which are
described in detail in~ref.~\cite{Luck1998}. Note that the sum
(\ref{PsiEdef}) is \emph{finite}, since $N$ is a finite subgroup.

We are now ready to construct the equivariant Chern character as a
homomorphism
\begin{displaymath}
\ch_{X}\,:\,\K^{0,1}_{G}(X)~\longrightarrow~{\H^{\rm even,odd}_{G}
\big(X\,;\,\mathbb{Q}\otimes\underline{R}(-)\big)}
\end{displaymath}
for any finite proper $G$-complex $X$. The strategy used
in~ref.~\cite{Luck1998} is to construct $\zed_2$-graded
homomorphisms
\beq
\ch^{H}_{X}\,:\,\K^{\bullet}_{G}(X)~\longrightarrow~
{\text{Hom}\big(\H_{\bullet}(X^{H}/Z_{G}(H))\,,\,\mathbb{Q}\otimes
{R}(H)\big)}
\label{chXH}\eeq
for any finite subgroup $H$, and then \emph{glue} them together as $H$
varies through the finite subgroups of $G$. To define the homomorphism
(\ref{chXH}), we first compose the ring homomorphisms
\begin{displaymath}
\K^{\bullet}_{G}(X)~\xrightarrow{i^{*}}~\K_{N_G(H)}^\bullet\big(X^{H}
\big)~\xrightarrow{\Psi}~\K^{\bullet}_{Z_{G}(H)}\big(X^{H}\big)
\otimes{R(H)}~\xrightarrow{\pi^{*}_{2}\otimes\Id}~\K^{\bullet}_{Z_{G}(H)}
\big(EG\times{X^{H}}\big)\otimes{R(H)}
\end{displaymath}
where $\pi_2:EG\times X^H\to X^H$ is the projection onto the second
factor. By using the induction structure of Example~\ref{EqKex}, one
then has
\bea
\K^{\bullet}_{Z_{G}(H)}\big(EG\times{X^{H}}\big)\otimes{R(H)}&
\xrightarrow{\approx}&\K^{\bullet}\big(EG\times_{Z_{G}(H)}X^{H}
\big)\otimes R(H) \nonumber\\ && \qquad ~
\xrightarrow{\ch\otimes\text{id}}~
\H\big(EG\times_{Z_{G}(H)}X^{H}\,;\,\mathbb{Q}\otimes\pi_{-\bullet}\K
\big)^{\bullet}
\otimes{R(H)} \nonumber
\eea
where $\ch$ is the ordinary Chern character. One finally has
\bea
\H^{\bullet}\big(EG\times_{Z_{G}(H)}X^{H}\,;\,\mathbb{Q}\big)
\otimes{R(H)}&\xrightarrow{\approx}&\H^{\bullet}\big(X^{H}/Z_{G}(H)\,;\,
\mathbb{Q}\big)\otimes{R(H)} \nonumber\\ && \qquad ~\cong~
\text{Hom}\big(\H_{\bullet}(X^{H}/Z_{G}(H))\,,\,
\mathbb{Q}\otimes{R(H)}\big) \ , \nonumber
\eea
where the first isomorphism follows from the Leray spectral sequence
by observing that the fibres of the projection
\begin{displaymath}
EG\times_{Z_{G}(H)}X^{H}~\longrightarrow~{X^{H}\,\big/\,Z_{G}(H)}
\end{displaymath}
are all classifying spaces of finite groups, having trival reduced
cohomology with $\rat$-coefficients and are therefore $\rat$-acyclic.

The equivariant Chern character is now defined as\footnote{If $G$ is
  infinite then the direct sum in eq.~(\ref{chXprod}) is understood as
  the inverse limit functor over the dual subgroup category
  $\subc{G}^{\rm op}$.}
\beq
\ch_{X}=\bigoplus_{H\leq G}\,\ch_X^{H} \ .
\label{chXprod}\eeq
By using the various naturality properties of the
homomorphism~(\ref{PsiEdef})~\cite{Luck1998}, one sees that $\ch_{X}$
takes values in
$\text{Hom}_{\subc{G}}\big(\underline{\H}\,_{\bullet}^{\text{qt}}(X)\,,\,
\mathbb{Q}\otimes\underline{R}(-)\big)$, and by Lemma~\ref{subcat} it
is thus a $\zed_2$-graded map
\begin{displaymath}
\ch_{X}\,:\,\K^{\bullet}_{G}(X)~\longrightarrow~\textrm{Hom}_{\subc{G}}
\big(\,\underline{\H}\,_{\bullet}^{\textrm{qt}}(X)\,,\,\mathbb{Q}\otimes
\underline{R}(-)\big)\cong{\H^{\bullet}_{G}\big(X\,;\,
\mathbb{Q}\otimes{\underline{R}(-)}\big)} \ .
\end{displaymath}
This map is well-defined as a ring homomorphism because all maps
involved above are homomorphisms of rings. As with the definition of
Bredon cohomology, the sum (\ref{chXprod}) may be restricted to any
family of subgroups of $G$ containing the set of stabilizers
$\mathfrak{F}(X)$.

To conclude, we have to prove that this map becomes an isomorphism
upon tensoring over $\rat$. For this, one proves that the morphism
$\ch_X$ in eq.~(\ref{chXprod}) is an isomorphism on homogeneous spaces
$G/H$, with $H$ a finite subgroup of $G$, and then uses induction on
the number of orbit types of cells in $X$ along with the
Mayer-Vietoris sequences for the pushout squares induced by the
attaching $G$-maps (\ref{Gattach}). The isomorphism on $G/H$ is a
consequence of the isomorphisms (\ref{BredonGH}) and
(\ref{RHGHKiso}). The details may be found
in~ref.~\cite{Luck1998}. Let $\underline{\pi_{-\bullet}\K}\,_G(-)$ be
the functor on $\ocat{G}$ defined by
$G/H\mapsto\K^\bullet_G(G/H)$. Then one has the following
\begin{theorem}
For any finite proper $G$-complex $X$, the Chern character $\ch_X$
extends to a natural $\zed$-graded isomorphism of rings
\begin{displaymath}
{\ch_{X}}\otimes\mathbb{Q}\,:\,{\K^{\bullet}_{G}(X)}\otimes
\mathbb{Q}~\xrightarrow{\approx}~\H_{G}\big(X\,;\,\mathbb{Q}
\otimes{\underline{\pi_{-\bullet}\K}\,_G(-)}\big)^{\bullet} \ .
\end{displaymath}
\label{eqChernthm}\end{theorem}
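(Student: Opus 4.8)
The plan is to upgrade the already-constructed $\zed_2$-graded Chern character $\ch_X$ to a genuine $\zed$-graded isomorphism, following the strategy outlined just before the theorem, and then to show the target group is $\H_G\big(X\,;\,\mathbb{Q}\otimes\underline{\pi_{-\bullet}\K}\,_G(-)\big)^\bullet$. First I would address the regrading. Exactly as in the non-equivariant case discussed in Section~\ref{CherntopK}, tensoring the coefficient system $\underline{R}(-)$ with the coefficient ring $\pi_{-\bullet}\K=\zed[[u,u^{-1}]]$ of the point simply spreads the $\zed_2$-graded data out into a $\zed$-graded one: for each finite subgroup $H\leq G$ one has $\K^\bullet_G(G/H)\cong R(H)\otimes\pi_{-\bullet}\K$ by \eqref{RHGHKiso} together with Bott periodicity, so the functor $\underline{\pi_{-\bullet}\K}\,_G(-)$ on $\ocat{G}$ is precisely $\underline{R}(-)\otimes\pi_{-\bullet}\K$. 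Consequently, by \eqref{BredonGH},
\[
\H_G\big(X\,;\,\mathbb{Q}\otimes\underline{\pi_{-\bullet}\K}\,_G(-)\big)^\bullet
\cong\H^\bullet_G\big(X\,;\,\mathbb{Q}\otimes\underline{R}(-)\big)\otimes\pi_{-\bullet}\K \ ,
\]
so the content of the theorem is that the $\zed_2$-graded map $\ch_X$ already constructed, reinterpreted with this regrading, is an isomorphism after $\otimes\,\mathbb{Q}$.

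The core of the argument is the induction on orbit types of cells already sketched in the paragraph preceding the statement, which I would simply carry out in full. The base case is $X=G/H$ for a finite subgroup $H\leq G$: here the combination of the isomorphism $R(H)\cong\K^\bullet_G(G/H)$ in \eqref{RHGHKiso} and the identification $\H^\bullet_G(G/H;\,\underline{R}(-))=\underline{R}(G/H)=R(H)$ in \eqref{BredonGH} shows that $\ch_{G/H}\otimes\mathbb{Q}$ is, up to the regrading, the identity on $R(H)\otimes\mathbb{Q}$; one checks directly from the construction in Section~\ref{section2} (the composite $i^*$, $\Psi$, $\pi_2^*\otimes\Id$, $\ch\otimes\Id$) that on $G/H$ this composite does collapse to the canonical isomorphism, using that $(G/H)^K$ is either empty or a discrete set of points and that the classifying-space fibres are $\rat$-acyclic. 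For the inductive step, suppose the claim holds for $X_{n-1}$; the pushout square \eqref{Xnattach} attaching the $n$-cells $\coprod_j \B^n_j\times G/K_j$ along the $G$-maps \eqref{Gattach} induces, after applying $\K^\bullet_G(-)\otimes\mathbb{Q}$ and $\H^\bullet_G(-;\,\mathbb{Q}\otimes\underline{R}(-))$, two Mayer-Vietoris long exact sequences — using the long-exact-sequence and excision axioms for the $G$-cohomology theories, valid by the discussion in Section~\ref{EqCohTh} and Section~\ref{bredon} — and $\ch$ provides a map of these sequences by the naturality properties of $\Psi$ established in~\cite{Luck1998}. Since $\ch\otimes\mathbb{Q}$ is an isomorphism on $X_{n-1}$, on the disjoint union $\coprod_j\S^{n-1}_j\times G/K_j$ and on $\coprod_j\B^n_j\times G/K_j\simeq\coprod_j G/K_j$ (the latter two by $G$-homotopy invariance plus the base case applied to each orbit type $G/K_j$), the five lemma forces it to be an isomorphism on $X_n$. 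Induction over the finitely many skeleta finishes the case of a finite $X$.

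It remains to verify that the map is a $\zed$-graded \emph{ring} homomorphism and that it is natural. Multiplicativity is already noted before the statement — every constituent ($i^*$, $\Psi$ from \eqref{PsiEdef}, $\pi_2^*\otimes\Id$, the ordinary $\ch$, and the restriction/limit isomorphisms of Lemma~\ref{subcat}) is a ring map, and the ring structure on $\H^\bullet_G(X;\,\underline{R}(-))$ comes from $\underline{R}(-)$ being a functor to rings — so it only needs to be recorded that the $\zed$-grading is compatible with the cup product, which is immediate since on the point $\pi_{-\bullet}\K$ is a graded ring and the regrading isomorphism above is one of graded rings. Naturality in $X$ follows from the naturality of each step in the construction, again as in~\cite{Luck1998}. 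The main obstacle I anticipate is the bookkeeping in the inductive step: one must be careful that the Mayer-Vietoris sequences for Bredon cohomology with the functorial coefficients $\mathbb{Q}\otimes\underline{R}(-)$ really do fit together with those for equivariant K-theory under $\ch$, i.e.\ that the connecting homomorphisms match — this is exactly where compatibility of $\ch_X$ with the coboundary maps $\delta^n_G$ (part (a) of the induction-structure axioms) and the naturality of $\Psi$ are essential, and verifying this carefully is the technical heart of the proof; the rest is the regrading formality and the base-case computation on orbits.
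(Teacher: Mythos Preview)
Your proposal is correct and follows essentially the same route as the paper: verify the isomorphism on orbits $G/H$ via \eqref{BredonGH} and \eqref{RHGHKiso}, then propagate it to all finite $G$-complexes by induction along the pushout squares \eqref{Xnattach} using Mayer-Vietoris and the five lemma, with naturality of $\ch_X$ (and hence compatibility with connecting maps) supplied by \cite{Luck1998}. The only cosmetic difference is that you organize the induction skeletally while the paper phrases it as induction on the number of orbit types of cells, and you make the $\zed$-regrading via $\pi_{-\bullet}\K$ explicit where the paper leaves it implicit.
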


\newsection{D-branes and equivariant K-cycles\label{DKhom}}

In this section we will make some remarks concerning the topological
classification of D-branes and their charges on global orbifolds of
Type~II superstring theory with vanishing $H$-flux. Let $X$ be a
smooth manifold and $G$ a (finite) group acting by diffeomorphisms of
$X$. Ramond-Ramond charges on the global orbifold $[X/G]$ are
classified by the equivariant K-theory $\K_G^\bullet(X)$ as defined in
Example~\ref{EqKex}~\cite{GarciaCompean1998,Olsen2000,Witten1998}.
Dually, the equivariant K-homology $\K_\bullet^G(X)$ leads to an
elegant description of \emph{fractional D-branes} pinned at the
orbifold singularities in terms of equivariant K-cycles. In the
following we will frequently refer to Appendix~B for
detailed definitions and technical aspects of equivariant K-homology,
focusing instead here on some of the more qualitative aspects of
D-branes on orbifolds in this language. In the remainder of this paper
we will assume for definiteness that $G$ is a finite group.

\subsection{Fractional D-branes\label{FracDbrane}}

As in the non-equivariant case
$G=e$~\cite{Reis2005,Reis2006,Szabo2002}, a very natural description
of D-branes in the orbifold space, which captures the inherent
geometrical picture of D-brane states involving wrapped cycles in
spacetime, is provided by the topological realization of the equivariant
K-homology groups $\K_\bullet^G(X)$. The cycles for this homology
theory, called $G$-equivariant K-cycles, live in an additive category
$\cat{D}^G(X)$ whose objects are triples $(W,E,f)$ where $W$ is a
$G$-spin$^c$ manifold without boundary, $E$ is a $G$-vector bundle
over $W$, and
\beq\label{DGmap}
f\,:\,W~\longrightarrow~ X
\eeq
is a $G$-map. The group $\K_\bullet^G(X)$ is the quotient of this
category by the equivalence relation generated by bordism, direct sum,
and vector bundle modification, as detailed in Appendix~B. Note that
$W$ need not be a submanifold of spacetime. However, since $X$ is a
manifold, we can restrict the bordism equivalence relation to
\emph{differential bordism}~\cite{Reis2005} and assume that the map
(\ref{DGmap}) is a differentiable $G$-map in equivariant K-cycles
$(W,E,f)\in\cat{D}^G(X)$. In this way the category $\cat{D}^G(X)$
extends the standard K-theory classification to include branes
supported on non-representable cycles in spacetime. This definition of
equivariant K-homology thus gives a concrete geometric model for the
topological classification of D-branes $(W,E,f)$ in a global orbifold
$[X/G]$ which captures the physical constructions of orbifold D-branes
as $G$-invariant states of branes on the covering space $X$. In the
subsequent sections we will study the pairing of Ramond-Ramond fields
with these D-branes.

Consider a D-brane localized at a generic point in the orbifold
$[X/G]$ with the action of the \emph{regular} representation of $G$ on
the fibres of its Chan-Paton gauge bundle, \emph{i.e.}, the natural
action of the group algebra $\complex[G]$ as bounded linear operators
$\ell^2(G)\to\ell^2(G)$. It corresponds to a $G$-orbit of such branes
on the leaves $X^g=\{x\in X~|~g\cdot x=x\}$, $g\in G$ of the covering
space $X$. At a $G$-fixed point, this brane splits up into a set of
\emph{fractional branes} according to the decomposition of the
representation of $G$ on the fibres of the Chan-Paton bundle into
irreducible $G$-modules. Stable fractional D-branes correspond to
bound states of branes wrapping various collapsed cycles at the fixed
points. They are thus stuck at the orbifold points and provide the
open string analogs of ``twisted sectors''.

To formulate this physical construction in the language of equivariant
K-cycles $(W,E, f)$, let $G^\vee$ denote the set of conjugacy classes
$[g]$ of elements $g\in G$. It can be regarded as a set of
representatives for the isomorphism classes $\pi_0{\rm Rep}(G)$, where
${\rm Rep}(G)$ is the additive category of irreducible complex
representations of $G$ which coincides with the category of D-brane
boundary conditions at the orbifold points. There is a natural
subcategory $\cat{D}_{\rm frac}^G(X)$ of $\cat{D}^G(X)$ consisting of
triples $(W,E,f)$ for which $W$ is a $G$-fixed space, \emph{i.e.}, for
which
\beq
W^{g}= W
\label{WgW}\eeq
for all $g\in G$. By $G$-equivariance this implies $f(W)^g=f(W)$ for
all $g\in G$, and so the image of the brane worldvolume lies in the
subspace
$$
f(W)~\subset~\bigcap_{g\in G}\,X^g \ .
$$
This is precisely the set of $G$-fixed points of $X$, and so the
objects $(W,E,f)$ of the category $\cat{D}_{\rm frac}^G(X)$ are
naturally pinned to the orbifold points. We call $\cat{D}_{\rm
  frac}^G(X)$ the category of ``maximally fractional D-branes''.

In this case, an application of Schur's lemma shows that the
Chan-Paton bundle admits an isotopical decomposition and there is a
canonical isomorphism of $G$-bundles
\beq
E~\cong~\bigoplus_{[g]\in G^\vee}\,E_{[g]}\otimes\id_{[g]} \qquad
\mbox{with} \quad E_{[g]}\=\Hom_G\big(\id_{[g]}\,,\,E\big) \ ,
\label{Eisodecomp}\eeq
where $E_{[g]}$ is a complex vector bundle with trivial $G$-action and
$\id_{[g]}$ is the $G$-bundle $W\times V_{[g]}$ with
$\gamma:G\to\End(V_{[g]})$ the irreducible representation
corresponding to the conjugacy class $[g]\in G^\vee$. This isotopical
decomposition defines the action of $G$ on the Chan-Paton factors of
the D-brane, and it implies the well-known isomorphism
\beq
\K_G^\bullet(W)\cong R(G)\otimes\K^\bullet(W)
\label{KGWtrivial}\eeq
for $G$-fixed spaces $W$~\cite{Segal1968}. This is a special case of
the homomorphism $\Psi$ defined in eq.~(\ref{PsiEdef}). From the
direct sum relation in equivariant K-homology it follows that a
D-brane, represented by a K-cycle $(W,E,f)$, splits at an orbifold
point into a sum over irreducible fractional branes represented by the
K-cycles $(W,E_{[g]}\otimes\id_{[g]},f)$, $[g]\in G^\vee$.

It is important to realize that the full category $\cat{D}^G(X)$
contains much more information, and in particular the fractional
D-branes will not generically form a spanning set of K-cycles for the
group $\K_\bullet^G(X)$ (except in some specific examples). However,
it follows from the bordism relation in equivariant K-homology that
any two $G$-equivariant K-cycles $(W_i,E_i,f_i)$, $i=0,1$ which are
bordant along the same $G$-orbit determine the same element in
$\K_\bullet^G(X)$. This is expected since a purely topological classification
such as equivariant K-homology cannot capture the positional moduli
associated with the regular D-branes in $X/G$.

A related way to understand the role of fractional branes is through
the connection between geometric K-homology and bordism
theory~\cite{Reis2005}. Let $\MSpin_\bullet^c(X)$ be the
\emph{ordinary} spin$^c$ bordism group of $X$, which forgets about the
$G$-action and consists of \spinc bordism classes of pairs
$(W,f)$. Then there is a map
\bea
\MSpin^c_\bullet(X)\otimes_{\MSpin^c_\bullet(\pt)}\,{\rm Rep}(G)
&\longrightarrow&\cat{D}^G_{\rm frac}(X) \ , \nonumber \\
(W,f)\otimes V&\longmapsto& (W,W\times V,f)
\label{MSpincmap}\eea
which descends to give a homomorphism
$$
\MSpin^c_\bullet(X)\otimes_{\MSpin^c_\bullet(\pt)}\,\K_\bullet^G(\pt)~
\longrightarrow~\K_\bullet^G(X) \ .
$$
When $G=e$ this is the isomorphism of $\K_\bullet(\pt)$-modules
induced by the Atiyah-Bott-Shapiro orientation, and the map
(\ref{MSpincmap}) determines K-cycle generators in terms of
\spinc bordism generators~\cite{Reis2005}. The equivariant extension
of the Atiyah-Bott-Shapiro construction is given in
ref.~\cite{Landweber2005} in terms of finite-dimensional
$\zed_2$-graded $G$-Clifford modules. Since any $G$-Clifford module
can be built as a direct sum of tensor products of $G$-modules and
ordinary Clifford modules (see Appendix~B), there is an isomorphism of
$R(G)$-modules $\K_G^\bullet(\pt)\cong R(G)\otimes\K^\bullet(\pt)$ and
so these representation modules contain no new information about the
orbifold group. This seems to suggest that, at least in certain cases,
spanning sets of equivariant K-cycles can be taken to lie in the
subcategory~$\cat{D}^G_{\rm frac}(X)$.

\subsection{Topological charges\label{Topcharge}}

The topological charge of a fractional D-brane, in a given closed
string twisted sector of the orbifold string theory on a
$G$-spin$^c$ manifold $X$, can be
computed by using the equivariant Dirac operator theory developed in
Appendix~B. The equivariant index of the $G$-invariant \spinc Dirac
operator $\Dirac_E^X$ coupled to a $G$-vector bundle $E\to X$ takes
values in $\K_G^\bullet(\pt)\cong R(G)$. We can turn this into a
homomorphism on $\K_\bullet^G(X)$ with values in $\zed$ by composing
with the projection $R(G)\to\zed$ defined by taking the multiplicity
of a given representation
\beq
\gamma\,:\,G~\longrightarrow~\End(V_\gamma)
\label{gammaunitary}\eeq
of $G$ on a finite-dimensional complex vector space
$V_\gamma$. There is a corresponding class in the KK-theory group
$$[\gamma]~\in~\KK_\bullet\big(\complex[G]\,,\,\End(V_\gamma)\big)$$
which is represented by the Kasparov module
$(V_\gamma,\gamma,0)$ associated with the extension of the
representation~(\ref{gammaunitary}) to a complex representation of
group ring $\complex[G]$. By Morita equivalence, the Kasparov product
with $[\gamma]$ is the homomorphism on K-theory
$$\K_0\big(\complex[G]\big)~\longrightarrow~
\K_0\big(\End(V_\gamma)\big)\cong\K_0(\complex)\cong\zed$$ induced by
$\gamma:\complex[G]\to\End(V_\gamma)$. We may then define a
homomorphism $$\mu_\gamma\,:\,\K_0^G(X)~\longrightarrow~\zed$$ of
abelian groups by
\beq
\mu_\gamma\big([W,E, f]\big)\=\Index_\gamma
\big( f_*[\Dirac_E^W]\big)~:=~
\ass\big( f_*[\Dirac_E^W]\big)\otimes_{\complex[G]}[\gamma]
\label{mugammadef}\eeq
on equivariant K-cycles $(W,E, f)\in\cat{D}^G(X)$ (and extended
linearly), where
$$\ass\,:\,\K_\bullet^G(X)~\longrightarrow~
\K_\bullet\big(\complex[G]\big)$$ is the analytic assembly map
constructed in Appendix~B.

\subsection{Linear orbifolds}\label{Linorb1}

Let us now consider a simple class of examples wherein everything can
be made very explicit. Let $V$ be a complex vector space
of dimension $\dim_\complex(V)=d\geq1$, and let $G$ be a finite
subgroup of $\SL(V)$. Our spacetime $X$ is the $G$-space identified
with the product
$$
X=\real^{p,1}\times V \ ,
$$
where $G$ acts trivially on the Minkowski space $\real^{p,1}$. Fractional
D-branes carrying themselves a complex linear representation of $G$,
which is a submodule of $\real^{p,1}\times V$, have worldvolumes $W$
linearly embedded in the subspace $\real^{p,1}$ and have transverse
space given by the orthogonal complement $f(W)^\perp\cong V$ with
respect to a chosen inner product. Since the space of hermitean
metrics is contractible, all topological quantities below are
independent of this choice.

As a $G$-space, $V$ is equivariantly contractible to
a point and hence its compactly supported equivariant K-theory is
given by~\cite{Atiyah1969}
$$
\K^\bullet_{G,\cpt}(V)~\cong~\K^\bullet_{G}(\pt)~\cong~R(G)\=
\zed^{|G^\vee|} \ . 
$$
This group coincides with the Bredon cohomology
$\H^\bullet_{G,\cpt}(V;\underline{R}(-))$, owing to the fact that the
equivariant Chern character $\ch_{G/H}$ of Section~\ref{section2} is
the identity map (since the non-equivariant Chern character
$\ch=c_0:\K^0(\pt)\to\H^0(\pt;\zed)$ is the identity map). It follows
that the fractional D-branes, as
defined by elements of equivariant K-theory, can be identified with
representations of the orbifold group\footnote{If the transverse space
  $V$ is instead a \emph{real} linear $G$-module, then throughout one
  should restrict to the subring of $R(G)$ consisting of
  representations associated to conjugacy classes $[g]\in G^\vee$
  for which the centralizer $Z_G(g)$ acts on the fixed point subspace
  $V^g$ by oriented automorphisms~\cite{Karoubi2002}. This will follow
  immediately from the isomorphism~(\ref{chCiso}) below with $X=V$.}
$$
\gamma=\bigoplus_{a=1}^{|G^\vee|}\,N_a\,\gamma_a
$$
consisting of $N_a\geq0$ copies of the $a$-th irreducible
representation $$\gamma_a\,:\,G~\longrightarrow~\End(V_a) \ , \quad
a\=1,\dots,\big|G^\vee\big| \ , $$
which defines the action of $G$ on the fibres of the Chan-Paton
bundle. More precisely, each irreducible fractional brane is
associated to the $G$-bundle $V\times V_a$ over $V$.

By Poincar\'e duality, it follows from Proposition~2.1 of
ref.~\cite{Reis2005} that a basis for the equivariant K-homology group
$\K_\bullet^{G}(V)$ is provided by the geometric equivariant K-cycles
$(V,V\times V_a,\Id_{V})$, $a=1,\dots,|G^\vee|$. By $G$-homotopy
invariance~\cite[Lemma~1.4]{Reis2005} these
cycles can be contracted to $[\pt,V_a,i]$, where $i$ is the inclusion
of a point $\pt\subset V$ whose induced homomorphism
$$i_*\,:\,\K_\bullet^G(\pt)~\longrightarrow~\K_\bullet^G(V)$$ can be
taken to be the identity map $R(G)\to R(G)$. This is simply the
physical statement that the stable fractional branes in this case are
D0-branes in Type~IIA string theory (the Type~IIB theory containing no
such states). The $G$-invariant Dirac operator $\Dirac_{V_a}^\pt$ is
just Clifford multiplication twisted by the $G$-module $V_a$, and thus
the topological charges (\ref{mugammadef}) of the corresponding
fractional branes in the twisted sector labelled by $b$ are given by
$$
\mu_b\big([\pt,V_a,i]\big)\=\Index_{\gamma_b}
\big([\Dirac_{V_a}^\pt]\big)\=
\big[V_a\otimes(\Delta^+-\Delta^-)\big]\otimes_{\complex[G]}
[\gamma_b] \ ,
$$
where $\Delta^\pm$ are the half-spin representations of $\SO(2d)$ on
$V\cong\complex^d$ regarded as $\complex[G]$-modules. Acting on the
character ring the projection gives
$[W]\otimes_{\complex[G]}[\gamma_b]=\chi_W(g_b)$, where
$\chi_W:G\to\complex$ is the character of the $G$-module $W$ and
$[g_b]\in G^\vee$ is the conjugacy class corresponding to the
irreducible representation $\gamma_b$.

\newsection{Delocalization and Ramond-Ramond
  couplings\label{RRCouplings}}

The purpose of this section is to describe the delocalization of
Bredon cohomology and the equivariant Chern character, introduced in
Section~\ref{EqChern}, and to apply it to the study of the coupling
between Ramond-Ramond potentials and the D-branes of the previous
section. In the following we will require
some standard facts concerning string theory on global orbifolds,
particularly its low-energy field theory content. For more details
see~refs.~\cite{Dixon1985,Dixon1986}.

\subsection{Closed string spectrum}\label{clstring}

The boundary states corresponding to the fractional branes constructed
in Section~\ref{DKhom} have components in the twisted sectors of
the closed string Hilbert space $\hil$ of orbifold string theory on
$X$. The closed string is an embedding $x:\S^1\times\real\to X$ of the
worldsheet cylinder, with local coordinates
$(\sigma,\tau)\in\S^1\times\real$, into the $G$-\spinc spacetime
manifold $X$. The Hilbert space $\hil$ of physical string states
decomposes into a direct sum over twisted sectors, each characterized
by a conjugacy class, as
\beq
\mathcal{H}=\bigoplus_{[g]\in G^\vee}\,\mathcal{H}_{[g]}
\label{Hildecomp}\eeq
with only $G$-invariant states surviving in each superselection sector
$\hil_{[g]}$. Actually, the Hilbert space factorizes into one sector
for each element of the group $G$, but the action of $G$ mixes the
sectors within a given conjugacy class. The subspaces in
eq.~(\ref{Hildecomp}) are thus defined as
\begin{displaymath}
\mathcal{H}_{[g]}:=\bigoplus_{h\in[g]}\,\mathcal{H}_{h} \ ,
\end{displaymath}
where $\mathcal{H}_{h}$ is the subspace of states induced by the
twisted string field boundary condition
\beq
x(\sigma+2\pi,\tau)=h\cdot{x(\sigma,\tau)}
\label{twistbc}\eeq
with an analogous condition on the worldsheet fermion fields (using a
lift $\hat G$ of the orbifold group). Then $G$ acts on the subspace
$\mathcal{H}_{[g]}$, and projecting onto $G$-invariant states in
$\mathcal{H}_{[g]}$ is equivalent to projecting onto
$Z_G(h)$-invariant states in $\mathcal{H}_{h}$ for any $h$ in $[g]$.

The low-energy limit of Type~II orbifold superstring theory on $X$
contains Ramond-Ramond fields $C_{[g]}$ coming from the various twisted
sectors. The twisted boundary conditions (\ref{twistbc})
on the string embedding map imposes constraints on the low-energy
spectrum. For example, the untwisted sector given by $g=e$ contains
Ramond-Ramond fields defined on the entire spacetime manifold $X$,
while the twisted sector represented by $g\neq e$ gives rise to fields
defined only on the fixed point submanifold $X^{g}$. The GSO
projection then enforces the properties that the Ramond-Ramond form
potentials $C_{[g]}$ determine self-dual fields in each twisted
sector, and that they be of odd degree in Type~IIA theory and of even
degree in Type~IIB theory.

The Ramond-Ramond fields can thus be ``organised'' into the
differential complex
\beq
\Omega^\bullet_{G}(X;\real):=\bigoplus_{[g]\in G^\vee}\,\Omega^\bullet
\big(X^{g}\,;\,\real\big)^{Z_G(g)} \ .
\label{RRcomplex}\eeq
{Here we consider only fields coming from inequivalent
  twisted sectors and make a choice of submanifold
  $X^{g}$, since for any conjugate element $h\in[g]$ there is
  a diffeomorphism $X^{g}\cong{X^{h}}$. (No choice is needed in the
  case in which $G$ is an abelian group.)} As $\dd\circ
g^*=g^*\circ\dd$ for all $g\in G$, the derivation is given by
$$
\dd_G:=\bigoplus_{[g]\in G^\vee}\,\dd_g
$$
where
$\dd_g=\dd:\Omega^\bullet(X^{g};\real)\to\Omega^\bullet(X^{g};\real)$ is
the usual de~Rham exterior derivative. Note that only the centralizer
subgroup of $g$ in $G$ acts (properly) on $X^g$.

\subsection{Delocalization of Bredon cohomology\label{DelocBredon}}

We will now show how Bredon cohomology can be used to compute
the cohomology of the complex (\ref{RRcomplex}) of orbifold
Ramond-Ramond fields by giving a delocalized description of Bredon
cohomology with \emph{real} coefficients, following
refs.~\cite{Mislin2003} and~\cite{Luck1998} where further details can
be found. This is the stringy orbifold cohomology of $X$, defined as
the ordinary (real) cohomology of the \emph{orbifold resolution}
$\widetilde{X}=\coprod_{[g]\in G^\vee}\,X^g/Z_G(g)$. Note that there is a
natural surjective map $\pi:\widetilde{X}\to X$ defined by
$(x,[g])\mapsto x$, and a natural injection
$X\hookrightarrow\widetilde{X}$ into the connected component of
$\widetilde{X}$ corresponding to the untwisted sector~$[g]=[e]$.

Denote with $\underline{\real}(-)$ the real representation ring
functor $\mathbb{R}\otimes{\underline{R}(-)}$ on the orbit category
$\ocat{G}$. Let $\langle G\rangle$ denote the set of conjugacy classes
$[C]$ of cyclic subgroups $C$ of $G$. Let
$\underline{\real}\,_C(-)$ be the contravariant functor on
$\ocat{G}$ defined by $\underline{\real}\,_C(G/H)=0$ if $[C]$
contains no representative $g\,C\,g^{-1}<H$, and otherwise
$\underline{\real}\,_C(G/H)$ is isomorphic to the cyclotomic field
$\mathbb{R}(\zeta_{|C|})$ over $\mathbb{R}$ generated by the primitive
root of unity $\zeta_{|C|}$ of order $|C|$. A standard result from the
representation theory of finite groups then gives a natural
splitting
\begin{displaymath}
\underline{\real}(-)=\bigoplus_{[C]\in\langle G\rangle}\,
\underline{\real}\,_C(-) \ .
\end{displaymath}
By definition, for any module $\underline{M}\,(-)$ over the orbit
category one has
\bea\nonumber
\text{Hom}_{\ocat{G}}\big(\,\underline{M}\,(-)\,,\,
\underline{\real}\,_{C}(-)\big)&\cong&
\text{Hom}_{N_{G}(C)}\big(\,\underline{M}\,(G/C)\,,\,
\underline{\real}\,_{C}(G/C)\big) \\[4pt] \nonumber &\cong&
\underline{M}\,(G/C)\otimes_{N_{G}(C)}\,\underline{\real}\,_{C}(G/C)
\eea
where the normalizer subgroup $N_{G}(C)$ acts on
$\underline{\real}\,_{C}(G/C)\cong\mathbb{R}(\zeta_{|C|})$ via
identification of a generator of $C$ with $\zeta_{|C|}$.

These facts together imply that the cochain groups (\ref{CGnXF}) with
$\underline{F}=\underline{\real}(-)$ admit a splitting given by
\begin{displaymath}
C^\bullet_{G}\big(X\,,\,\underline{\real}(-)\big)\cong
\bigoplus_{[C]\in\langle
  G\rangle}\,C^\bullet\big(X^{C}\big)\otimes_{N_{G}(C)}\,
\underline{\real}\,_{C}(G/C) \ .
\end{displaymath}
As the centralizer $Z_{G}(C)$ acts properly on $X^{C}$, the natural
map
\begin{displaymath}
\bigoplus_{[C]\in\langle G\rangle}\,C^\bullet\big(X^{C}\big)\otimes_{N_{G}(C)}\,
\underline{\real}\,_{C}(G/C) ~\longrightarrow~ 
\bigoplus_{[C]\in\langle G\rangle}\,C^\bullet\big(X^{C}/Z_G(C)\big)
\otimes_{W_{G}(C)}\,\underline{\real}\,_{C}(G/C)
\end{displaymath}
is a cohomology isomorphism, where $W_{G}(C):=N_{G}(C)/Z_{G}(C)$ is
the Weyl group of $C<G$ which acts by translation on
$X^C/Z_G(C)$. Since $\underline{\real}\,_{C}(G/C)$ is a projective
$\mathbb{R}[W_{G}(C)]$-module, it follows that for any proper
$G$-complex $X$ the Bredon cohomology of $X$ with coefficient system
$\mathbb{R}\otimes{\underline{R}(-)}$ has a splitting
\beq\label{Bredoncyclic}
\H_{G}^{\bullet}\big(X\,;\,\mathbb{R}\otimes{\underline{R}(-)}\big)
\cong\bigoplus_{[C]\in\langle
  G\rangle}\,\H^{\bullet}\big(X^{C}/Z_{G}(C)\,;\,
\mathbb{R}\big)\otimes_{W_{G}(C)}\,\underline{\real}\,_{C}(G/C) \ .
\eeq

At this point, we note that the dimension of the $\mathbb{R}$-vector
space
\begin{displaymath}
\underline{\real}\,_C(G/C)^{W_G(C)}\cong
\mathbb{R}\otimes_{W_{G}(C)}\,\underline{\real}\,_{C}(G/C)
\end{displaymath}
is equal to the number of $G$-conjugacy classes of generators for
$C$. We also use the fact that for a finite group $G$ a sum over
conjugacy classes of cyclic subgroups is equivalent to a sum
over conjugacy classes of elements in $G$, and that $X^{\langle
  g\rangle}=X^g$ and $Z_G(\langle g\rangle)=Z_G(g)$. One finally
obtains a splitting of real Bredon cohomology groups\footnote{This
  splitting in fact holds over $\rat$~\cite{Mislin2003}.}
\beq\label{Bredonsplit}
\H_{G}^{\bullet}\big(X\,;\,\mathbb{R}\otimes{\underline{R}(-)}\big)
\cong\bigoplus_{[g]\in G^\vee}\,\H^{\bullet}\big(X^{g}\,;\,
\mathbb{R}\big)^{Z_{G}(g)}
\eeq
into ordinary cohomology groups
$$\H^\bullet\big(X^g\,;\,\real\big)^{Z_G(g)}~\cong~
\H^\bullet\big(X^g/Z_G(g)\,;\,\real\big)~\cong~
\H\big(\Omega^\bullet(X^g;\real)^{Z_G(g)}\,,\,\dd\big)$$ with
constant coefficients $\real$. The group on the right-hand side of
eq.~(\ref{Bredonsplit}) corresponds to the (real) ``delocalized
equivariant cohomology'' $\H^\bullet(\coprod_{g\in
  G}\,X^g)^G\otimes\real$ defined by Baum and
Connes~\cite{Baum1988,Baum1985}. Note that this group is
(non-canonically) isomorphic to $R(G)\otimes\H^\bullet(X;\real)$
when the $G$-action on $X$ is trivial. Furthermore, by using
Theorem~\ref{eqChernthm} one also has a decomposition for equivariant
K-theory with real coefficients given by
\begin{displaymath}
\K_{G}^{\bullet}(X)\otimes\mathbb{R}
\cong\bigoplus_{[g]\in G^\vee}\,\big(\K^{\bullet}(X^{g})\otimes
\mathbb{R}\big)^{Z_{G}(g)} \ .
\end{displaymath}
However, this decomposition captures only the torsion-free part of the
group $\K_{G}^{\bullet}(X)$.

\subsection{Delocalization of the equivariant Chern
  character\label{DelocCh}}

The complex (\ref{RRcomplex}) of orbifold Ramond-Ramond fields can
also be used to provide an explicit geometric description of the
(complex) equivariant Chern
character defined in Section~\ref{section2}. We will now explain this
construction, refering the reader to~ref.~\cite{Bunke2007} for the
technical details. Consider a complex $G$-bundle $E$ over $X$ equiped
with a $G$-invariant hermitean metric and a $G$-invariant metric
connection~$\nabla^{E}$. One can then define a closed $G$-invariant
differential form $$\ch(E)~\in~\Omega^\bullet(X;\complex)^{G}$$ in the
usual way by the Chern-Weil construction
\begin{displaymath}
\ch(E):=\Tr\big(\exp(-F^E/2\pi{\ii})\big)
\end{displaymath}
where $F^E$ is the curvature of the connection $\nabla^E$. It
represents a cohomology class
$$\big[\ch(E)\big]~\in~{\H^\bullet(X;\mathbb{C})^{G}}$$ in the fixed
point subring of the action of $G$ as automorphisms of
$\H^\bullet(X;\mathbb{C})$. By using the definition of the
homomorphisms (\ref{chXH}), with $\mathbb{Q}$ substituted by
$\mathbb{C}$ and $H=e$, one can establish the equality
\begin{displaymath}
\big[\ch(E)\big]=\ch^{e}_{X}\big([E]\big) \ .
\end{displaymath}

Let $C<{G}$ be a cyclic subgroup, and define the
cohomology class $$\big[\ch(g,E)\big]~\in~
{\H^\bullet\big(X^{C}\,;\,\mathbb{C}\big)^{Z_{G}(C)}}
\cong\H^\bullet\big(X^{C}/Z_{G}(C)\,;\,\mathbb{C}\big)\cong
\H\big(\Omega^\bullet(X^C;\complex)^{Z_G(C)}\,,\,\dd\big)$$
represented by
\begin{displaymath}
\ch(g,E):=\Tr\big(\gamma(g)\,
\exp(-F_C^E/2\pi{\ii})\big)
\end{displaymath}
where $g$ is a generator of $C$, $F_C^E$ is the restriction of the
invariant curvature two-form $F^E$ to the fixed point subspace
$X^{C}$, and $\gamma$ is a representation of $C$ on the fibres of the
restriction bundle $E|_{X^{C}}$ which is an $N_G(C)$-bundle over
$X^C$. The character $\chi_C$ naturally identifies
$R(C)\otimes\complex$ with the $\complex$-vector space of class
functions $C\to\complex$. By using the splitting (\ref{Bredoncyclic})
for complex Bredon cohomology, one can then show that
\begin{displaymath}
\ch^{C}_{X}\big([E]\big)(g)=\big[\ch(g,E)\big]
\end{displaymath}
up to the restriction homomorphism
$R(C)\otimes\complex\to\underline{\complex}\,_{C}(G/C)$ of rings with
kernel the ideal of elements whose characters vanish on all generators
of $C$.

Using eq.~(\ref{chXprod}) we can then define the map
\begin{displaymath}
\ch^{\mathbb{C}}\,:\,{\rm Vect}^{\mathbb{C}}_{G}(X)~
\longrightarrow~{\bigoplus_{[g]\in G^\vee}\,\Omega^{\rm even}\big(X^g\,;\,
\mathbb{C}\big)^{Z_G({g})}}
\end{displaymath}
from complex $G$-bundles $E\to X$ given by
\beq
\ch^{\mathbb{C}}(E)=\bigoplus_{[g]\in G^\vee}\,\Tr\big(\gamma(g)\,\exp(-F_g^E/2\pi{\ii})\big) \ .
\label{chCEdef}\eeq
At the level of equivariant K-theory, from Theorem~\ref{eqChernthm} it
follows that this map induces an isomorphism
\beq
\ch^{\mathbb{C}}\,:\,\K^\bullet_{G}(X)\otimes\mathbb{C}~
\xrightarrow{\approx}~\H_{G}\big(X\,;\,\mathbb{C}
\otimes{\underline{\pi_{-\bullet}\K}\,_G(-)}\big)^{\bullet}
\label{chCiso}\eeq
where we have used the splitting (\ref{Bredonsplit}). The map
(\ref{chCEdef}) coincides with the equivariant Chern character defined
in~ref.~\cite{Atiyah1989}.

\subsection{Wess-Zumino pairings\label{WZpairing}}

We now have all the necessary ingredients to define a coupling of the
Ramond-Ramond fields to a D-brane in the orbifold $[X/G]$. In this
section we will only consider Ramond-Ramond fields which are
topologically trivial, \emph{i.e.}, elements of the differential
complex (\ref{RRcomplex}), and use the delocalized cohomology theory
above by working throughout with complex coefficients. Under these
conditions we can straightforwardly make contact with existing
examples in the physics literature and write down their appropriate
generalizations.

To this aim, we introduce the bilinear product
\begin{displaymath}
\wedge_G\,:\,\Omega_{G}^\bullet(X;\real)\otimes\Omega_{G}^\bullet(X;
\real)~\longrightarrow{}~\Omega_{G}^\bullet(X;\real)
\end{displaymath}
defined on $\omega=\bigoplus_{[g]\in G^\vee}\,\omega_{[g]}$ and
$\eta=\bigoplus_{[g]\in G^\vee}\,\eta_{[g]}$ by
\beq
\omega\wedge_G{\eta}~:=~
\bigoplus_{[g]\in G^\vee}\,\omega_{[g]}\wedge_g\eta_{[g]}
\label{orbdiffprod}\eeq
where $\wedge_g=\wedge$ is the usual exterior product on
$\Omega^\bullet(X^g;\real)$. There is also an integration
$$\int_X^G\,:\,\Omega^\bullet_G(X;\real)~\longrightarrow~\real \
. $$ If $\omega\in\Omega^\bullet_{G}( X;\mathbb{R})$ then we set
\begin{displaymath}
\int_{X}^G\,\omega~:=~\frac1{|\,G^\vee\,|}~\sum_{[g]\in G^\vee}~
\int_{X^{g}/Z_G(g)}\,\omega_{[g]} \ .
\end{displaymath}
The normalization ensures that $\int_X^G\,\omega=\int_X\,\omega$ when
$G$ acts trivially on $X$ and $\omega\in\Omega^\bullet(X;\real)$ is
``diagonal'' in $R(G)\otimes\Omega^\bullet(X;\real)$.

Suppose now that $ f:W\rightarrow{X}$ is the smooth immersed
worldvolume cycle of a wrapped D-brane state $(W,E,f)\in\cat{D}^G(X)$
in the orbifold $[X/G]$, \emph{i.e.}, $ W$ is a $G$-\spinc manifold
equiped with a $G$-bundle $E\to W$ and an invariant connection
$\nabla^E$ on $E$. We define the \emph{Wess-Zumino pairing}
$$
\CS\,:\,\cat{D}^G(X)\times\Omega_G^\bullet(X;\complex)~\longrightarrow~
\complex
$$
between such D-branes and Ramond-Ramond fields as
\begin{equation}\label{coupling}
\CS\big((W,E,f)\,,\,C\big)=
\int_{ W}^G\,\tilde{C}\wedge_G{\,\ch^{\mathbb{C}}}(E)\wedge_G
\mathcal{R}(W,f) \ ,
\end{equation}
where $\tilde C=f^{*}C$ is the pullback along $f:W\to X$ of the
total Ramond-Ramond field $$C=\bigoplus_{[g]\in G^\vee}\,C_{[g]}$$ and
the equivariant Chern character is given by eq.~(\ref{chCEdef}) with
$\gamma$ giving the action of $G$ on the Chan-Paton factors of the
D-brane. The closed worldvolume form
$\mathcal{R}(W,f)\in\Omega^{\rm even}_{G,{\rm cl}}(W;\complex)$
represents a complex Bredon cohomology class which accounts for
gravitational corrections due to curvature in the spacetime $X$ and
depends only on the bordism class of $(W,f)$. It will be constructed
explicitly in Section~\ref{Gravcoupl} below in terms of the geometry
of the immersed cycle $f:W\rightarrow X$ and of the $G$-bundle~$\nu\to
W$ given by
\beq
\nu\=\nu(W;f)\=f^*(T_X)\oplus T_W \ .
\label{normbun}\eeq

It is easily seen that, modulo the curvature contribution
$\mathcal{R}(W,f)$, the very natural expression~(\ref{coupling})
reduces to the usual Wess-Zumino coupling of topologically trivial
Ramond-Ramond fields to D-branes in the case $G=e$. But even if a
group $G\neq e$ acts trivially on the brane worldvolume $ W$ (or on
the spacetime $X$), there can still be additional contributions to the
usual Ramond-Ramond coupling if $E$ is a \emph{non-trivial
  $G$-bundle}. This is the situation, for instance, for fractional
D-branes $$(W,E,f)~\in~\cat{D}_{\rm frac}^G(X)$$  placed at
orbifold singularities. In this case, we may use the isotopical
decomposition (\ref{Eisodecomp}) of the Chan-Paton bundle along with
eq.~(\ref{WgW}). Then the Wess-Zumino pairing (\ref{coupling})
descends to a pairing $$\CS_{\rm frac}\,:\,\cat{D}_{\rm
  frac}^G(X)\times\Omega_G^\bullet(X;\complex)~\longrightarrow~
\complex$$ with the additive subcategory of fractional branes at
orbifold singularities given by
\beq
\CS_{\rm frac}\big((W,E,f)\,,\,C\big)=
\int_W\,\Big(\,\frac1{|\,G^\vee\,|}~\sum_{[g]\in G^\vee}\,
\tilde C_{[g]}\wedge\ch\big(E_{[g]}\big)\,\chi_\gamma(g)\,
\Big)\wedge\mathcal{R}(W,f)
\label{CSfrac}\eeq
where $\chi_\gamma:G\to\complex$ is the character of the
representation $\gamma$ and $\mathcal{R}(W,f)\in\Omega^{\rm
  even}_{\rm cl}(W;\complex)$. One can immediately read off from the
Wess-Zumino action (\ref{CSfrac}) the Ramond-Ramond charges of
D0-branes, and the state corresponding to the representation $\gamma$
has (fractional) charge
$$
Q_\gamma^{[g]}=\frac{\chi_\gamma(g)}{|\,G^\vee\,|}
$$
with respect to the twisted Ramond-Ramond one-form field
$C^{(1)}_{[g]}$. These charges agree with both those of an open string
disk amplitude computation and a boundary state analysis for
fractional D0-branes~\cite{Diaconescu1999}. Our general formula
(\ref{coupling}) includes also the natural extension to the
Ramond-Ramond couplings of \emph{regular} D-branes which move freely
in the bulk of $X$ under the action of $G$, as well as to other
non-BPS D-brane states such as truncated branes.

\subsection{Linear orbifolds}\label{Linorb2}

We will now ``test'' our definition (\ref{coupling}) on the class of
examples considered in Section~\ref{Linorb1}. These are flat
orbifolds for which there are no non-trivial curvature contributions,
\emph{i.e.}, $\mathcal{R}(W,f)=1$. Let us specialize to the case of
cyclic orbifolds having twist group $G=\zed_n$ with $n\geq d$. In this
case, as $\mathbb{Z}_{n}$ is an abelian group, one has
$\zed_n^\vee=\zed_n$ (setwise) and we can label the non-trivial
twisted sectors of the orbifold string theory on $X$ by
$k=1,\dots,n-1$. The untwisted sector is labelled by $k=0$. We take a
generator $g$ of $\zed_n$ whose action on $V\cong\complex^d$ is given
by
$$
g\cdot\big(z^1\,,\,\dots\,,\,z^d\big):=\big(\zeta^{a_1}\,z^1\,,\,
\dots\,,\,\zeta^{a_d}\,z^d\big) \ ,
$$
where $\zeta=\exp(2\pi\ii/n)$ and $a_1,\dots,a_d$ are integers
satisfying $a_1+\cdots+a_d\equiv0~{\rm mod}~n$.\footnote{Both the
  requirement that the representation $V$ be complex and the form of
  the $G$-action are physical inputs ensuring that the closed string
  background $X$ preserves a sufficient amount of supersymmetry after
  orbifolding.} In this case the action of any element in $\zed_n$ has
only one fixed point, an orbifold singularity at the origin
$(0,\dots,0)$. Hence for any $g\neq e$ one has
\begin{displaymath}
X^{g}\cong\mathbb{R}^{p,1}
\end{displaymath}
and the differential complex (\ref{RRcomplex}) of orbifold
Ramond-Ramond fields is given by
\begin{displaymath}
\Omega_{\zed_n}^\bullet(X;\real)=\Omega^\bullet(X;\real)\oplus\Big(\,
\bigoplus_{k=1}^{n-1}\,\Omega^\bullet\big(\mathbb{R}^{p,1};\real\big)
\Big) \ .
\end{displaymath}

Consider now a D-brane $(W,E,f)\in\cat{D}_{\rm frac}^{\zed_n}(X)$ with
worldvolume cycle $f(W)\subset{\mathbb{R}^{p,1}}$ placed at the
orbifold singularity, \emph{i.e.},
$f:W\rightarrow{\mathbb{R}^{p,1}\times{(0,\dots,0)}}\subset{X}$. Let
the generator $g$ act on the fibres of the Chan-Paton bundle $E\to W$
in the $n$-dimensional regular representation
$\gamma(g)_{ij}=\zeta^i\,\delta_{ij}$. The action on worldvolume
fermion fields is determined by a lift $\hat\zed_n$ acting on the spinor
bundle $S\to W$. Then the pairing (\ref{coupling}) contains the
following terms. First of all, we have the coupling of the untwisted
Ramond-Ramond fields to $ W$ given by
\begin{displaymath}
\int_{ W}\,\tilde C\wedge\Tr\big(\exp(-F^E/2\pi{\ii})\big) \ ,
\end{displaymath}
which is just the usual Wess-Zumino coupling and hence the
Ramond-Ramond charge of a regular (bulk) brane is~$1$ as
expected. Then there are the contributions from the twisted sectors,
which by recalling eq.~(\ref{WgW}) are given by the expression
\begin{displaymath}
\int_{ W}\,\frac1n~\sum_{k=1}^{n-1}\,\tilde
C_{k}\wedge\Tr\big(\gamma(g^{k})\,
\exp(-F^E/2\pi{\ii})\big)
\end{displaymath}
where $g^{k}$ is an element of $\zed_n$ of order $k$. Since
$\gamma(g^k)_{ii}=\zeta^{ik}$, the coupling in this case is determined
by a discrete Fourier transform of the fields $\tilde C_k$ over the
group $\zed_n$. The brane associated with the $i$-th irreducible
representation of $\zed_n$ has charge $\zeta^{ik}/n$ with respect to the
Ramond-Ramond field in the $k$-th twisted sector. For $d=2$ and $d=3$
this pairing agrees with and uniformizes the gauge field couplings
computed in~refs.~\cite{douglas1996} and~\cite{Douglas1997},
respectively.

\subsection{An equivariant Riemann-Roch formula\label{EqRRformula}}

Let $X,W$ be smooth compact $G$-manifolds, and $f:
W\rightarrow{X}$ a smooth proper $G$-map. If we want to make sense of
the equations of motion for the Ramond-Ramond field $C$, which is a
quantity defined on the spacetime $X$, then we need to pushforward
classes defined on the brane worldvolume $W$ to classes defined on the
spacetime. This will enable the construction of Ramond-Ramond currents
in Section~\ref{Fluxquant} induced by the background and
D-branes which appear as source terms in the Ramond-Ramond field
equations. Some technical details of the constructions below are
provided in Appendix~C.

Consider first the non-equivariant case $G=e$. Let $\nu\to W$ be the
$\zed_2$-graded bundle (\ref{normbun}), \emph{i.e.}, the KO-theory
class of $\nu$ is the virtual bundle
$[\nu]=f^*[T_X]-[T_W]\in\KO^0(W)$. We assume that $\nu$ is
even-dimensional and endowed with a \spinc structure (this is
automatic if both $X$ and $W$ are spin$^c$). Then, as reviewed in
Appendix~C, one can define the Gysin homomorphism in ordinary K-theory
\begin{displaymath}
f^{\K}_{!}\,:\,\K^\bullet( W)~\longrightarrow~{\K^\bullet(X)} \ .
\end{displaymath}
Using the orientations on $X$ and $W$ one has Poincar\'e duality in
ordinary cohomology, inducing a Gysin homomorphism
$$f_!^{\H}\,:\,\H^\bullet(W;\rat)~\longrightarrow~\H^\bullet(X;\rat) \
, $$ where here we consider the $\zed_2$-grading given by even and odd
degree.

The pushforward homomorphisms in K-theory and in cohomology,
under the conditions stated above, are related by the Riemann-Roch
theorem which states that
\beq
\ch\big(f_!^{\K}(\xi)\big)=f^{\H}_{!}\big(\ch(\xi)\,\smile\,\Todd(\nu)^{-1}
\big)
\label{ordGRRthm}\eeq
for any class $\xi$ in $\K^\bullet( W)$. Here $\Todd(E)\in\Omega_{\rm
  cl}^{\rm even}(W;\complex)$ denotes the Todd genus characteristic
class of a hermitean vector bundle $E$ over $W$, whose Chern-Weil
representative is
$$
\Todd(E)=\sqrt{\det\Big(\,\frac{F^E/2\pi\ii}{\tanh\big(F^E/2\pi\ii
\big)}\,\Big)}
$$
where $F^E$ is the curvature of a hermitean connection $\nabla^E$ on
$E$. The Todd class of the $\zed_2$-graded bundle (\ref{normbun}) can
be computed by using multiplicativity, naturality and invertibility to
get $\Todd(\nu)=f^*\Todd(T_X)/\Todd(T_W)$. Thus the Chern character
does not commute with the Gysin pushforward maps, and the defect in
the commutation relation is precisely the Todd genus of the bundle
$\nu$. This ``twisting'' by the bundle $\nu$ over the D-brane
contributes in a crucial way to the Ramond-Ramond current in the
non-equivariant case~\cite{Cheung1997,Minasian1997,Olsen2000}.

Let us now attempt to find an equivariant version of the Riemann-Roch
theorem. As the morphism $f:W\rightarrow X$ is $G$-equivariant, the
$\zed_2$-graded bundle $\nu$ is itself a $G$-bundle with even
$G$-action. We assume that $\nu$ is $\K_G$-oriented. This requirement
is just the Freed-Witten anomaly cancellation
condition~\cite{Freed1999} in this case, generalized to global
worldsheet anomalies for D-branes represented by generic
$G$-equivariant K-cycles. It enables, analogously to the
non-equivariant case, the construction of an equivariant Gysin
homomorphism
\beq\label{eqGysin}
f_!^{\K_G}\,:\,\K^\bullet_G(W)~\longrightarrow~\K^\bullet_G(X) \ .
\eeq

We will demonstrate that, under some very
special conditions, one can construct a complex Bredon cohomology
class which is analogous to the Todd genus and which plays the role of
the equivariant commutativity defect as above. Let us suppose that
the $G$-action has the property that for any element $g\in G$, the
$N_G(g)$-bundle $\nu^g=\nu( W;f)^{g}\to{W^{g}}$ is the $\zed_2$-graded
bundle $\nu^g=f^*|_{W^g}(T_{X^g})\oplus T_{W^g}$ over the immersion
$f|_{W^g}:W^{g}\rightarrow{X^{g}}$ with a $Z_G({g})$-invariant
spin$^c$ structure. Note that these are highly non-trivial conditions,
because for an arbitrary $G$-bundle $E\to{X}$
one is not even guaranteed in general that $E^{g}\to{X^{g}}$ is a vector
bundle, as the dimension of the fibre may jump from point to point. As
a simple example of what can happen,\footnote{We are grateful to
  J.~Figueroa-O'Farrill for suggesting this example to us.}
let $X=\mathbb{R}$ and $G=\mathbb{R}^{+}$ the group of positive reals
under multiplication. Consider the $G$-bundle $X\times{V}\to{X}$ given
by projection onto the first factor, where $V$ is a finite-dimensional
real vector space and the $G$-action is
\begin{displaymath}
g\cdot(x,v)=\big(x\,,\,g^{x}\,v\big)
\end{displaymath}
for all $g\in G$. For any $g\neq1$, $(X\times{V})^{g}$ is not a fibre
bundle over $X^g=X$, as the $G$-invariant fibre space over $x=0$ is
$V$ while it is the null vector over any other point.

When $G$ is a finite group, one can apply a construction due to
Atiyah and Segal~\cite{Atiyah1989}. If $E$ is a complex $G$-vector
bundle over $X$, its restriction to the fixed point subspace $X^g$
for any $g\in G$ carries a representation of the normalizer subgroup
$N_G(g)$ fibrewise. We can thus decompose $E|_{X^g}$ into a Whitney
sum of sub-bundles
$E_\alpha=\Hom_g(\id_\alpha,E|_{X^g})\otimes\id_\alpha$ over the
eigenvalues $\alpha\in {\rm spec}(g)\subset\complex$ for the action of
$g$ on the fibres of $E|_{X^g}$, where $\Hom_g(\id_\alpha,E|_{X^g})$ is a
$Z_G(g)$-bundle over $X^g$ and $\id_\alpha$ is the $N_G(g)$-bundle
$X^g\times V_\alpha$ with $V_\alpha$ the corresponding eigenspace. We
define the class
\beq
\phi_g(E)=\sum_{\alpha\in {\rm spec}(g)}\,\alpha\,[E_\alpha]
\label{phigEdef}\eeq
in the ordinary K-theory of $X^g$ with complex
coefficients. By Schur's lemma, every element $h\in Z_G(g)$ commuting
with $g$ acts as a multiple of the identity on the total space of the
bundle $E_\alpha$, and so the class obtained in this way is
$Z_G(g)$-invariant. It follows that the map (\ref{phigEdef}) on ${\rm
  Vect}^\complex_G(X)$ induces a homomorphism
$$
\phi_g\,:\,\K_G^\bullet(X)\otimes\complex~\longrightarrow~
\big(\K^\bullet(X^g)\otimes\complex\big)^{Z_G(g)} \ .
$$
By setting
$$
\phi=\bigoplus_{[g]\in G^\vee}\,\phi_g
$$
we obtain a natural isomorphism leading to the
splitting~\cite{Atiyah1989}
\beq
\K^\bullet_{G}(X)\otimes{\mathbb{C}}\cong{\bigoplus_{[g]\in G^\vee}\,
\big(\K^\bullet(X^{g})\otimes\mathbb{C}\big)^{Z_G({g})}} \ .
\label{eigendecomp}\eeq
The equivariant Chern character (\ref{chCiso}) provides an isomorphism
componentwise between the equivariant K-theory group
(\ref{eigendecomp}) and the complex Bredon cohomology of $X$.

Suppose now that the equivariant Thom class
$\Thom_G(\nu)\in\K^\bullet_{G,\cpt}(\nu)$ can be decomposed according
to the splitting (\ref{eigendecomp}) in such a way that the component
in any subgroup $$\Thom\big(\nu^g\big)~\in~
\big(\K_\cpt^\bullet(\nu^{g})\otimes\complex\big)^{Z_G({g})}$$
coincides with the (ordinary) Thom class of the $\zed_2$-graded bundle
$\nu^{g}\to{W^{g}}$. Under these conditions, the equivariant Gysin
homomorphism (\ref{eqGysin}) constructed in Appendix~C decomposes
according to the splitting
\begin{displaymath}
f^{\K_{G}}_{!}=\bigoplus_{[g]\in G^\vee}\,f^{\K}_{g}
\end{displaymath}
where $f^{\K}_{g}$ is the K-theoretic Gysin homomorphism associated
to the smooth proper map
$$f\big|_{W^g}\,:\,W^{g}~\longrightarrow~{X^{g}} \ . $$

Define the characteristic class $\Todd_{G}$ by %
\beq\label{ToddGdef}
\Todd_{G}(\nu):=\bigoplus_{[g]\in G^\vee}\,\Todd\big(\nu^{g}\big)
\eeq
where $\Todd$ is the ordinary Todd genus. This class defines an
element of the even degree complex Bredon cohomology of the brane
worldvolume $W$. Under the conditions spelled out above, we can now
use the equivariant Chern character (\ref{chCiso}) and the usual
Riemann-Roch theorem for each pair $(W^{g},X^{g})$ to prove the
identity
\beq
f^{\H_{G}}_{!}\big(\ch^{\mathbb{C}}(\xi)\,\smile_G\,
\Todd_{G}(\nu)^{-1}\big)=
\ch^{\mathbb{C}}\big(f^{\K_{G}}_{!}(\xi)\big)
\label{GRRthm}\eeq
for any class $\xi\in{\K_{G}^\bullet( W)}\otimes{\mathbb{C}}$, as all
quantities involved in the formula (\ref{GRRthm}) are compatible with
the $G$-equivariant decompositions given above.

When the geometric conditions assumed above are not met, the
equivariant Todd class in the formula (\ref{GRRthm}) should be
modified by multiplying it with another equivariant characteristic
class $\Lambda_G(W)$ which reflects non-trivial geometry of the normal
bundles $N_{W^g}$ to the embeddings $W^g\subset W$. This should come
from applying a suitable fixed point theorem to the ordinary
Riemann-Roch formula (\ref{ordGRRthm}), but we have not found a
version which is suitable to our particular equivariant Chern
character in the general case on the category of $G$-spaces. When $f$
is the collapsing map $X\to\pt$, this is the content of the index
theorem used in Section~\ref{Gravcoupl} below. The formula
(\ref{GRRthm}) is, however, directly applicable on the category
$\cat{D}_{\rm frac}^G(X)$ of fractional D-branes. When $G$ is the
cyclic group $\zed_n$ as in Section~\ref{Linorb2} above, one can apply
the Thomasson-Nori fixed point theorem~\cite{Nori2000,Thomasson1992}
to get
\beq
\Lambda_{\zed_n}(W)=\bigoplus_{k=0}^{n-1}\,\zeta^k~\ch\big(\,
\mbox{$\bigwedge_{-1}$}
N_{W^{g^k}}^\vee\big)~\in~\H_{\zed_n}^\bullet\big(W\,;\,
\complex\otimes\underline{R}(-)\big)
\label{TNfixed}\eeq
where
$$
\mbox{$\bigwedge_{-1}$}N_{W^g}=\sum_{l=0}^{{\rm codim}(W^g)}\,
(-1)^l\,\big[\,\mbox{$\bigwedge^l$}N_{W^g}\big]~\in~\K^0\big({W^g}\big)
\ .
$$

\subsection{Gravitational pairings\label{Gravcoupl}}

We will now explain how to compute the curvature contributions
$\mathcal{R}(W,f)\in\Omega^{\rm even}_{G,{\rm cl}}(W;\complex)$ to the
Wess-Zumino functional (\ref{coupling}) for the brane geometries
described in Section~\ref{EqRRformula} above and for vanishing
$B$-field. We derive the cancelling form for the Ramond-Ramond gauge
anomaly inflow due to chiral fermions on the intersection worldvolume
for families of branes using the usual descent
procedure~\cite{Green1996}, which is due to curvature of the spacetime
manifold $X$ itself. For this, we must explicitly use the $G$-spin$^c$
structure on $X$. The standard mathematical intuition behind this
correction is to modify the equivariant Chern character to an isometry
with respect to the natural bilinear pairings on equivariant K-theory
and Bredon cohomology with complex
coefficients~\cite{Minasian1997,Olsen2000}.

The natural sesquilinear pairing between two classes of complementary
degrees in complex Bredon cohomology, represented by closed differential
forms $\omega,\eta\in\Omega^\bullet_{G,{\rm cl}}(X;\complex)$, is given by
$([\omega],[\eta])_{\H_G}:=\int_X^G\,\overline{\omega}\wedge_G\eta$.
On the other hand, the natural quadratic form on fractional branes
defined by classes in equivariant K-theory, represented by complex
$G$-vector bundles $E,F\to X$, is the topological charge
$$\big([E]\,,\,[F]\big)_{\K_G}:= \mu_1\big([X,E^\vee\otimes F,\Id_X]\big)$$ of
eq.~(\ref{mugammadef}) in the untwisted sector corresponding to the
representation $\gamma=1:\complex[G]\to\complex$ induced from the
trivial representation of $G$. This quantity agrees with the natural
intersection form on boundary states computed as the $G$-invariant
Witten index over open string states suspended between
D-branes~\cite{Moore2004}, which counts the difference between the
number of positive and negative chirality Ramond ground states and
hence computes the required chiral fermion anomaly.

The two bilinear forms are related through the local index theorem
which provides a formula for $\Index_1\big([\Dirac_{E}^X]\big)$ in terms
of integrals of characteristic forms over the various singular strata
of the orbifold $[X/G]$. It reads~\cite{Bunke2007}
\bea \nonumber
\Index_1\big([\Dirac_{E}^X]\big)&=&
\sum_{[g]\in G^\vee}\,\frac{2^{d_g}}{|g|}~
\int_{X^g/Z_G(g)}\,\ch(g,E)\wedge\Todd\big(T_{X^g}\big) \\ \nonumber
&& \qquad \qquad \wedge~\int_{\bigwedge^\bullet
  N_{X^g}}~\frac{\STr_S\big(\Delta(g)\big)}
{\sqrt{\det\big(1-N(g)\big)\,\det\big(1-N(g)\exp(-F^{N_{X^g}}/2\pi\ii)
\big)}} \ ,
\eea
where $d_g=\dim(X)-\dim(X^g)$, $|g|$ is the order of the element $g\in
G$, and $N(g)$ denotes the action of $g\in G$ on the fibres of the
normal bundle $N_{X^g}$ to $X^g$ in $X$. The determinant is taken
over the normal bundle $N_{X^g}$ for each $g\in G$, implemented by a
Berezin-Grassmann integration over the exterior algebra bundle
$\bigwedge^\bullet N_{X^g}$. The form $\Delta(g)$, regarded as an
element of $\bigwedge^\bullet N_{X^g}$ under the symbol map, is the
action of $g$ on the fibres of the spinor bundle
$S|_{X^g}$, and $\STr_S$ is the supertrace over the endomorphism
bundle of $S$.

This formula can be thought of as an equivariant
localization of the usual Atiyah-Singer index density onto the
submanifolds $X^g\subset X$ of fixed points of the $G$-action on $X$,
with the determinants reflecting the ``Euler class'' contributions
from the non-trivial normal bundles to $X^g$. This is the anticipated
physical result arising from the closed string twisted Witten index,
computed as the partition function on the cylinder with the twisted
boundary conditions (\ref{twistbc}). Supersymmetry localizes the
computation onto zero modes of the string fields which are constant
maps to the submanifolds $X^h\subset X$, while modular invariance
requires a (weighted) sum over all twisted sectors. The index can be
rewritten using the equivariant characteristic classes defined as in
eqs.~(\ref{chCEdef}) and (\ref{ToddGdef}) to get
\beq
\Index_1\big([\Dirac_{E}^X]\big)=\int_X^G\,\ch^\complex(E)\wedge_G\,
{\Todd_G(T_X)}\wedge_G\,{\Euler_G(X)} \ ,
\label{Index1local}\eeq
where we have used $(T_X)^g=T_{X^g}$ and the element of
$\Omega_{G,{\rm cl}}^\bullet(X;\complex)$ given by
\bea
\Euler_G(X)&:=&\bigoplus_{[g]\in G^\vee}\,\frac{|g|}
{2^{d_g}\,|\,G^\vee\,|}~
\int_{\bigwedge^\bullet N_{X^g}}~\frac{\STr_S\big(\Delta(g)\big)}
{\sqrt{\det\big(1-N(g)\big)\,{\det}
\big(1-N(g)\exp(-F^{N_{X^g}}/2\pi\ii)\big)}} \nonumber\\ &&
\label{EulerXdef}\eea
defines a characteristic class in the complex Bredon
cohomology of $X$. Note that the integrands of eq.~(\ref{EulerXdef})
are formally similar to the Chern characters of the virtual bundles
(\ref{TNfixed}) above.

By using multiplicativity of the equivariant Chern character
(\ref{chCEdef}) to write $$\ch^\complex\big(E^\vee\otimes
F\big)=\overline{\ch^\complex(E)}\,\wedge_G\,\ch^\complex(F)\ , $$ it
follows from eq.~(\ref{Index1local}) that the map (\ref{chCiso}) can
be turned into an isometry by ``twisting'' it with the closed
differential form $\sqrt{{\Todd_G(T_X)}\wedge_G\Euler_G(X)}$, which
when pulled back along $f:W\to X$ gives the required anomaly
cancelling form on the brane worldvolume. This should then be combined
with the correction $\Todd_G(\nu)^{-1}$ contributed by the
$\zed_2$-graded bundle (\ref{normbun}) to the Riemann-Roch formula
(\ref{GRRthm}). Then under the various conditions spelled out in
Section~\ref{EqRRformula} above, the required map $\mathcal{R}$ in
eq.~(\ref{coupling}) from $G$-\spinc bordism classes $[(W,f)]$ to
$\Omega_{G,{\rm cl}}^{\rm even}(W;\complex)$ is given by
\bea
&& \mathcal{R}(W,f)\=\frac{f^*\sqrt{\Todd_G(T_X)\wedge_G\,
{\Euler_G(X)}}}
{\Todd_G(\nu)}\=\Todd_G(T_W)\wedge_G\,f^*\sqrt{\frac{\Euler_G(X)}{
\Todd_G(T_X)}} \ .
\label{calRWf}\eea
The main new ingredient in this formula is the contribution from the
fixed point submanifolds $X^g\subset X$, particularly their normal
bundle characteristic classes (\ref{EulerXdef}). This corrects
previous topologically trivial, flat space formulas, even for
$G$-fixed worldvolumes $W$ (see ref.~\cite{GarciaCompean1998} for
example). Note that when the $G$-action on $X$ is trivial, one has
$\Todd_G(T_X)=\Todd(T_X)$ and $\Euler_G(X)$ is constant.

\newsection{Orbifold differential K-theory\label{OrbdiffKG}}

The main drawback of the delocalized theory of the previous section is
that it cannot incorporate the interesting effects of torsion, which
have been one of the driving forces behind the K-theory description of
D-branes and Ramond-Ramond fluxes, and as such it is desirable to have
a description which utilizes the full $R(G)$-module
$\K_G^\bullet(X)$. In this section we will develop an extension of
differential K-theory as defined in~ref.~\cite{Hopkins2005} to
incorporate the case of a $G$-manifold. These are the groups needed to
extend the analysis of the previous section to topologically
non-trivial, real-valued Ramond-Ramond fields. While we do not have a
formal proof that this is a proper definition of an equivariant
differential cohomology theory, we will see that it matches exactly
with expectations from string theory on orbifolds and also has the
correct limiting properties. For this reason we dub the theory that we
define `orbifold' differential K-theory, defering the terminology
`equivariant' to a more thorough treatment of our model (we discuss
this in more detail in Section~\ref{OrbdiffK} below). In the
following we will spell out the definition of differential K-theory
groups. The crux of the extensions of these definitions to the
equivariant setting will be explicit constructions of the exact
sequences they are described by, which are important for physical
considerations. We will determine concrete realizations of the various
morphisms involved, which are given in a general but abstract
framework in~ref.~\cite{Hopkins2005}.\footnote{An explicit proof of
  these exact sequences has been given recently in
  ref.~\cite{Carey2007} using the geometric description of
  differential K-cocyles in terms of bundles with connection. Our
  proof is more general, but less geometric, as it exploits the
  realization in terms of maps to classifying spaces.} See
refs.~\cite{Freed2000,Freed2006a} for an introduction to differential
cohomology theories and their applications in physics.

\subsection{Differential cohomology theories}\label{gendiff}

Differential K-theory of a manifold is an enrichment of its K-theory,
which encodes global topological information, with local geometric
information contained in the de~Rham complex. Consider a
(generalized) cohomology theory $\E^\bullet$ defined on the
category of smooth manifolds $X$ along with a canonical map
\beq\label{canmapE}
\varphi\,:\,\E^{\bullet}(X)~\longrightarrow~
{\H(X;\mathbb{R}\otimes{\pi_{-\bullet}\E})}^{\bullet}
\eeq
which induces an isomorphism
\begin{displaymath}
\E^{\bullet}(X)\otimes\mathbb{R}\cong
{\H(X;\mathbb{R}\otimes{\pi_{-\bullet}\E})}^{\bullet} \ ,
\end{displaymath}
\emph{i.e.}, the image of $\varphi$ is a full lattice and its kernel
is the torsion subgroup of $\E^\bullet(X)$. Then one can define
\emph{differential $\E$-theory} as the cohomology
theory $\check{\E}{}^\bullet$ which lifts $\E^\bullet$ via the
pullback square
\begin{equation}\label{diffdiag}
\xymatrix{\check{\E}^{\bullet}(-) ~\ar[r]~ \ar[d] & ~
\Omega_{\rm cl}(-;\mathbb{R}\otimes{\pi_{-\bullet}}\E)^{\bullet}
\ar[d]\\
\E^{\bullet}(-)
~\ar[r]_{\!\!\!\!\!\!\!\!\!\!\!\!\!\!\!\!\!\!\!\varphi}~ & ~
{\H(-;\mathbb{R}\otimes{\pi_{-\bullet}\E})}^{\bullet}} \ ,
\end{equation}
where $\Omega_{\rm cl}(X;\mathbb{R}\otimes{\pi_{-\bullet}}\E)^q$
denotes the real vector space of closed $\E^\bullet(\pt;\real)$-valued
differential forms $\omega$ on $X$ of total degree $q$, and the right
vertical map in the commutative diagram (\ref{diffdiag}) is given by
sending $\omega$ to its de~Rham cohomology class $[\omega]_{\rm
  dR}$. A class in $\check{\E}^{q}(X)$ is given by a pair
$(\xi,\omega)$, with $\xi\in\E^{q}(X)$ such that
\beq
\varphi(\xi)=[\omega]_{\rm dR} \ ,
\label{varphiomega}\eeq
together with an \emph{isomorphism} that realizes the equality
(\ref{varphiomega}) explicitly in
${\H(X;\mathbb{R}\otimes{\pi_{-\bullet}\E})}^{q}$.

In their foundational paper~\cite{Hopkins2005} Hopkins and Singer
define the differential $\E$-theory associated to any generalized
cohomology theory $\E^\bullet$, and prove its naturality and homotopy
properties. This is done by generalizing the concept
of \emph{function space} in algebraic topology, which can be used to
define the cohomology of a space, to that of \emph{differential
  function space}, where here the term ``differential'' typically means
something different from differentiable or smooth. Because of this,
the differential $\E$-groups are defined in an abstract way and
are difficult to realize explicitly. An explicit construction for
differential K-theory is given in~ref.~\cite{Hopkins2005}. In the following
we will go through this construction in some detail. This will be our
starting point to give a definition of the differential cohomology
theory associated to equivariant K-theory $\K_G^\bullet$, which will
reduce to ordinary differential K-theory in the case where the group
$G$ is the trivial group. The validity of our definition will be
confirmed by explicit construction of the exact sequences, that will
also be important in our later physical applications.

\subsection{Differential K-theory}\label{diffK}

Throughout $X$ will denote a smooth manifold. Let $\Fred$ be the
algebra of Fredholm operators on a separable Hilbert space. Recall
that $\Fred$ is a classifying space for complex K-theory through the
isomorphism
\begin{displaymath}
[X,\Fred]~\xrightarrow{\text{Index}(-)}~\K^{0}(X)
\end{displaymath}
which associates to any map $f:X\to\Fred$ the index bundle of $f$ in
$\K^{0}(X)$. Let
\begin{displaymath}
u~\in~{Z^{\rm even}(\Fred;\mathbb{R})}
\end{displaymath}
be a cocycle of even degree which represents the Chern character of
the universal bundle. Then for any map $f:X\to{\Fred}$
representing a complex vector bundle $E\to X$, the pullback $f^{*}u$
is a representative of $\ch(E)$ in $\H^{\rm even}(X;\mathbb{R})$.

The differential K-theory group $\check{\K}^{0}(X)$ is defined to be
the set of triples $(c,h,\omega)$, where $c:X\to\Fred$,
$\omega$ is a closed differential form in $\Omega_{\rm cl}^{\rm
  even}(X;\mathbb{R})$, and $h$ is a cochain in $C^{{\rm
    even}-1}(X;\mathbb{R})$ satisfying
\beq
\delta{h}=\omega-c^{*}u \ .
\label{hcochain}\eeq
The cochain $h$ in eq.~(\ref{hcochain}) is precisely the isomorphism
refered to in Section~\ref{gendiff} above, which is invisible in the
cohomology groups, and in this equation the closed differential form
$\omega$ is regarded as a cochain under the de~Rham map
$\omega\mapsto\int_{(-)}\,\omega$. Two triples
$(c_{0},h_{0},\omega_{0})$ and $(c_{1},h_{1},\omega_{1})$ are
equivalent if there exists a triple $(c,h,\omega)$ on $X\times[0,1]$,
with $\omega=\omega(t)$ constant along $t\in[0,1]$, such that
\begin{equation}
(c,h,\omega)\big|_{t=0}\=(c_{0},h_{0},\omega_{0}) \qquad \mbox{and}
\qquad (c,h,\omega)\big|_{t=1}\=(c_{1},h_{1},\omega_{1}) \ .
\label{chomegaequiv}\end{equation}

The equivalence (\ref{chomegaequiv}) can be rephrased~\cite{Freed2000}
by requiring that there exists a map
$$F\,:\,{X}\times[0,1]~\longrightarrow~\Fred$$ and a differential form
$\sigma\in\Omega^{{\rm even}-2}(X;\mathbb{R})$ such that
\begin{eqnarray}
F\big|_{t=0}&=&c_{0} \ , \nonumber\\[4pt]
F\big|_{t=1}&=&c_{1} \ , \nonumber\\[4pt]
\omega_{1}&=&\omega_{0} \ , \nonumber\\[4pt]
h_{1}&=&h_{0}+\pi_{*}\,F^{*}u+\dd\sigma
\label{equivaltrels}\end{eqnarray}
where $\pi:{X}\times[0,1]\to{X}$ is the natural projection. The relations
(\ref{equivaltrels}) say that $c_{0}$ and $c_{1}$ are homotopic maps,
hence they represent the same class in $\K^{0}(X)$, and that the
cochains $h_{0}$ and $h_{1}$ are related by the homotopy that connects
the maps $c_0$ and $c_1$. We also see that the closed form $\omega$
completely characterizes the triple $(c,h,\omega)$.

Borrowing terminology used in representing classes in the differential
cohomology $\check{\H}^{2}(X)$ as principal ${\rm U}(1)$-bundles with
connection, the class $[c]\in{\K^{0}(X)}$ is called the
\emph{characteristic class}, the closed differential form
$\omega$ is called the \emph{curvature}, while the cochain $h$ is
called the \emph{holonomy} of the triple. From the
defining property of the universal cocycle $u$ and
eq.~(\ref{hcochain}) it follows that
\begin{displaymath}
\ch\big([c]\big)=[\omega]_{\rm dR} \ .
\end{displaymath}
Thus the cohomology class represented by the curvature $\omega$ lies
in the image of the (real) Chern character, which is a lattice of
maximal rank inside the cohomology group with real coefficients.

Let us now define the differential K-theory group
$\check{\K}^{-1}(X)$. Recall that the classifiying space for $\K^{-1}$
is the based loop space $\Omega\Fred$. Thus we need a cocycle
\begin{displaymath}
u^{-1}~\in~{Z^{\rm odd}(\Omega\Fred;\mathbb{R})}
\end{displaymath}
which represents the universal odd Chern character. Consider the
evaluation map
\begin{displaymath}
\ev\,:\,{\Omega\Fred}\times\S^1~\longrightarrow~\Fred \ .
\end{displaymath}
Then the cocycle $u^{-1}$ is defined by
\begin{displaymath}
u^{-1}=\Pi_{*}~\ev^{*}u
\end{displaymath}
where $\Pi:\Omega\Fred\times\S^1\to\Omega\Fred$ is the natural
projection. In fact, $u^{-1}$ can be defined as the \emph{slant product} of
$\ev^{*}u$ with the fundamental class of the circle $\S^1$,
\emph{i.e.}, by integrating the cocycle $\ev^{*}u$ along $\S^1$.
As above, a class in $\check{\K}^{-1}(X)$ is represented by a triple
$(c,h,\omega)$, where $c:X\to\Omega\Fred$, $\omega$ is a closed
differential form in $\Omega_{\rm cl}^{{\rm even}-1}(X;\mathbb{R})$,
and $h$ is a cocycle in $C^{{\rm even}-2}(X;\mathbb{R})$ satisfying
$$\delta{h}=\omega-c^{*}u^{-1} \ . $$ Two triples
$(c_{0},h_{0},\omega_{0})$ and $(c_{1},h_{1},\omega_{1})$ are
equivalent if a relation like that in eq.~(\ref{chomegaequiv}) holds.

In an analogous way one can define the higher differential K-theory
groups $\check{\K}^{-n}(X)$ for any positive integer $n$. One can
prove that Bott periodicity in complex K-theory induces a periodicity
in differential K-theory given by
\begin{displaymath}
\check{\K}^{-n}(X)\cong\check{\K}^{-n-2}(X) \ .
\end{displaymath}
This periodicity enables one to define the higher differential
K-theory groups in positive degrees. The group composition law on
$\check{\K}^{-n}(X)$ is given by
\begin{displaymath}
(c_{1},h_{1},\omega_{1})+(c_{2},h_{2},\omega_{2}):=
(c_{1}\cdot{c_{2}},h_{1}+h_{2},\omega_{1}+\omega_{2})
\end{displaymath}
where the dot denotes pointwise multiplication. The identity element
is given by the triple $(\,\underline{c}\,,0,0)$, where throughout
$\underline{c}$ denotes any map which is homotopic to the (constant)
identity map. To allow for the presence of the characteristic class
$\omega$ in the definition, the abelian groups $\check{\K}^{-n}(X)$ are
generally infinite-dimensional. The definition of
$\check{\K}^{-n}(X)$ depends, up to homotopy type and cohomology
class, on the choice of classifying space and of universal cocycle
$u$~\cite{Hopkins2005}.

A key property is the exact sequences which characterize the
differential K-theory groups $\check{\K}^{-n}(X)$ for any $n\in\zed$
as extensions of topological K-theory by certain groups of
differential forms. In each case the differential K-theory group
$\check{\K}^\bullet(X)$ is an extension of the setwise fibre product
$$A_\K^\bullet(X)=\big\{(\xi,\omega)\in
\K^\bullet(X)\times\Omega_{\rm
  cl}(X;\real\otimes\pi_{-\bullet}\K)^\bullet~\big|~\ch(\xi)=
[\omega]_{\rm dR}\big\}$$ by the torus of
topologically trivial flat fields given by
\beq
0~\longrightarrow~\K^{\bullet-1}(X)\otimes\real/\zed~\longrightarrow~
\check{\K}^\bullet(X)~\longrightarrow~A_\K^\bullet(X)~
\longrightarrow~0 \ .
\label{diffKexseqgen}\eeq
This will be useful below when we define
equivariant differential K-theory. 

As in the case of topological K-theory, there are geometrical
realizations of the groups $\check{\K}^{-n}(X)$~\cite{Freed2000}. In
particular, a class in $\check{\K}^{0}(X)$ can be represented by a
complex vector bundle $E\to X$ equiped with a connection
$\nabla^E$. To the pair $(E,\nabla^E)$ we can associate the triple
$(f,\eta,\omega)$, where $f:X\to{BU}$ is a map which classifies the
bundle $E$, $\omega=\ch(\nabla^E)$ is a Chern-Weil representative of
the Chern character of $[E]$, and $\eta$ is a Chern-Simons form such
that $\dd\eta=f^{*}\omega_{BU}-\omega$ with
$\omega_{BU}=\ch(\nabla_{BU})$ the Chern character form of the
universal bundle $E_{BU}\to BU$ with respect to the universal
connection $\nabla_{BU}$ on $E_{BU}$.

In the following we will define abelian groups that can be thought of as a
natural generalization of the differential K-theory of a manifold $X$
acted upon by a (finite) group $G$. In this case one cannot employ the
powerful machinery developed in~ref.~\cite{Hopkins2005}, as the equivariant
K-theory $\K^\bullet_{G}(X)$ is not a cohomology theory defined on the
category of manifolds. Instead, we will take as our starting point the
explicit definition of the groups $\check{\K}^{-n}(X)$ given above,
and naturally generalize it to groups $\check{\K}_{G}^{-n}(X)$ which
accomodate the action of the group $G$ in such a way that when $G={e}$
is trivial, one has $\check{\K}_{G}^{-n}(X)\cong{\check{\K}^{-n}(X)}$.
\subsection{Orbifold differential forms\label{Orbdiffs}}

We want to generalize the commutative diagram (\ref{diffdiag}) to the
case in which our underlying \emph{cohomology} theory $\E^\bullet(X)$
is the equivariant K-theory $\K^\bullet_{G}(X)$. We first need
a homomorphism $\varphi$ from equivariant K-theory to a target
cohomology theory which induces an isomorphism when tensored over the
reals. For this, we will use the Chern character constructed in
Section~\ref{section2} with the target cohomology theory given by
Bredon cohomology. Then we need a refinement of this cohomology which
reduces to the de~Rham complex when the group $G$ is trivial. This
complex may be thought of as the complex of \emph{differential forms
  on the orbifold $X/G$}. For this purpose, we will use the
differential complex $(\Omega^\bullet_{G}(X;\real),\dd_G)$ defined in
Section~\ref{clstring}. Using the delocalization formula
(\ref{Bredonsplit}) one shows that this complex is a refinement for
Bredon cohomology with real coefficients, in the case when $G$ is a
finite group. It comes equiped with a natural product defined in
eq.~(\ref{orbdiffprod}). As a refinement for Bredon cohomology, the
complex $\Omega_{G}^\bullet(X;\mathbb{R})$ gives a well-defined map
\begin{displaymath}
\omega~\longmapsto~{[\omega]_{G-{\rm
      dR}}}\in\H\big(\Omega_{G}^{\bullet}(X;
\mathbb{R})\,,\,\dd_G\big)\cong\H_G^\bullet\big(X\,;\,\real\otimes
\underline{R}(-)\big)
\end{displaymath}
and reduces to the usual de~Rham complex of differential forms in the
case $G={e}$.
 
There is an alternative complex one could construct which is
``manifestly'' equivariant, in the sense that its functoriality
property over the category of groups is transparent. It can also be
generalized to the case in which $G$ is an infinite discrete
group. However, it is not evident how to define a ring
structure on this complex, and its physical relation to Ramond-Ramond
fields is not clear. We include its definition here for
completeness.\footnote{We are grateful to W.~L\"uck for suggesting
  this construction to us.} See Appendix~A and~ref.~\cite{Dieck1987}
for the relevant definitions concerning modules over a functor
category and their tensor products.

Starting from the real representation ring functor
$\underline{\mathbb{R}}(-)=\mathbb{R}\otimes{\underline{R}}(-)$ over
the orbit category $\ocat{G}$, there is a natural map of real vector
spaces
\beq
\underline{\mathbb{R}}(-)
\otimes_{\mathbb{R}\ocat{G}}{\underline{C}^\bullet
(X;\mathbb{R})}~\longrightarrow~
{\Hom_{\real\ocat{G}}\big(\,\underline{C}\,_\bullet(X;
\mathbb{R})\,,\,\underline{\mathbb{R}}(-)\big)}
\label{realrepmap}\eeq
where $\underline{C}^\bullet(X;\mathbb{R})$ is the left
$\mathbb{R}\ocat{G}$-module obtained by dualizing the functor
$$\underline{C}\,_\bullet(X;\mathbb{R}):=
\real\otimes\underline{C}\,_\bullet(X)$$ defined in
Section~\ref{bredon}. Note that both
$\underline{C}\,_\bullet(X;\mathbb{R})$ and
$\underline{\mathbb{R}}(-)$, being contravariant functors, are right
$\mathbb{R}\ocat{G}$-modules. The map (\ref{realrepmap}) is
given on decomposable elements as
\begin{displaymath}
\lambda\otimes{f}~\longmapsto~\big(\sigma\mapsto{f(\sigma)_*
(\lambda)}\big)
\end{displaymath}
and it is an isomorphism of real vector spaces.\footnote{In general,
  to have an isomorphism one has to require the $G$-manifold $X$ to be
  cocompact and proper.}

Define the differential complex
\begin{displaymath}
\Omega_{G}^\bullet\big(X\,;\,\mathbb{R}\otimes\underline{R}(-)\big):=
\underline{\mathbb{R}}(-)
\otimes_{\mathbb{R}\ocat{G}}{\underline{\Omega}^\bullet(X;\mathbb{R})}
\end{displaymath}
where $\underline{\Omega}^\bullet(X;\mathbb{R})$ is the functor
$\ocat{G}\to\cat{Ab}$ given by $\underline{\Omega}^\bullet(X;\mathbb{R}):
G/H\mapsto\Omega^\bullet(X^{H};\mathbb{R})$, and with derivation
$\dd_{\rm orb}$ induced by the exterior derivative $\dd$. Since the
de~Rham map induces a chain homotopy equivalence of left
$\mathbb{R}\ocat{G}$-complexes $\underline{C}^\bullet(X;\mathbb{R})\to
\underline{\Omega}^\bullet(X;\mathbb{R})$, there is a $G$-equivariant
chain homotopy equivalence
\begin{displaymath}
\underline{\mathbb{R}}(-)\otimes_{\mathbb{R}\ocat{G}}
{\underline{C}^\bullet(X;\mathbb{R})}~\longrightarrow~
\underline{\mathbb{R}}(-)\otimes_{\mathbb{R}\ocat{G}}
{\underline{\Omega}^\bullet(X;\mathbb{R})} \ .
\end{displaymath}
Combined with the isomorphism (\ref{realrepmap}) we can thus conclude
\begin{displaymath}
\H^{\bullet}_{G}\big(X\,;\,\mathbb{R}\otimes\underline{R}(-)\big)
\cong{\H\big(\Omega_{G}^{\bullet}(X;\mathbb{R}\otimes\underline{R}(-))\,,\,
\dd_{\rm orb}\big)} \ .
\end{displaymath}
If one chooses to work with this complex, then the construction of
orbifold differential K-theory groups given in Section~\ref{OrbdiffK}
below can be carried through in exactly the same way. But since the
two complexes $\Omega^\bullet_{G}(X;\real)$ and
$\Omega^\bullet_{G}(X;\real\otimes\underline{R}(-))$ are in general
\emph{not} isomorphic, the two differential cohomology theories
obtained will be generically distinct.

\subsection{Orbifold differential K-groups\label{OrbdiffK}}

Having sorted out all the ingredients necessary to make sense of a
generalization of the diagram~(\ref{diffdiag}), we will now define the
differential equivariant K-theory groups
$\check{\K}_{G}^{-n}(X)$. First, let us recall some further basic
facts about equivariant K-theory. Similarly to ordinary K-theory, a
model for the classifying space of the functor $\K_{G}^{0}$ is given
by the $G$-algebra of Fredholm operators $\Fred_{G}$ acting on a
separable Hilbert space which is a representation space for $G$ in
which each irreducible representation occurs with infinite
multiplicity~\cite{Atiyah1967}. Then there is an isomorphism
\begin{displaymath}
\K^{0}_{G}(X)\cong\left[X,\Fred_{G}\right]_{G}
\end{displaymath}
where $[-,-]_{G}$ denotes the set of equivalence classes of
$G$-homotopic maps, and the isomorphism is given by taking the
index bundle.

There is also a universal space $\text{Vect}^{n}_{G}$, equiped with a
universal $G$-bundle $\widetilde{E}^{n}_{G}$, such that
$\left[X,\text{Vect}^{n}_{G}\right]_{G}$ corresponds to the set of
isomorphism classes of $n$-dimensional $G$-vector bundles over
$X$~\cite{Luck1998}. These spaces are constructed as follows. Let
$\cat{E}G$ be the category whose objects are the elements of $G$ and
with exactly one morphism between each pair of objects. The geometric
realization (or nerve) of the set of isomorphism classes in $\cat{E}G$
is, as a simplicial space, the total space of the classifying principal
$G$-bundle $EG$. With $\Vect^n(\pt)$ the category of $n$-dimensional
complex vector spaces $V\cong\complex^n$, the universal space
$\text{Vect}^{n}_{G}$ is defined to be the geometric realization of
the functor category $[\cat{E}G,\Vect^n(\pt)]$ (see Appendix~A). The
universal $n$-dimensional $G$-vector bundle $\widetilde{E}_G^n$ is
then defined as
\beq
\widetilde{E}_G^n=\widetilde{\Vect}{}_G^n
\times_{{\rm GL}(n,\complex)}
\,\complex^n~\longrightarrow~\Vect_G^n \ ,
\label{univGbun}\eeq
where $\widetilde{\Vect}{}_G^n$ is the geometric realization of the
functor category defined as above but with $\Vect^n(\pt)$ replaced
with the category consisting of objects $V$ in $\Vect^n(\pt)$ together
with an oriented basis of $V$.

We assume sufficient regularity conditions on the
infinite-dimensional spaces $\Fred_{G}$ and
$\widetilde{E}^{n}_{G}$. Since $\Fred_G$ and the group completion
$\Omega B\Vect_G$ are both classifying spaces for equivariant
K-theory, they are $G$-homotopic and we can thereby choose a cocycle
$$u_{G}~\in~{Z_{G}^{\rm even}(\Fred_{G};\real)}$$ representing the
equivariant Chern character of the universal $G$-bundle
(\ref{univGbun}). Generally, the group $Z_{G}^{\rm even}(X;\real)$ is
the subgroup of closed cocycles in the complex
\beq
C_{G}^{{\rm even}}(X;\mathbb{R}):=\bigoplus_{[g]\in G^\vee}\,
C^{\rm even}\big(X^{g}\,;\,
\mathbb{R}\big)^{Z_{G}(g)}
\label{CGevenXR}\eeq
which, by the results of Section~\ref{DelocBredon}, is a cochain model
for the Bredon cohomology group $\H_{G}^{{\rm even}-1}(X;\mathbb{R}
\otimes{\underline{R}}(-))$. The equivariant Chern character is
understood to be composed with the delocalizing isomorphism of
Section~\ref{DelocBredon}. Since it is a natural homomorphism, for any
$G$-bundle $E\to X$ classified by a $G$-map $f:X\to\Fred_G$ one has
\begin{displaymath}
\ch_{X}\big([E]\big)=\big[f^{*}u_{G}\big] \ .
\end{displaymath}

\begin{definition}
The \emph{orbifold differential K-theory}
$\check{\K}^{0}_{G}(X)$ of the (global) orbifold $[X/G]$ is the group
of triples $(c,h,\omega)$, where $c:X\to\Fred_{G}$ is a $G$-map,
$\omega$ is an element in $\Omega_{G,{\rm cl}}^{\rm
  even}(X;\mathbb{R})$, and $h$ is an element
in $C_{G}^{{\rm even}-1}(X;\mathbb{R})$
satisfying
\beq\label{Eqdeltah}
\delta{h}=\omega-c^{*}u_{G} \ .
\eeq
Two triples $(c_{0},h_{0},\omega_{0})$ and $(c_{1},h_{1},\omega_{1})$
are said to be \emph{equivalent} if there exists a triple
$(c,h,\omega)$ on $X\times[0,1]$, with the group $G$ acting trivially
on the interval $[0,1]$ and with $\omega$ constant along $[0,1]$, such
that
$$
(c,h,\omega)\big|_{t=0}\=(c_{0},h_{0},\omega_{0}) \qquad \mbox{and}
\qquad (c,h,\omega)\big|_{t=1}\=(c_{1},h_{1},\omega_{1}) \ .
$$
\label{orbdiffKdef}\end{definition}

In eq.~(\ref{Eqdeltah}) the closed orbifold differential form $\omega$
is regarded as an orbifold cochain in the complex (\ref{CGevenXR}) by
applying the de~Rham map componentwise on the fixed point submanifolds
$X^g$, $g\in G$. The higher orbifold differential K-theory groups
$\check{\K}^{-n}_G(X)$ are defined analogously to those of
Section~\ref{diffK} above. To confirm that this is an appropriate
extension of the ordinary differential K-theory of $X$, we should show
that the orbifold differential K-theory groups fit into exact
sequences which reduce to those given by
eq.~(\ref{diffKexseqgen}) when $G$ is taken to be the trivial
group. For this, we define the group
\begin{displaymath}
A_{\K_G}^{0}(X):=\left\{(\xi,\omega)\in \K_G^{0}(X)\times\Omega_{G,{\rm
    cl}}^{\rm even}(X;\mathbb{R}
)~\big|~ \ch_X(\xi)=[\omega]_{G-{\rm dR}}\right\} \ .
\end{displaymath}

\begin{theorem}
The orbifold differential K-theory group $\check{\K}^{0}_{G}(X)$
satisfies the exact sequence
\bea
&&0~\longrightarrow~{\frac{\H_{G}^{{\rm even}-1}\big(X\,;\,\mathbb{R}
\otimes{\underline{R}}(-)\big)}{\ch_X\big({\K_{G}^{-1}(X)}\big)}}~
\longrightarrow~{\check{\K}_{G}^{0}(X)}~\longrightarrow~
{A^{0}_{\K_{G}}(X)}~\longrightarrow~{0} \ .
\label{exseqeqgen}\eea
\label{exseqthm}\end{theorem}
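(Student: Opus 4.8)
The plan is to build the two outer maps of the sequence~(\ref{exseqeqgen}) by hand and then verify exactness at each of the three non-trivial positions; the argument is the orbifold transcription of the non-equivariant sequence~(\ref{diffKexseqgen}) and draws only on ingredients already established: that the complex $C^{\bullet}_{G}(X;\real)$ of eq.~(\ref{CGevenXR}) is a cochain model for $\H^{\bullet}_{G}(X;\real\otimes\underline{R}(-))$ and that the closed orbifold forms in $\Omega^{\bullet}_{G,{\rm cl}}(X;\real)$ embed into it, via the componentwise de~Rham map, as cocycles representing the classes $[\omega]_{G-{\rm dR}}$ (Section~\ref{DelocBredon}); that $\Fred_{G}$ classifies $\K^{\bullet}_{G}$ with the universal cocycle $u_{G}$ pulling back to $\ch_{X}$ of the classified bundle; and that $\ch_{X}\otimes\real$ is an isomorphism (Theorem~\ref{eqChernthm}), which is exactly what makes the quotient in~(\ref{exseqeqgen}) the correct torus of flat fields. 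Normalizing $u_{G}$ to vanish on the basepoint (chosen to be an invertible operator, so its pullback along a constant map is zero even in degree $0$), I would then define the homomorphism $p\colon\check{\K}^{0}_{G}(X)\to A^{0}_{\K_{G}}(X)$ by $p(c,h,\omega)=([c],\omega)$ with $[c]\in\K^{0}_{G}(X)=[X,\Fred_{G}]_{G}$, observing that eq.~(\ref{Eqdeltah}) forces $[\omega]_{G-{\rm dR}}=[c^{*}u_{G}]=\ch_{X}([c])$ so the image does land in $A^{0}_{\K_{G}}(X)$; and the homomorphism $\iota$ on the subgroup of triples of the shape $(\underline{c},h,0)$ — for which $\delta h=0$, so $h$ is a cocycle — by $(\underline{c},h,0)\mapsto[h]$ followed by the quotient by $\ch_{X}(\K^{-1}_{G}(X))$. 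Both $p$ and $\iota$ are homomorphisms because addition of triples is componentwise.

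Exactness at $A^{0}_{\K_{G}}(X)$ (surjectivity of $p$) is quick: given $(\xi,\omega)$, pick a $G$-map $c\colon X\to\Fred_{G}$ with $[c]=\xi$; then $[c^{*}u_{G}]=\ch_{X}(\xi)=[\omega]_{G-{\rm dR}}$, so the cocycles $\omega$ and $c^{*}u_{G}$ of the complex~(\ref{CGevenXR}) differ by a coboundary $\delta h$, and $(c,h,\omega)$ is a preimage.

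Exactness in the middle and injectivity of $\iota$ both require ``unwinding'' the equivalence relation of Definition~\ref{orbdiffKdef} into the concrete homotopy-and-transgression relations that are eqs.~(\ref{equivaltrels}) non-equivariantly — namely, that a triple on $X\times[0,1]$ with $\omega$ constant in $t$ is the same data as a $G$-homotopy $F$ of the classifying maps together with $h_{1}=h_{0}+\pi_{*}F^{*}u_{G}+\dd\sigma$. Granting this: if $(c,h,\omega)\in\ker p$ then $\omega=0$ and $c$ is $G$-homotopic to $\underline{c}$, and feeding such an $F$ into the unwound relation yields an equivalent triple $(\underline{c},\tilde h,0)$ with $\tilde h$ a cocycle, so $\ker p=\im\iota$ (the reverse inclusion is immediate since $[\underline{c}]=0$ and $\omega=0$). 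For injectivity, if $(\underline{c},h,0)\sim(\underline{c},0,0)$ the witnessing triple $(C,H,0)$ on $X\times[0,1]$ has $C$ restricting to $\underline{c}$ at both ends, hence $C$ is the adjoint of a $G$-map $X\times\S^{1}\to\Fred_{G}$ defining a class $\zeta\in\K^{-1}_{G}(X)$; integrating $\delta H=-C^{*}u_{G}$ over the interval and applying Stokes stratum by stratum on $X^{g}\times[0,1]$ gives $h=\pm\pi_{*}C^{*}u_{G}+\delta(\pi_{*}H)$, and $\pi_{*}C^{*}u_{G}$ is precisely the slant product representing $\ch_{X}(\zeta)$ — built from $u_{G}$ exactly as $u^{-1}$ was built from $u$ via $\ev$ in Section~\ref{diffK} — so $[h]\in\ch_{X}(\K^{-1}_{G}(X))$. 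Conversely every class in $\ch_{X}(\K^{-1}_{G}(X))$ arises this way, pinning down the kernel.

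The main obstacle is exactly the step flagged above: justifying the unwound form of the equivalence relation in the orbifold setting, i.e. that fibre integration $\pi_{*}$ over $[0,1]$ is defined on the orbifold cochain complex $C^{\bullet}_{G}(X\times[0,1];\real)=\bigoplus_{[g]\in G^\vee}C^{\bullet}(X^{g}\times[0,1])^{Z_{G}(g)}$, obeys Stokes there, and carries the cochain-level transgression formula for $h$ against the universal cocycle $u_{G}$. This is where the regularity hypotheses on $\Fred_{G}$ and the fixed-point splitting of Section~\ref{DelocBredon} are used: the complex~(\ref{CGevenXR}), the componentwise de~Rham map, and $\pi_{*}$ all decompose as direct sums over $[g]\in G^\vee$ of the corresponding ordinary constructions on $X^{g}/Z_{G}(g)$, so the non-equivariant transgression calculus — carried out geometrically in ref.~\cite{Carey2007} and sketched after eqs.~(\ref{equivaltrels}) above — applies on each stratum and reassembles $G$-equivariantly, with no new phenomenon beyond the bookkeeping of the $Z_{G}(g)$-invariant pieces.
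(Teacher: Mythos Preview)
Your proposal is correct and follows essentially the same route as the paper's proof: define the projection $p$ (the paper's $f$) to $A^{0}_{\K_{G}}(X)$, identify its kernel with classes represented by triples $(\underline{c},h,0)$, and compute the ambiguity in $[h]$ via the transgression $\pi_{*}F^{*}u_{G}$ of a $G$-homotopy $F\colon X\times\S^{1}\to\Fred_{G}$, which the paper likewise identifies with $\ch_{X}(\K^{-1}_{G}(X))$ through the slant-product construction of $u_{G}^{-1}$. Your account is more explicit than the paper's about justifying the orbifold Stokes/transgression calculus through the fixed-point splitting~(\ref{CGevenXR}) --- the paper simply invokes eq.~(\ref{equivaltrels}) equivariantly --- but note that your $\iota$ should be phrased as a map \emph{into} $\check{\K}^{0}_{G}(X)$ (as the paper's $g$ is), not out of it, for the statement ``$\ker p=\im\iota$'' to parse.
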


\begin{proof}
Consider the subgroup of $\H_G^{{\rm
    even}-1}(X;\mathbb{R}\otimes\underline{R}(-))$ defined as the
image of the equivariant K-theory group $\K_G^{-1}(X)$ under the Chern
character $\ch_X$. It consists of Bredon cohomology classes of the
form $[\tilde{c}^{*}u_G^{-1}]$, where
$\tilde{c}:X\to{\Omega\Fred_G}$. There is a surjective map
\begin{eqnarray*}
f\,:\,\check{\K}_G^{0}(X) &\longrightarrow& A_{\K_G}^{0}(X) \\
\big[(c,h,\omega)\big] &\longmapsto& \big([c]\,,\,\omega\big)
\end{eqnarray*}
which is a well-defined homomorphism, \emph{i.e.}, it does not depend
on the chosen representative of the orbifold differential K-theory
class. By definition, the kernel of $f$ consists of triples of the form
$(\,\underline{c}\,,h,0)$. We also define the map
\begin{eqnarray*}
g\,:\,\H^{{\rm even}-1}_G\big(X\,;\,\mathbb{R}\otimes\underline{R}(-)
\big)&\longrightarrow& {\check{\K}_G^{0}(X)}\\
\left[h\right] &\longmapsto& \big[(\,\underline{c}\,,h,0)\big] \ ,
\end{eqnarray*}
which is a well-defined homomorphism because the class
$\left[(\,\underline{c}\,,h,0)\right]$ depends only on the Bredon
cohomology class $[h]\in{\H_G^{{\rm
      even}-1}(X;\mathbb{R}\otimes\underline{R}(-))}$.
Then by construction one has $\im({g})=\ker({f})$.

The homomorphism $g$ is not injective. To determine the kernel of $g$, we
use the fact that the zero element in $\check{\K}^{0}_G(X)$ can be
represented as
\begin{displaymath}
\big[(\,\underline{c}\,,0,0)\big]=
\big[(\,\underline{c}\,,\pi_{*}\,F^{*}u_G+\dd_{G}\sigma,0)\big]
\end{displaymath}
with $F:X\times{\S^1}\to\Fred_G$ and $\sigma\in\Omega^{{\rm
    even}-2}_G(X;\real)$ (see
eq.~(\ref{equivaltrels})). To the map $F$ we can associate a map
$\tilde{c}:X\to\Omega\Fred_G$ such that
$F=\ev\circ(\tilde{c}\times{\Id_{\S^1}})$. This follows from the
isomorphism
\begin{displaymath}
\K_G^{-1}(X)\cong\ker\left({i^{*}}:\K_G^{0}(X\times{\S^1})\to{\K_G^{0}(X)}
\right)
\end{displaymath}
where $i$ is the inclusion
$i:X\hookrightarrow{X\times{\pt}}\subset{X\times{\S^1}}$. Now use the
fact that at the level of (real) Bredon cohomology one has an
equality
\begin{displaymath}
\pi_{*}\,\big(\tilde{c}\times{\Id_{\S^1}}\big)^{*}=\tilde{c}{}^{*}\,\Pi_{*}
\end{displaymath}
since the projection homomorphisms $\pi_{*}$ and $\Pi_{*}$ both
correspond to integration (slant product) along the $\S^1$ fibre. Then
one has the identity
\begin{displaymath}
\big[\pi_{*}\,F^{*}u_G\big]\=
\big[\pi_{*}\,(\tilde{c}\times{\Id_{\S^1}})^{*}~
\ev^{*}u_G\big]\=\big[\tilde{c}^{*}\,\Pi_{*}~\ev^{*}u_G\big]\=
\big[\tilde{c}^{*}u_G^{-1}\big] \ .
\end{displaymath}
It follows that $\ker({g})$ is exactly the group
$\ch_X({\K_G^{-1}(X)})$, and putting everything together we arrive at
eq.~(\ref{exseqeqgen}).
\end{proof}

The torus $\H_G^{{\rm
    even}-1}\big(X;\mathbb{R}\otimes\underline{R}(-)\big)/
\ch_X\big(\K_G^{-1}(X)\big)\cong \K_G^{-1}(X)\otimes\real/\zed$
is called the group of \emph{topologically trivial flat fields} (or of
``orbifold Wilson lines''). We can rewrite the
sequence~(\ref{exseqeqgen}) in various illuminating ways. Consider the
\emph{characteristic class} map
\begin{eqnarray*}
f_{\rm cc}\,:\,\check{\K}_G^{0}(X)&\longrightarrow&{\K_G^{0}(X)}\\
\big[(c,h,\omega)\big]&\longmapsto&\left[{c}\right]
 \end{eqnarray*}
and the map
\begin{eqnarray*}
g_{\rm cc}\,:\,\Omega_G^{{\rm even}-1}(X;\mathbb{R})&\longrightarrow&
\check{\K}_G^{0}(X)\\
h&\longmapsto&\big[(\,\underline{c}\,,h,\dd_{G}h)\big] \ .
\end{eqnarray*}
Let $\Omega_{\K_G}^{{\rm
    even}-1}(X;\mathbb{R})$ be the
subgroup of elements in $\Omega_{G,{\rm cl}}^{{\rm
    even}-1}(X;\mathbb{R})$ whose Bredon
cohomology class lies in $\ch_X({\K_G^{-1}(X)})$. Then by using
arguments similar to those used in arriving at the sequence
(\ref{exseqeqgen}), one finds the
\begin{cor}[{\bf Characteristic class exact sequence}]
The orbifold differential K-theory group $\check{\K}^{0}_{G}(X)$
satisfies the exact sequence
\beq
0~\longrightarrow~\dfrac{\Omega_{G}^{{\rm even}-1}(X;
\mathbb{R})}{\Omega_{\K_{G}}^{{\rm
    even}-1}(X;\mathbb{R})}~
\longrightarrow~{\check{\K}_{G}^{0}(X)}~\longrightarrow~
{\K_{G}^{0}(X)}~\longrightarrow~{0} \ .
\label{charclasseq}\eeq
\label{charclasscorr}\end{cor}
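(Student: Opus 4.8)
The plan is to follow the scheme of the proof of Theorem~\ref{exseqthm}, now working with the two maps $f_{\rm cc}$ and $g_{\rm cc}$ introduced above, and to establish three facts: that $f_{\rm cc}$ is surjective, that $\im(g_{\rm cc})=\ker(f_{\rm cc})$, and that $\ker(g_{\rm cc})=\Omega_{\K_G}^{{\rm even}-1}(X;\real)$. Surjectivity of $f_{\rm cc}$ comes first: given $\xi=[c]\in\K_G^0(X)$ with $c\colon X\to\Fred_G$ a $G$-map, its image $\ch_X(\xi)=[c^{*}u_G]$ lies in $\H_G^{{\rm even}-1}(X;\real\otimes\underline{R}(-))$, and since $(\Omega_G^\bullet(X;\real),\dd_G)$ is a cochain-model refinement of this Bredon cohomology group (Section~\ref{Orbdiffs}), one may choose a closed form $\omega\in\Omega_{G,{\rm cl}}^{\rm even}(X;\real)$ with $[\omega]_{G-{\rm dR}}=\ch_X(\xi)$; applying the de~Rham map componentwise on the fixed-point submanifolds turns $\omega-c^{*}u_G$ into an exact cochain $\delta h$, so $(c,h,\omega)$ is an orbifold differential K-cocycle with $f_{\rm cc}\big([(c,h,\omega)]\big)=\xi$.

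Next I would compute $\ker(f_{\rm cc})$. The inclusion $\im(g_{\rm cc})\subseteq\ker(f_{\rm cc})$ is immediate since a triple $(\underline{c},h,\dd_G h)$ has trivial characteristic class. Conversely, if $[(c,h,\omega)]\in\ker(f_{\rm cc})$ then $c$ is $G$-homotopic to $\underline{c}$, and the equivalence relation of Definition~\ref{orbdiffKdef}, unwound exactly as in eq.~(\ref{equivaltrels}), lets me replace the representative by $(\underline{c},h',\omega)$ with $h'=h+\pi_{*}F^{*}u_G+\dd_G\sigma$. Since $\underline{c}^{*}u_G=0$ by the normalization fixed in Section~\ref{diffK}, the cocycle condition becomes $\delta h'=\omega$, so $\omega$ is a coboundary in the orbifold cochain complex; using once more that $\Omega_G^\bullet(X;\real)$ computes real Bredon cohomology, I may write $\omega=\dd_G\tilde h$ for some $\tilde h\in\Omega_G^{{\rm even}-1}(X;\real)$. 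Then $h'-\tilde h$ is a closed orbifold cochain of degree ${\rm even}-1$, and, with the map $g$ from the proof of Theorem~\ref{exseqthm}, $[(\underline{c},h',\omega)]=g_{\rm cc}(\tilde h)+g([h'-\tilde h])$; since any closed form $\eta$ satisfies $g_{\rm cc}(\eta)=[(\underline{c},\eta,0)]=g([\eta])$, the flat topologically trivial summand $\H_G^{{\rm even}-1}(X;\real\otimes\underline{R}(-))\big/\ch_X(\K_G^{-1}(X))$ already lies in $\im(g_{\rm cc})$, whence $\ker(f_{\rm cc})=\im(g_{\rm cc})$.

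It remains to identify $\ker(g_{\rm cc})$. A form $h\in\Omega_G^{{\rm even}-1}(X;\real)$ satisfies $g_{\rm cc}(h)=[(\underline{c},h,\dd_G h)]=0$ precisely when $(\underline{c},h,\dd_G h)$ is equivalent to $(\underline{c},0,0)$; reading off the relations~(\ref{equivaltrels}) this forces $\dd_G h=0$ together with $h=\pi_{*}F^{*}u_G+\dd_G\sigma$ for a self-homotopy $F$ of $\underline{c}$, \emph{i.e.}~$[h]_{G-{\rm dR}}=[\tilde c^{*}u_G^{-1}]\in\ch_X(\K_G^{-1}(X))$ via the identity $[\pi_{*}F^{*}u_G]=[\tilde c^{*}u_G^{-1}]$ already established in the proof of Theorem~\ref{exseqthm}. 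This is exactly the condition defining $\Omega_{\K_G}^{{\rm even}-1}(X;\real)$, so $g_{\rm cc}$ descends to an injection of $\Omega_G^{{\rm even}-1}(X;\real)\big/\Omega_{\K_G}^{{\rm even}-1}(X;\real)$ onto $\ker(f_{\rm cc})\subseteq\check{\K}_G^0(X)$, and combining this with the surjectivity of $f_{\rm cc}$ gives the exact sequence~(\ref{charclasseq}).

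The step I expect to be the main obstacle is the passage, inside the computation of $\ker(f_{\rm cc})$, between the cochain-level bookkeeping forced by Definition~\ref{orbdiffKdef} (where $h$, $\sigma$ and the equivalences all live in $C_G^\bullet(X;\real)$) and the differential-form statement of the corollary: one must repeatedly invoke that $(\Omega_G^\bullet(X;\real),\dd_G)$ is a genuine cochain-model refinement of real Bredon cohomology, together with the componentwise de~Rham map, and be careful that the closed form $\tilde h$ extracted from the coboundary $\omega$ can be inserted into the chosen representative without altering the orbifold differential K-theory class.
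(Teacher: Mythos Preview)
Your proposal is correct and follows exactly the scheme the paper has in mind: the paper only says that the corollary follows ``by using arguments similar to those used in arriving at the sequence~(\ref{exseqeqgen})'', and you have supplied precisely those details for the maps $f_{\rm cc}$ and $g_{\rm cc}$, including the surjectivity of $f_{\rm cc}$, the identification $\im(g_{\rm cc})=\ker(f_{\rm cc})$ via the observation $\im(g)\subseteq\im(g_{\rm cc})$, and the computation of $\ker(g_{\rm cc})$ by the same $[\pi_*F^*u_G]=[\tilde c^*u_G^{-1}]$ identity used in Theorem~\ref{exseqthm}. Your closing caveat about the cochain/form passage is well taken but not a gap: it is exactly the quasi-isomorphism of $(\Omega_G^\bullet(X;\real),\dd_G)$ with the Bredon cochain complex established in Section~\ref{DelocBredon}, which the paper uses implicitly throughout.
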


The quotient space of orbifold differential forms in the exact
sequence (\ref{charclasseq}) is called the group of
\emph{topologically trivial fields}. An element of this group is a
globally defined (and hence topologically trivial) gauge potential on
the orbifold $X/G$ up to large (quantized) gauge transformations, with
$\omega$ the corresponding field strength. Finally, consider the
\emph{field strength} map
\bea \nonumber
f_{\rm fs}\,:\,\check{\K}_G^{0}(X)&\longrightarrow&
{\Omega_{G,{\rm cl}}^{\rm
    even}(X;\mathbb{R})}\\ 
\big[(c,h,\omega)\big]&\longmapsto&\omega \ .
\label{fieldstrmap}\eea
The kernel of the homomorphism $f_{\rm fs}$ is the group which
classifies the \emph{flat fields} (which are not necessarily
topologically trivial) and is denoted
$\K_{G}^{-1}(X;\mathbb{R}/\mathbb{Z})$. This group will be described
in more detail in the next section, where we shall also conjecture an
essentially purely algebraic definition of
$\K_{G}^{-1}(X;\mathbb{R}/\mathbb{Z})$ which explains the notation. In
any case, we have the

\begin{cor}[{\bf Field strength exact sequence}]
The orbifold differential K-theory group $\check{\K}^{0}_{G}(X)$
satisfies the exact sequence
\beq
0~\longrightarrow~{\K_{G}^{-1}(X;\mathbb{R}/\mathbb{Z})}~
\longrightarrow~{\check{\K}_{G}^{0}(X)}~\longrightarrow~
{\Omega_{\K_G}^{\rm even}(X;\mathbb{R})}~\longrightarrow~{0} \ .
\label{diffKexseqGfs}\eeq
\label{fieldstrengthcorr}\end{cor}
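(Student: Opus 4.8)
The plan is to analyze the field strength map $f_{\rm fs}$ of eq.~(\ref{fieldstrmap}) directly, in close parallel with the proof of Theorem~\ref{exseqthm}: I would show that $f_{\rm fs}$ is a well-defined surjective homomorphism onto $\Omega_{\K_G}^{\rm even}(X;\real)$ (by which we mean the closed orbifold forms whose Bredon class lies in $\ch_X(\K^0_G(X))$), and then identify its kernel with the group of flat fields. Well-definedness is the first step: if $(c_0,h_0,\omega_0)$ and $(c_1,h_1,\omega_1)$ are equivalent via a triple $(c,h,\omega)$ on $X\times[0,1]$ with $\omega$ constant along the interval, then restricting to the two endpoints forces $\omega_0=\omega=\omega_1$, so $[(c,h,\omega)]\mapsto\omega$ descends to $\check{\K}^0_G(X)$; additivity is immediate from the group law of Section~\ref{diffK}.

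For the image, if $\omega=f_{\rm fs}\big([(c,h,\omega)]\big)$ then eq.~(\ref{Eqdeltah}) gives $\omega-c^{*}u_G=\delta h$, a coboundary in $C_G^{\rm even}(X;\real)$, so $[\omega]_{G-{\rm dR}}=[c^{*}u_G]=\ch_X([c])\in\ch_X\big(\K^0_G(X)\big)$, whence $\omega\in\Omega_{\K_G}^{\rm even}(X;\real)$. Conversely, given a closed orbifold form $\omega$ with $[\omega]_{G-{\rm dR}}=\ch_X(\xi)$ for some $\xi\in\K^0_G(X)$, I would pick a classifying $G$-map $c:X\to\Fred_G$ for $\xi$, so that $[c^{*}u_G]=\ch_X([c])=\ch_X(\xi)=[\omega]_{G-{\rm dR}}$ in $\H_G^{{\rm even}-1}(X;\real\otimes\underline{R}(-))$. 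Since, by the delocalization results of Section~\ref{DelocBredon}, the orbifold de~Rham complex $\big(\Omega^\bullet_G(X;\real),\dd_G\big)$ is a cochain model for Bredon cohomology with real coefficients that is compatible at the cochain level with the complex $C_G^\bullet(X;\real)$ of eq.~(\ref{CGevenXR}), the cocycle $\omega-c^{*}u_G$ represents the zero class and is therefore $\delta h$ for some $h\in C_G^{{\rm even}-1}(X;\real)$. The triple $(c,h,\omega)$ then satisfies eq.~(\ref{Eqdeltah}) and $f_{\rm fs}\big([(c,h,\omega)]\big)=\omega$; this gives surjectivity of $f_{\rm fs}$ onto $\Omega_{\K_G}^{\rm even}(X;\real)$.

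It remains to handle exactness at $\check{\K}^0_G(X)$. A class lies in $\ker(f_{\rm fs})$ exactly when it is represented by a triple $(c,h,0)$ with $\delta h=-c^{*}u_G$; in particular $\ch_X([c])=0$, so $[c]$ is torsion in $\K^0_G(X)$, and the data amount to a torsion characteristic class together with an $\real/\zed$-valued cochain trivializing its (vanishing) equivariant Chern character. By definition this is the group $\K^{-1}_G(X;\real/\zed)$ of flat, not necessarily topologically trivial, Ramond-Ramond fields, and the resulting inclusion $\K^{-1}_G(X;\real/\zed)\hookrightarrow\check{\K}^0_G(X)$ is manifestly injective. Equivalently, one may write $f_{\rm fs}$ as the composite of the surjection $f$ of Theorem~\ref{exseqthm} with the (surjective) projection $A^0_{\K_G}(X)\to\Omega_{\K_G}^{\rm even}(X;\real)$, $(\xi,\omega)\mapsto\omega$, whose kernel is $\K^0_G(X)_{\rm tor}$; this presents $\ker(f_{\rm fs})$ as an extension of $\K^0_G(X)_{\rm tor}$ by $\ker(f)=\H_G^{{\rm even}-1}(X;\real\otimes\underline{R}(-))/\ch_X\big(\K^{-1}_G(X)\big)\cong\K^{-1}_G(X)\otimes\real/\zed$, i.e.\ precisely the Bockstein description of $\K^{-1}_G(X;\real/\zed)$ attached to the coefficient sequence $0\to\zed\to\real\to\real/\zed\to0$ that will be adopted in the next section. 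Combining well-definedness, surjectivity and this kernel computation yields the exact sequence~(\ref{diffKexseqGfs}).

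The step I expect to be the main obstacle is the surjectivity argument, specifically the passage from ``$\omega-c^{*}u_G$ represents the zero Bredon class'' to ``$\omega-c^{*}u_G=\delta h$ for an honest cochain $h\in C_G^{{\rm even}-1}(X;\real)$''. This requires the componentwise de~Rham map to be a genuine cochain-level quasi-isomorphism between $\big(\Omega^\bullet_G(X;\real),\dd_G\big)$ and the complex $C_G^\bullet(X;\real)$ of eq.~(\ref{CGevenXR}) --- not merely an isomorphism on cohomology --- so that a closed form representing zero is exact as a cochain; this is exactly where the delocalization of Bredon cohomology from Section~\ref{DelocBredon} is used. The remaining steps are formal manipulations of triples, parallel to the proof of Theorem~\ref{exseqthm}.
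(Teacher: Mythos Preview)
Your proof is correct and follows the paper's own (implicit) argument. In the paper, $\K_G^{-1}(X;\real/\zed)$ is \emph{defined} to be $\ker(f_{\rm fs})$, so exactness at the left and middle is tautological; the only content is that $\operatorname{im}(f_{\rm fs})=\Omega_{\K_G}^{\rm even}(X;\real)$, which the paper says follows ``by arguments similar to those used in arriving at'' Theorem~\ref{exseqthm}. You carry out exactly those arguments explicitly.

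Two remarks. First, the worry you flag as the ``main obstacle'' is not one: once $\omega$ is regarded as a cochain via the componentwise de~Rham map (as the paper stipulates just after Definition~\ref{orbdiffKdef}), both $\omega$ and $c^{*}u_G$ are cocycles in the single complex $C_G^\bullet(X;\real)$ of eq.~(\ref{CGevenXR}); if their difference has zero cohomology class it is a coboundary there, by definition of cohomology. No chain-level comparison between two complexes is needed. Second, your extension
\[
0~\longrightarrow~\K_G^{-1}(X)\otimes\real/\zed~\longrightarrow~\ker(f_{\rm fs})~\longrightarrow~\Tor\big(\K_G^{0}(X)\big)~\longrightarrow~0
\]
is a correct and useful consequence of factoring $f_{\rm fs}$ through the map $f$ of Theorem~\ref{exseqthm}, but it goes beyond what is needed for the corollary: in the paper this extension appears only later (eq.~(\ref{RZcoeffsdef})) as a \emph{conjectural} match with an independently defined equivariant K-theory with $\real/\zed$ coefficients, whereas here it is simply a property of the kernel.
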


Higher orbifold differential K-theory groups satisfy analogous exact
sequences, with the appropriate degree shifts throughout. It is clear
from our definition that one recovers the ordinary differential
K-theory groups in the case of the trivial group $G={e}$, and in this
sense our orbifold differential K-theory may be regarded as its
equivariant generalization. At this point we hasten to add that,
although our groups are well-defined and satisfy desired properties
which are useful for physical applications such as functoriality and
the various exact sequences above, we have not proven that our
orbifold theory is a differential cohomology theory. We have also not
given a definition of what a generic orbifold (or equivariant)
differential cohomology theory is. For instance, it would be
interesting to define a ring structure and an integration on
$\check{\K}^\bullet_{G}(X)$. In particular, the integration requires
knowledge of a relative version of orbifold differential K-theory,
which we have not developed in this paper.

We have also investigated the possibility that the group
$\check{\K}^\bullet_{G}(X)$ reduces to the ordinary differential
K-theory $\check{\K}^\bullet(X/G)$ in the case of a free $G$-action on
$X$, and to $\check{\K}^\bullet(X)\otimes {R}(G)$ in the case of a
trivial group action, as one might naively expect from the analogous
results for equivariant topological K-theory (the equivariant excision
theorem (\ref{eqexcision}) with $N=G$ and eq.~(\ref{KGWtrivial}),
respectively) and for Bredon cohomology
(Examples~\ref{Bredonfreeactionex} and~\ref{Bredontrivactionex},
respectively). On the contrary, these decompositions do not occur,
because the corresponding isomorphisms in equivariant
K-theory are estabilished via the induction maps and these usually do
not lift at the ``cochain level'' as isomorphisms. Properties such as
induction structures reflect homotopy invariance of topological
cohomology groups, which is not possessed by differential cohomology
groups due to their ``local'' dependence on the complex of
differential forms. We will see an explicit example of this in the
next section. With this in mind, it would be interesting then to
define a suitable analog of the induction structures in an equivariant
cohomology theory. These and various other interesting mathematical
issues surrounding the orbifold differential K-theory groups that we
have defined will not be pursued in this paper. 

\newsection{Flux quantization of orbifold Ramond-Ramond
  fields\label{Fluxquant}}

In this final section we will argue that the orbifold
differential K-theory defined in the previous section can be used to
describe Ramond-Ramond fields and their flux quantization condition in
orbifolds of Type~II superstring theory with vanishing $H$-flux.
To formulate the self-duality property of orbifold Ramond-Ramond
fields in equivariant K-theory, one needs an appropriate equivariant
version of Pontrjagin duality~\cite{Freed2006a}. This appears to be a
very deep and complicated problem, and is beyond the scope of the
present paper. Furthermore, to generalize the pairing of
Section~\ref{WZpairing} to topologically non-trivial Ramond-Ramond
fields, one needs to define an integration on the orbifold
differential cohomology theory defined in Section~\ref{OrbdiffK}, and
regard the Ramond-Ramond fields properly as cocycles for it. In
addition, one needs a graded ring structure and an appropriate
groupoid representing the orbifold differential K-theory, whose
objects are the Ramond-Ramond form gauge potentials $C$ and whose
isomorphism classes are the gauge equivalence classes in
$\check{\K}_G^\bullet(X)$. Lacking these ingredients, most of our
analysis in this section will be essentially purely
``topological''. We shall study the somewhat simpler problem of the
proper K-theory quantization of orbifold Ramond-Ramond fields, in
particular due to their sourcing by fractional D-branes, in terms of
the formulation provided by orbifold differential K-theory.

\subsection{Ramond-Ramond currents}

We will begin by rephrasing the relation between the D-brane charge
group and the group of Ramond-Ramond fluxes ``measured at infinity''
in the equivariant case, which is a statement about the K-theoretic
classification of Ramond-Ramond fields on a global orbifold
$[X/G]$. For this, we invoke an argument due to Moore and
Witten~\cite{Moore2000} which will suggest that the equivariant Chern
character $\ch_X$ constructed in Section~\ref{EqChern} gives the
right quantization rule for orbifold Ramond-Ramond fields. Suppose
that our spacetime $X$ is a non-compact $G$-manifold. Suppose further
that there are D-branes present in Type~II superstring theory on
$X/G$. Their Ramond-Ramond charges are classified by the equivariant
K-theory $\K^i_{G,\cpt}(X)$ with compact support, where $i=0$ in
Type~IIB theory and $i=-1$ in Type~IIA theory.

We require that the brane be a source for the equation of motion for
the total Ramond-Ramond field strength $\omega$. This means that it
creates a Ramond-Ramond current $J$. If we require that the
worldvolume $W$ be compact in equivariant K-cycles
$(W,E,f)\in\cat{D}^G(X)$, then $J$ is supported in the interior
$\mathring{X}$ of $X$. Let $X_\infty$ be the ``boundary of $X$ at
infinity'', which we assume is preserved by the action of $G$. Then
$\K^\bullet_{G,\cpt}(X)\cong{\K^\bullet_{G}(X,X_\infty)}$. Since $J$ is
trivialized by $\omega$ in $\mathring{X}$, the D-brane charge lives in
the kernel of the natural forgetful homomorphism
\beq\label{forgetmap}
\mathfrak{f}^\bullet\,:\,\K^\bullet_{G,\cpt}(X)~\longrightarrow~
{\K^\bullet_{G}(X)}
\eeq
induced by the inclusion
$(X,\emptyset)\hookrightarrow(X,X_\infty)$. We denote by
$i:X_\infty\hookrightarrow X$ the canonical inclusion.

The long exact sequence for the pair $(X,X_\infty)$ in equivariant
K-theory truncates, by Bott periodicity, to the six-term exact
sequence
\begin{displaymath}
\xymatrix{\K_{G}^{-1}(X_\infty)~\ar[r]&~{\K_{G}^{0}(X,X_\infty)}~
\ar[r]^{~~~~~~ \ \ \mathfrak{f}^0}& ~\K_{G}^{0}(X)\ar[d]^{i^*}\\
\K_{G}^{-1}(X)~\ar[u]^{i^*}&\ar[l]^{\!\!\mathfrak{f}^{-1}}~\K_G^{-1}(X,X_\infty)~&
\ar[l]~\K_{G}^{0}(X_\infty) \ . }
\end{displaymath}
It follows that the charge groups are given by
\begin{displaymath}
\ker\big(\mathfrak{f}^0\big)~\cong~\frac{\K_{G}^{-1}(X_\infty)}
{i^*\big(\K_{G}^{-1}(X)\big)} \qquad \mbox{and} \qquad
\ker\big(\mathfrak{f}^{-1}\big)~\cong~\frac{\K_{G}^0(X_\infty)}
{i^*\big(\K_{G}^0(X)\big)} \ .
\end{displaymath}
This formula means that the group of Type~IIB (resp.~Type~IIA) brane
charges is measured by the group $\K_G^{-1}(X_\infty)$
(resp.~$\K_G^0(X_\infty)$) of ``orbifold Ramond-Ramond fluxes at
infinity'' which cannot be extended to all of spacetime $X$. We may
then interpret, for arbitrary spacetimes $X$, the
group $\K_{G}^{-1}(X)$ (resp.~$\K_{G}^{0}(X)$) as the group classifying
Ramond-Ramond fields in the orbifold $X/G$ which are not sourced by
branes in Type~IIB (resp.~Type~IIA) string theory.

The Ramond-Ramond current can be described explicitly in the
delocalized theory of Section~\ref{RRCouplings}. The Wess-Zumino
pairing (\ref{coupling}) between a topologically trivial, complex
Ramond-Ramond potential and a D-brane represented by an equivariant
K-cycle $(W,E,f)\in\cat{D}^G(X)$ contributes a source term to the
Ramond-Ramond equations of motion, which is the class
$$\big[Q(W,E,f)\big]~\in~\H_G^{\rm
  even}\big(X\,;\,\complex\otimes\underline{R}(-)\big)$$ represented
by the pushforward
$$
Q(W,E,f)=f_!^{\H_G}\big(\ch^\complex(E)\wedge_G\,\mathcal{R}(W,f)\big)
\ .
$$
We now use the Riemann-Roch formula (\ref{GRRthm}) and the fact that
$f^*$ is right adjoint to $f_!^{\H_G}$, \emph{i.e.}, $f_!^{\H_G}\circ
f^*=\Id_{\H_G^\bullet(X;\,\underline{\complex}(-))}$. Using the explicit
expression for the curvature form in eq.~(\ref{calRWf}), we can
then rewrite this class as
\beq
Q(W,E,f)=\ch^\complex\big(f_!^{\K_G}(E)\big)
\wedge_G\,\sqrt{\Todd_G(T_X)\wedge_G\,\Euler_G(X)} \ .
\label{QWEfclass}\eeq
This is the complex Bredon cohomology class of the Ramond-Ramond
current $J$ created by the D-brane $(W,E,f)$. In the case $G=e$, the
expression (\ref{QWEfclass}) reduces to the standard class of the
current for Ramond-Ramond fields in Type~II superstring theory on
$X$~\cite{Cheung1997,Minasian1997,Moore2000,Olsen2000}.

There is a natural extension of the current (\ref{QWEfclass}) which
allows us to formally conclude, in analogy with the non-equivariant
case, that the complex Bredon cohomology class associated to a class
$\xi\in{\K^\bullet_{G}(X)\otimes\complex}$ representing a
Ramond-Ramond field is assigned by the equivariant Chern character. If
the Ramond-Ramond field is determined by a differential form
$C/2\pi\,\sqrt{\Todd_G(T_X)\wedge_G\,\Euler_G(X)}$ with
$C\in\Omega^\bullet_G(X;\complex)$ and $\dd_GC=\omega$, then this is
the class $[\omega(\xi)]$ in
$\H_G^\bullet(X;\complex\otimes\underline{R}(-))$ represented by the
closed differential form
\beq\label{FCxich}
\frac{\omega(\xi)}{2\pi\,\sqrt{\Todd_G(T_X)\wedge_G\,\Euler_G(X)}}=
\ch^\complex(\xi) \ .
\eeq
This is just the anticipated flux quantization condition from orbifold
differential K-theory. The appearance of the additional
gravitational terms in eq.~(\ref{FCxich}) is inconsequential to this
identification. Given the canonical map (\ref{canmapE}) in a
generalized cohomology theory $\E^\bullet$, any other map
$\E^{\bullet}(X)\rightarrow
{\H(X;\mathbb{R}\otimes{\pi_{-\bullet}\E})}^{\bullet}$ with the same
properties described in Section~\ref{gendiff} is obtained by
multiplying $\varphi$ with an invertible element in
$\H(X;\mathbb{R}\otimes{\pi_{-\bullet}\E})^0$. In the case at hand,
the characteristic class $\sqrt{\Todd_G(T_X)\wedge_G\,\Euler_G(X)}$ is
an invertible closed differential form which represents this element
in $\H_G^{\rm even}(X;\complex\otimes\underline{R}(-))$. This class
reduces to the usual gravitational correction $\sqrt{\Todd(T_X)}$ when
$G$ acts trivially on $X$.

We should stress that this analysis of the delocalized theory assumes
the strong conditions spelled out in Section~\ref{Gravcoupl}, which
require a deep geometrical compatibility of the equivariant K-cycle
$(W,E,f)$ with the orbifold structure of $[X/G]$ (or else an explicit
determination of the unknown characteristic class $\Lambda_G(W)$
correcting the Riemann-Roch formula as explained in
Section~\ref{EqRRformula}). The example of the linear orbifolds
considered in  Sections~\ref{Linorb1} and~\ref{Linorb2}, and in
Section~\ref{Linorb3} below, is simple enough to satisfy these
conditions. It would be very interesting to find a geometrically
non-trivial explicit example to test these requirements on. In any
case, the results above suggest that the orbifold differential
K-theory (or more precisely a complex version of it) defined in the
previous section is the natural framework in which to describe
topologically non-trivial Ramond-Ramond fields on orbifolds. It would
be highly desirable to determine the correct generalization of
eq.~(\ref{QWEfclass}) to the orbifold differential K-theory group
$\check{\K}_G^\bullet(X)$ of the previous section, and thereby
extending the delocalized Ramond-Ramond currents to include effects
such as torsion.

\subsection{Linear orbifolds\label{Linorb3}}

To understand certain aspects of the orbifold differential K-theory
groups, it is instructive to study the K-theory
classification of Ramond-Ramond fields on the linear orbifolds
considered in Sections~\ref{Linorb1} and~\ref{Linorb2}. Since the
$\complex$-linear $G$-module $V$ is equivariantly contractible, one
has $\H_G^{\rm odd}(V;\real\otimes\underline{R}(-))=0$ and
$\K_G^0(V)=R(G)$. From Theorem~\ref{exseqthm} it then follows that
$$
\check{\K}_G^0(V)~\cong~A_{\K_G}^0(X)~
\cong~\big\{(\gamma,\omega)\in R(G)\times
\Omega_{G,{\rm cl}}^{\rm even}(V;\real)~
\big|~\ch_{G/G}(\gamma)=[\omega]_{G-{\rm dR}}\big\} \ .
$$
Since the equivariant Chern character $\ch_{G/H}:R(H)\to R(H)$ for
$H\leq G$ is the identity map, the setwise fibre product truncates to
the lattice of quantized orbifold differential forms and one has
\beq
\check{\K}_G^0(V)=\Omega_{\K_G}^{\rm even}(V;\real) \ .
\label{checkKG0V}\eeq
This is the group of Type~IIA Ramond-Ramond form potentials on $V$. It
naturally contains those fields which trivialize the Ramond-Ramond
currents sourced by the stable fractional D0-branes of the Type~IIA
theory, corresponding to characteristic classes $[c]$ in the
representation ring $R(G)$ as explained in Section~\ref{Linorb1}.

This can be explicitly described as an extension of the group of
topologically trivial Ramond-Ramond fields $C$ of odd degree by the
equivariant K-theory of $V$, as implied by
Corollary~\ref{charclasscorr}. Since $V$ is connected and
$G$-contractible, one has $\Omega_{G,{\rm cl}}^0(V;\real)=\real\otimes
R(G)$ and the group (\ref{checkKG0V}) has a natural splitting
\beq
\check{\K}_G^0(V)=R(G)\oplus\Big(\,\bigoplus_{k=1}^d\,
\Omega_{G,{\rm cl}}^{2k}(V;\real)\Big) \ .
\label{checkKG0Vsplit}\eeq
Any closed orbifold form $\omega$ on $V$ of positive degree
is exact, $\omega=\dd_{G}C$, with the gauge invariance $C\mapsto
C+\dd_{G}\xi$. It follows that there is a natural map
$$
\bigoplus_{k=1}^d\,\Omega_{G,{\rm cl}}^{2k}(V;\real)~\longrightarrow~
\frac{\Omega_G^{\rm odd}(V;\real)}{\Omega_{\K_G}^{\rm odd}(V;\real)}
$$
which associates to the field strength $\omega$ the corresponding
globally well-defined Ramond-Ramond potential $C$.

On the other hand, the orbifold differential K-theory group
$\check{\K}_G^{-1}(V)$ of Type~IIB Ramond-Ramond fields on $V$ can be
computed by using the characteristic class exact sequence
(\ref{charclasseq}) with degree shifted by $-1$. Using
$\K_G^{-1}(V)=0$, one finds
\beq
{\check{\K}_{G}^{-1}(V)}=\dfrac{\Omega_{G}^{\rm even}(V;
\mathbb{R})}{\Omega_{\K_{G}}^{{\rm even}}(V;\mathbb{R})} \ .
\label{checkKG1V}\eeq
This result reflects the fact that the Type~IIB theory has no stable
fractional D0-branes. Hence there is no extension and the
Ramond-Ramond fields are induced solely by the closed string
background. Their field strengths $\omega=\dd_{G}C$ are
determined entirely by the potentials $C$, which are globally defined
differential forms of even degree.

Note that for any $G$-homogeneous space $G/H$ one has
$\K_G^{-1}(G/H)=0$ and $$\Omega_G^{\rm
  odd}(G/H;\real)=0 \ . $$
>From the characteristic class exact sequence (\ref{charclasseq}) one
thus computes that the orbifold differential K-theory group
\beq
\check{\K}_G^0(G/H)~\cong~\K_G^0(G/H)~\cong~R(H)
\label{checkKG0GH}\eeq
is given by the characteristic classes (of fractional D0-branes),
while Theorem~\ref{exseqthm} (with degree shifted by $-1$) implies
that the orbifold differential K-theory group
\beq
\check{\K}_G^{-1}(G/H)~\cong~\frac{\H_G^{\rm even}\big(G/H\,;\,
\real\otimes\underline{R}(-)\big)}{\ch_{G/H}\big(\K_G^0(G/H)\big)}~
\cong~R(H)\otimes\real/\zed
\label{checkKG1GH}\eeq
is given by the topologically trivial flat fields. Setting $H=G$ in
eqs.~(\ref{checkKG0GH}) and~(\ref{checkKG1GH}) shows that the
differential $\K_G$-theory groups of a point generically differ from
the groups~(\ref{checkKG0V}) and~(\ref{checkKG1V}), even though $V$ is
$G$-contractible. This exemplifies the $G$-homotopy non-invariance of
the orbifold differential K-theory groups, required to capture the
non-vanishing (but topologically trivial) gauge potentials on $V$.

\subsection{Flat potentials}

In Section~\ref{Linorb3} above we encountered some examples of
topologically trivial Ramond-Ramond fields, corresponding to gauge
equivalence classes with trivial K-theory flux $[c]=0$. They are the
globally defined orbifold differential forms
$$C~\in~\Omega_G^\bullet(X;\real)$$
with the gauge symmetry $C\to C+\xi$, where $\dd_{G}\xi=0$ and
$\xi\in\Omega_{\K_G}^\bullet(X;\real)$, and
field strength $$\omega=\dd_{G}C \ . $$ The \emph{flat} Ramond-Ramond
fields are instead classified by the abelian group
$\K_G^i(X;\real/\zed)$, where $i=0$ for Type~IIB theory and $i=-1$ for
Type~IIA theory. In the previous section this group was defined to be
the subgroup of orbifold differential K-theory with vanishing
curvature. In the following we will conjecture a very natural
algebraic definition of these groups which ties them somewhat more
directly to equivariant K-theory groups.

To motivate this conjecture, we first compute the groups
$\K_G^\bullet(V;\real/\zed)$ for the linear orbifolds of
Section~\ref{Linorb3} above, wherein the associated differential
K-theory groups were determined explicitly. Using the field strength
exact sequence (\ref{diffKexseqGfs}), by definition one has
$$
\K_G^{-1}(V;\real/\zed)\cong\ker\big(f_{\rm fs}:\check{\K}_G^0(V)\to
\Omega_{\K_G}^{\rm even}(V;\real)\big)
$$
which from the natural isomorphism (\ref{checkKG0V}) trivially gives
\beq
\K_G^{-1}(V;\real/\zed)=0 \ .
\label{KGflat1V}\eeq
Similarly, using $\K_G^{-1}(V)=0$ one has
$$
\K_G^0(V;\real/\zed)\cong\ker\big(f_{\rm fs}:\check{\K}_G^{-1}(V)\to
\Omega_{G,{\rm cl}}^{\rm odd}(V;\real)\big) \ .
$$
Using the natural isomorphism (\ref{checkKG1V}), the field strength
map is $$f_{\rm fs}\big([C]\big)\=\dd_{G}C \qquad \mbox{for}
\quad C\in\Omega_{G}^{\rm
  even}(V;\real) \ , $$ giving
$$
\K_G^0(V;\real/\zed)\cong\frac{\Omega_{G,{\rm cl}}^{\rm even}(V;
\mathbb{R})}{\Omega_{\K_{G}}^{{\rm even}}(V;\mathbb{R})} \ .
$$
Similarly to eq.~(\ref{checkKG0Vsplit}), there is a natural splitting
of the vector space of closed orbifold differential forms given by
$$
\Omega_{G,{\rm cl}}^{\rm even}(V;\mathbb{R})=
\big(R(G)\otimes\real\big)~\oplus~\Big(\,\bigoplus_{k=1}^d\,
\Omega_{G,{\rm cl}}^{2k}(V;\real)\Big)
$$
and we arrive finally at
\beq
\K_G^0(V;\real/\zed)=R(G)\otimes\real/\zed \ .
\label{KGflat0V}\eeq

These results of course simply follow from the fact that $V$ is
$G$-contractible, so that every $\dd_{G}$-closed Ramond-Ramond
field is trivial, except in degree zero where the gauge equivalence
classes are naturally parametrized by the twisted sectors of the
string theory in eq.~(\ref{KGflat0V}). Note that both groups of flat
fields (\ref{KGflat1V}) and (\ref{KGflat0V}) are unchanged by
(equivariant) contraction of the $G$-module $V$ to a point, as an
analogous (but simpler) calculation shows. This suggests that the
groups $\K_G^\bullet(X;\real/\zed)$ have at least some $G$-homotopy
invariance properties, unlike the differential $\K_G$-theory
groups. This motivates the following conjectural algebraic framework
for describing these groups.

We will propose that the group $\K_G^\bullet(X;\real/\zed)$ is an
\emph{extension} of the torus of topologically trivial flat orbifold
Ramond-Ramond fields by the torsion elements in $\K_G^{\bullet+1}(X)$,
as they have vanishing image under the equivariant Chern character
$\ch_X$. The resulting group may be called the ``equivariant K-theory
with coefficients in $\real/\zed$''. The short exact sequence of
coefficient groups
$$
0~\longrightarrow~\zed~\longrightarrow~\real~\longrightarrow~
\real/\zed~\longrightarrow~0
$$
induces a long exact sequence of equivariant K-theory groups which, by
Bott periodicity, truncates to the six-term exact sequence
\beq\label{Bocksteinseq}
\xymatrix{\K_{G}^{0}(X)~\ar[r]&~{\K_{G}^{0}(X;\real)}~
\ar[r]& ~\K_{G}^{0}(X;\real/\zed)\ar[d]^{\beta}\\
\K_{G}^{-1}(X;\real/\zed)~\ar[u]^{\beta}&\ar[l]~\K_G^{-1}(X;\real)~&
\ar[l]~\K_{G}^{-1}(X) \ . }
\eeq
The connecting homomorphism $\beta$ is a suitable variant of the usual
Bockstein homomorphism. We assume that the equivariant K-theory with
real coefficients is defined simply by the $\zed_2$-graded ring
$$\K_G^\bullet(X;\real)\=\K_G^\bullet(X)\otimes\real~\cong~
\H_G^\bullet\big(X\,;\,\real\otimes\underline{R}(-)\big) \ , $$ where
we have used Theorem~\ref{eqChernthm}. The maps to real K-theory in
eq.~(\ref{Bocksteinseq}) may then be identified with the equivariant
Chern character $\ch_X$, whose image is a full lattice in the Bredon
cohomology group $\H_G^\bullet(X;\real\otimes\underline{R}(-))$. Then
the abelian group $\K_G^\bullet(X;\mathbb{R}/\mathbb{Z})$ sits in the
exact sequence
\beq
0~\longrightarrow~\K_G^\bullet(X)\otimes\real/\zed~\longrightarrow~
\K_G^\bullet(X;\real/\zed)~\xrightarrow{\beta}~\Tor
\big(\K_G^{\bullet+1}(X)\big)~\longrightarrow~0 \ .
\label{RZcoeffsdef}\eeq

When $G=e$, eq.~(\ref{Bocksteinseq}) is the usual Bockstein exact
sequence for K-theory. In this case, an explicit geometric
realization of the groups $\K^\bullet(X;\real/\zed)$ in terms of
bundles with connection has been given by Lott~\cite{Lott1994}. 
Moreover, in ref.~\cite{Hopkins2005} a geometric construction of
the map ${\K^{-1}(X;\mathbb{R}/\mathbb{Z})}\to{\check{\K}^{0}(X)}$ in
the field strength exact sequence is given. Unfortunately, no such
geometrical description is immediately available for our equivariant
differential K-theory, due to the lack of a Chern-Weil theory for the
homotopy theoretic equivariant Chern character of
Section~\ref{section2}. Our conjectural definition (\ref{RZcoeffsdef})
is satisfied by the linear orbifold groups (\ref{KGflat1V})
and~(\ref{KGflat0V}).

In ref.~\cite{deBoer2001} a very different definition of the groups
$\K_G^\bullet(X;\real/\zed)$ is given, by defining both equivariant
K-theory and cohomology using the Borel construction of
Example~\ref{Borelex}. Then the Bockstein exact sequence
(\ref{Bocksteinseq}) is written for the ordinary K-theory groups of
the homotopy quotient $X_G=EG\times_GX$. While these groups reduce,
like ours, to the usual K-theory groups of flat fields when $G=e$,
they do not obey the exact sequence (\ref{RZcoeffsdef}). The reason is
that the equivariant Chern character used is \emph{not} an isomorphism
over the reals, as explained in Section~\ref{CherntopK} (see also
ref.~\cite{Luck2001} for a description of $\K^\bullet(X_G)$ as the
completion of $\K^\bullet_G(X)$ with respect to a certain
ideal). Moreover, an associated differential K-theory construction
would directly involve differential forms on the infinite-dimensional
space $X_G$ which is only homotopic to the finite-dimensional
CW-complex $X/G$. The physical interpretation of such fields is not
clear. Even in the simple case of the linear orbifolds $V$ studied
above, this description predicts an infinite set of equivariant fluxes
of arbitrarily high dimension on the infinite-dimensional classifying
space $BG$, and one must perform some non-canonical quotients in order
to try to isolate the physical fluxes. The differences between the
equivariant K-theory and Borel cohomology groups of $V$ also require
postulating certain effects of fractional branes on the
orbifold, as in ref.~\cite{Bergman2001}. In contrast, with our
constructions the relation between orbifold flux groups and Bredon
cohomology is much more natural, and it involves only finitely-many
orbifold Ramond-Ramond fields.

\subsection{Consistency conditions\label{FluxBredon}}

As we have stressed throughout this paper, the usage of Borel
cohomology as a companion to equivariant K-theory in the topological
classification of D-branes and Ramond-Ramond fluxes on orbifolds has
various undesirable features, most notably the fact that it involves
\emph{torsion} classes substantially, especially when finite group
cohomology is involved. In our applications to string geometry, it is
more convenient to use an equivariant cohomology theory with
substantial \emph{torsion-free} information. Bredon cohomology
naturally accomplishes this, as instead of group cohomology the basic
object is a representation ring. In fact, as we now demonstrate, the
formulation of topological consistency conditions for orbifold
Ramond-Ramond fields and D-branes within the framework of equivariant
K-theory naturally necessitates the use of classes in Bredon
cohomology.

Given a Bredon cohomology class
$\lambda\in\H_G^\bullet(X;\underline{R}(-))$, let us ask if there
exists a Ramond-Ramond field for which $\omega=\lambda$ in the sense
of eq.~(\ref{FCxich}). For this, we must find an equivariant K-theory
lift $\xi\in\K_G^\bullet(X)$ of $\lambda$. As in the non-equivariant
case~\cite{Diaconescu2003,Maldacena2001}, the obstructions to such a
lift can be determined via a suitable spectral sequence. For
equivariant K-theory the appropriate spectral sequence is described in
refs.~\cite{Davis1998,Mislin2003} (see also ref.~\cite{Segal1968})
using the skeletal filtration $(X_n)$ of Section~\ref{Gcomplex}. We
will now briefly explain the construction of this spectral sequence
and its natural relationship with the obstruction theory for
Ramond-Ramond fluxes in Bredon cohomology.

The $\E_1$-term of the spectral sequence is the relative
$G$-equivariant K-theory group
$$
\E_1^{p,q}=\K_G^{p+q}(X_p,X_{p-1})
$$
with differential
$$
\dd_1^{p,q}\,:\,\E_1^{p,q}~\longrightarrow~\E_1^{p+1,q}
$$
induced by the long exact sequence of the triple
$(X_{p+1},X_p,X_{p-1})$ in equivariant K-theory, \emph{i.e.},
$\dd_1^{p,q}$ is the composition of the map $i^*$ induced by the
inclusion $i:(X_p,\emptyset)\hookrightarrow(X_p,X_{p-1})$ with the
cellular coboundary operator of the pair $(X_{p+1},X_p)$. From
eq.~(\ref{Xnattach}) it follows that there is a homeomorphism
$\coprod_{j\in J_p}\,\big(\mathring{\mathbb{B}}{}_j^p\times
G/K_j\big)\to X_p\setminus X_{p-1}$, and hence
$$
\E_1^{p,q}~\cong~\bigoplus_{j\in J_p}\,\K^{p+q}_G\big(
\mathring{\mathbb{B}}{}_j^p\times G/K_j\big)~\cong~
\bigoplus_{j\in J_p}\,\K_G^q\big(G/K_j\big) \ .
$$
Thus $\E_1^{p,q}=0$ for $q$ odd, while for $q$ even the group
$\E_1^{p,q}$ is a direct sum of representation rings $R(K_j)$ over all
isotropy subgroups of $p$-cells of orbit type $G/K_j$. It parametrizes
equivariant K-theory classes defined on the $p$-skeleton of $X$ which
are trivial on the $(p-1)$-skeleton, and gives the supports of
$p$-form fields and charges on the orbifold which carry no lower
or higher degree fluxes.

The $\E_2$-term of the spectral sequence is the cohomology of the
differential $\dd_1$. The cohomology of the cochain complex assembled
from such terms is the equivariant cohomology with coefficient system
$\underline{R}(-)$ on $\ocat{G,\mathfrak{F}(X_p)}$ for $q=0$, and thus
a necessary condition for a $p$-form Ramond-Ramond field to lift to
$\K_G^\bullet(X)$ is that it define a non-trivial cocycle in Bredon
cohomology. This is consistent with Definition~\ref{orbdiffKdef}. The
resulting Atiyah-Hirzebruch spectral sequence may then be written
$$
\E_2^{p,q}=\H_G^p\big(X\,;\,\underline{\pi_{-q}\K}\,_G(-)\big) \quad
\Longrightarrow \quad \K_G^{p+q}(X)
$$
and it lives in the first and fourth quadrants of the
$(p,q)$-plane. On the $r$-th terms $\E_r^{p,q}$, the differential
$\dd^{p,q}_r$ has bidegree $(r,-r+1)$, and $\E_{r+1}^{p,q}$ is the
corresponding cohomology group. Note that $\dd^{p,q}_r=0$ for
all $r$ even, since then either its source or its target vanishes (as
$\K^q(\complex[H])=0$ for all $q$ odd and $H\leq G$). The
$\E_\infty$-term is the inductive limit
$$
\E_\infty^{p,q}=\lim_{\stackrel{\scriptstyle\longrightarrow}{\scriptstyle
  r}}\,\E_r^{p,q} \ .
$$

For a finite-dimensional manifold $X$, one has
$\E_r^{p,q}=\E_\infty^{p,q}$ for all $r>\dim(X)$ and the spectral
sequence converges to $\K_G^{p+q}(X)$. This means that the
$\E_\infty$-term is the associated graded group of a decreasing finite
filtration
$\F_{p,q}\,\K_G^{p+q}(X)\subset\F_{p-1,q+1}\,\K_G^{p+q}(X)$, $0\leq
p\leq\dim(X)$ with $\K_G^q(X)=\F_{0,q}\,\K_G^q(X)$ and
\beq
\frac{\F_{p,q}\,\K_G^{p+q}(X)}{\F_{p+1,q-1}\,\K_G^{p+q}(X)}~\cong~
\E_\infty^{p,q} \ .
\label{Epqfilt}\eeq
Explicitly, if $\iota:X_{p-1}\hookrightarrow X$ denotes the inclusion
of the $(p-1)$-skeleton in $X$, then the filtration groups
$$
\F_{p,q}\,\K_G^{p+q}(X):=\ker\big(\iota^*:\K_G^{p+q}(X)\to
\K^{p+q}_G(X_{p-1})\big)
$$
consist of Ramond-Ramond fluxes where the field strength $\omega$ is a
form of degree $\geq p$, while the extension groups (\ref{Epqfilt})
consist of $p$-form fluxes with vanishing higher and lower degree
fluxes. By Theorem~\ref{eqChernthm}, the equivariant Chern character
$\ch_X$ determines an isomorphism from the limit of the
spectral sequence to its $\E_2$-term. Thus the spectral sequence
collapses rationally, and so the images of all higher differentials
$\dd^{p,q}_r$, $r>2$ in the spectral sequence consist of torsion
classes.

It follows that the next non-trivial obstruction to extending a
Ramond-Ramond field is given by a ``cohomology operation''
\beq
\dd_3^{p,0}\,:\,\H_G^p\big(X\,;\,\underline{R}(-)\big)~\longrightarrow~
\H_G^{p+3}\big(X\,;\,\underline{R}(-)\big) \ .
\label{d3p0def}\eeq
Thus a necessary condition for a Bredon cohomology class
$\lambda\in\H_G^p(X;\underline{R}(-))$ to survive to $\E_\infty^{p,0}$
is given by
\beq
\dd_3^{p,0}(\lambda)=0 \ .
\label{d3plambda0}\eeq
We interpret the condition (\ref{d3plambda0}) as a (partial)
requirement of global worldsheet anomaly cancellation for
Ramond-Ramond fluxes and, dually, the worldvolume homology cycles that
they pair with. This is the orbifold generalization of the
Freed-Witten condition~\cite{Diaconescu2003,Freed1999,Maldacena2001}
formulated in terms of obstruction classes in Bredon cohomology. It
is a necessary condition for the existence of a fractional D-brane
whose lowest non-vanishing Ramond-Ramond charge is
$\lambda\in\H_G^p(X;\underline{R}(-))$. On the other hand, in
computing the $\E_3$-term as the cohomology of the differential
(\ref{d3p0def}), we must also take the quotient by the image of
$\dd_3^{p-3,0}$. This means that a class $\lambda$ satisfying
eq.~(\ref{d3plambda0}) must be further subjected to the
identifications
\beq
\lambda\sim\lambda+\dd_3^{p-3,0}(\lambda'\,)
\label{lambdad3id}\eeq
in $\E_3^{p,0}$, for any class
$\lambda'\in\H_G^{p-3}(X;\underline{R}(-))$. We interpret the
condition (\ref{lambdad3id}) as accounting for Ramond-Ramond charge
violation due to D-instanton effects in the orbifold background, as
explained in ref.~\cite{Maldacena2001} for the non-equivariant
case. It means that while there exists a fractional brane whose lowest
Ramond-Ramond charge is $\dd_3^{p-3,0}(\lambda'\,)$, this D-brane is
unstable.

The passage from the limit (\ref{Epqfilt}) with $q=0$ to the actual
equivariant K-theory group $\K_G^p(X)$ requires solving a typically
non-trivial extension problem. Even when the spectral sequence
collapses at the $\E_2$-term, the extension can lead to important
torsion corrections which distinguish the classifications of
Ramond-Ramond fields based on Bredon cohomology and on equivariant
K-theory. The extension problem changes the additive structure on the
K-theory group of fluxes from that of the equivariant cohomology
classes. This corresponds physically to non-trivial correlations
between Ramond-Ramond fields of different degrees, when represented by
orbifold differential forms. This torsion enhancement in equivariant
K-theory compared to Bredon cohomology can shift the Dirac charge
quantization condition on the Ramond-Ramond fields by fractional units
and can play an important role near the orbifold
points~\cite{Bergman2001,Bergman2001rp}.

In the non-equivariant case $G=e$, the differential $\dd_3^{p,0}$ is
known to be given by the Steenrod square cohomology operation ${\rm
  Sq}^3$. The vanishing condition ${\rm Sq}^3(\lambda)=0$ implies the
vanishing of the third integer Stieffel-Whitney class of the
Poincar\'e dual cycle to $\lambda\in\H^p(X;\zed)$, which is just the
condition guaranteeing that the corresponding brane worldvolume is a
spin$^c$ submanifold of $X$. Unfortunately, for $G\neq e$ the
differential $\dd_3^{p,0}$ is not known and the geometrical meaning of
the condition (\ref{d3plambda0}) is unclear. It would be interesting to
understand this requirement in terms of an obstruction theory for
Bredon cohomology, analogously to the non-equivariant case, as this
would open up interesting new consistency conditions for D-branes and
Ramond-Ramond fields on global orbifolds $[X/G]$. However, we are not
aware of any characteristic class theory underlying the Bredon
cohomology groups $\H_G^p(X;\underline{R}(-))$.

\setcounter{section}{0}

\setcounter{subsection}{0}

\appendix{Linear algebra in functor categories}

In this appendix we will summarize some notions about algebra in
functor categories that were used in the main text of the paper. They
generalize the more commonly used concepts for modules over a
ring. For further details see ref.~\cite{Dieck1987}.

Let $R$ be a commutative ring, and denote the category of (left)
$R$-modules by $\lmod{R}$. Let $\Gamma$ be a \emph{small} category,
\emph{i.e.}, its class of objects ${\rm Obj}(\Gamma)$ is a set. If
$\mathcal{C}$ is another category, then one denotes by
\begin{displaymath}
\left[\Gamma,\mathcal{C}\right]
\end{displaymath}
the \emph{functor category} of (covariant) functors
$\Gamma\to{\mathcal{C}}$. The objects of
$\left[\Gamma,\mathcal{C}\right]$ are (covariant) functors
$\phi:\Gamma\to\mathcal{C}$ and a morphism from $\phi_{1}$ to
$\phi_{2}$ is a natural transformation $\alpha:\phi_{1}\to\phi_{2}$
between functors.

In particular, in the main text we used the functor category
\begin{displaymath}
\lmod{R\Gamma}:=\left[\Gamma,\lmod{R}\right]
\end{displaymath}
whose objects are called \emph{left $R\Gamma$-modules}. If one denotes
with $\Gamma^{\text{op}}$ the dual category to $\Gamma$, then there is
also the functor category
\begin{displaymath}
\rmod{R\Gamma}:=\left[\Gamma^{\text{op}},\lmod{R}\right]
\end{displaymath}
of contravariant functors $\Gamma\to\lmod{R}$, whose objects are
called \emph{right $R\Gamma$-modules}. As an example, let $G$ be a
discrete group regarded as a category with a single object and a
morphism for each element of $G$. A covariant functor $G\to\lmod{R}$
is then the same thing as a left module over the group ring $R[G]$ of
$G$ over $R$.

As the name itself suggests, all standard definitions from the linear
algebra of modules have extensions to this more general setting. For
instance, the notions of \emph{submodule, kernel, cokernel, direct
  sum, coproduct, etc.} can be naturally defined objectwise. If $M$
and $N$ are $R\Gamma$-modules, then $\text{Hom}_{R\Gamma}(M,N)$ is the
$R$-module of all natural transformations $M\to{N}$. This notation
should not be confused with the one used for the set of all morphisms
between two objects in $\Gamma$, and usually it is clear from the
context.

If $M$ is a right ${R\Gamma}$-module and $N$ is a left
${R\Gamma}$-module, then one can define their categorical \emph{tensor
  product}
\begin{displaymath}
M\otimes_{R\Gamma}N
\end{displaymath}
in the following way. It is the $R$-module given by first forming the
direct sum
\begin{displaymath}
F=\bigoplus_{\lambda\in\text{Obj}(\Gamma)}\,
M(\lambda)\otimes_{R}N(\lambda)
\end{displaymath}
and then quotienting $F$ by the $R$-submodule generated by all
relations of the form
\begin{displaymath}
f^*(m)\otimes{n}-m\otimes{f_*(n)}=0 \ ,
\end{displaymath}
where $(f:\lambda\to{\rho})\in\text{Mor}(\Gamma)$,
$m\in{M(\rho)},\:n\in{N(\lambda)}$ and $f^*(m)=M(f)(m),\:f_*
(n)=N(f)(n)$. This tensor product commutes with coproducts.
If $M$ and $N$ are functors from $\Gamma$ to the category of vector
spaces over a field $\mathbb{K}$, then their tensor product is
naturally equiped with the structure of a vector space over
$\mathbb{K}$. When $\Gamma$ is the orbit category $\ocat{G}$ and
$R=\zed$, the tensor product has precise limiting cases. For an
arbitrary contravariant module $M$ and the constant covariant module
$N$, the categorical product $M\otimes_{\zed\ocat{G}}N$ is the tensor
product of the right $\zed[G]$-module $M(G/e)$ with the constant left
$\zed[G]$-module $N(G/e)$, $M(G/e)\otimes_{\zed[G]}N(G/e)$. On the
other hand, if the contravariant module $M$ is constant and the
covariant module $N$ is arbitrary, then $M\otimes_{\zed\ocat{G}}N$ is
just $N(G/G)$.

\setcounter{subsection}{0}

\appendix{Equivariant K-homology\label{App:EqKhom}}

This appendix is devoted to explaining in more detail some of the
definitions and technical constructions in equivariant K-homology
theories that were used in the main text to describe states of
D-branes in orbifolds.

\subsection{Spectral definition\label{EqKhom}}

A natural way to define the equivariant homology theory $\K_\bullet^G$
is by means of a \emph{spectrum} for equivariant topological K-theory
$\K^\bullet_G$, which within the context of Section~\ref{EqChern} is a
particular covariant functor $\underline{\Vect}^G(-)$ from the orbit
category $\ocat{G}$ to the tensor category $\cat{Spec}$ of
spectra~\cite{Davis1998}. Given any $G$-complex $X$, the corresponding
pointed $G$-space is $X_+=X\amalg\pt$ and one defines the loop
spectrum $X_+\otimes_G\underline{\Vect}^G(-)$ by
\beq
X_+\otimes_G\underline{\Vect}^G(-)=\coprod_{G/H\in\ocat{G}}\,
\big(X_+^H\wedge\underline{\Vect}^G(G/H)\big)\,\big/\,\sim \ ,
\label{loopspec}\eeq
where the equivalence relation $\sim$ is generated by the
identifications $f^*(x)\wedge s\sim x\wedge f_*(s)$ with $(f:G/K\to
G/H)\in{\rm Mor}(\ocat{G})$, $x\in X_+^H$,
and $s\in\underline{\Vect}^G(G/K)_\bullet$. One then puts
\beq
\K_\bullet^G(X):=\pi_\bullet\big(X_+\otimes_G\underline{\Vect}^G(-)
\big) \ .
\label{KiGXspec}\eeq

By using various $G$-homotopy equivalences of the loop spectra
(\ref{loopspec}), one shows that this definition of equivariant
K-homology comes with a natural induction structure in the sense of
Section~\ref{EqCohTh}. For the trivial group it reduces to the
ordinary K-homology $\K_\bullet^e=\K_\bullet$ given by the Bott
spectrum ${BU}$. If $G$ is a finite group, any finite-dimensional
representation of $G$ naturally extends to a complex representation of
the group ring $\complex[G]$. Then there is an analytic assembly map
$$
\ass\,:\,\K_\bullet^G(X)~\longrightarrow~\K_\bullet\big(\complex[G]\big)
$$
to the K-theory of the ring $\complex[G]$, induced by the collapsing
map $X\to\pt$ and the isomorphisms
$$\K_\bullet\big(\complex[H]\big)~\cong~
\pi_\bullet\big(\,\underline{\Vect}^G(G/H)\big)~\cong~
\K_\bullet^G(G/H)~\cong~ R(H)$$ for any subgroup
$H\leq G$. In the following we will give two concrete realizations of
the homotopy groups~(\ref{KiGXspec}).

\subsection{Analytic definition\label{Andef}}

The simplest realization of the equivariant K-homology
group $\K_\bullet^G(X)$ is within the framework of an equivariant
version of Kasparov's KK-theory $\KK^G_\bullet$. Let $\alg$ be a
$G$-algebra, \emph{i.e.}, a $C^*$-algebra $\alg$ together
with a group homomorphism $$\lambda\,:\,G~\longrightarrow~{\rm
  Aut}(\alg) \ . $$ By a Hilbert
$(G,\alg)$-module we mean a Hilbert $\alg$-module $\bun$ together with
a $G$-action given by a homomorphism $\Lambda:G\to{\rm GL}(\bun)$ such
that
\beq
\Lambda_g(\varepsilon\cdot a)=\Lambda_g\big(\varepsilon\cdot
\lambda_g(a)\big)
\label{covrep}\eeq
for all $g\in G$, $\varepsilon\in\bun$ and $a\in\alg$. Let
$\lin(\bun)$ denote the $*$-algebra of $\alg$-linear maps
$T:\bun\to\bun$ admitting an adjoint with respect to the $\alg$-valued
inner product on $\bun$. The induced $G$-action on $\lin(\bun)$
is given by $g\cdot T:=\Lambda_g\circ T\circ\Lambda_{g^{-1}}$. Let
$\comp(\bun)$ be the subalgebra of $\lin(\bun)$ consisting of
generalized compact operators.

Given a pair $(\alg,\balg)$ of $G$-algebras, let
$\cat{D}^G(\alg,\balg)$ be the set of triples $(\bun,\phi,T)$ where
$\bun$ is a countably generated Hilbert $(G,\balg)$-module,
$\phi:\alg\to\lin(\bun)$ is a $*$-homomorphism which commutes with the
$G$-action,
\beq
\phi\big(\lambda_g(a)\big)=\Lambda_g\circ\phi(a)\circ\Lambda_{g^{-1}}
\label{covrep2}\eeq
for all $g\in G$ and $a\in\alg$, and $T\in\lin(\bun)$ such that
\begin{itemize}
\item[1)] $[T,\phi(a)]\in\comp(\bun)$ for all $a\in\alg$; and
\item[2)] $\phi(a)\,(T-T^*)$, $\phi(a)\,(T^2-1)$, $\phi(a)\,(g\cdot
  T-T)\in\comp(\bun)$ for all $a\in\alg$ and $g\in G$.
\end{itemize}
The standard equivalence relations of KK-theory are now analogously
defined. The set of equivalence classes in $\cat{D}^G(\alg,\balg)$
defines the equivariant KK-theory groups $\KK_\bullet^G(\alg,\balg)$.

If $X$ is a smooth proper $G$-manifold without boundary, and $G$ acts
on $X$ by diffeomorphisms, then the algebra
$\alg=\C_0(X)$ of continuous functions on $X$ vanishing at infinity is
a $G$-algebra with automorphism $\lambda_g$ on $\alg$ given by
$$\lambda_g(f)(x)~:=~\big(g^*f\big)(x)\=f\big(g^{-1}\cdot x\big) \ , $$
where $g^*$ denotes the pullback of the $G$-action on $X$ by left
translation by $g^{-1}\in G$. We define
\beq\label{KKGandef}
\K_\bullet^G(X):=\KK_\bullet^G\big(\C_0(X)\,,\,\complex\big)
\eeq
with $G$ acting trivially on $\complex$. The conditions (\ref{covrep})
and (\ref{covrep2}) naturally capture the physical requirements that
physical orbifold string states are $G$-invariant and also that the
worldvolume fields on a fractional D-brane carry a ``covariant
representation'' of the orbifold group~\cite{douglas1996}.

\subsection{The equivariant Dirac class\label{Diracclass}}

We can determine a canonical class in the abelian group
(\ref{KKGandef}) as follows. Let $\dim(X)=2n$, and let $G$ be a finite
subgroup of the rotation group $\SO(2n)$.\footnote{Throughout the
  extension to $\K^G_1$ or $\K^{-1}_G$ and $\dim(X)$ odd can be
  described in the same way as in degree zero by replacing $X$ with
  $X\times\S^1$.} Let $$\Cliff(2n)=\Cliff^+(2n)\oplus\Cliff^-(2n)$$
denote the complex $\zed_2$-graded euclidean Clifford algebra
on $n$ generators $e_1,\dots,e_n$ with the relations
$$
e_i\,e_j+e_j\,e_i=-2\,\delta_{ij} \ .
$$
A choice of a complete $G$-invariant riemannian
metric on $X$ defines a $G$-bundle of Clifford algebras
$$\Cliff\=\Cliff\big(T^*_X\big)~:=~\Fr^*\times_{\SO(2n)}\,\Cliff(2n)$$
which is an associated bundle to the metric coframe bundle over $X$,
the principal $\SO(2n)$-bundle $\Fr^*=\Fr(T^*_X)$ of oriented
orthonormal frames on the cotangent bundle
$T^*_X=\Fr^*\times_{\SO(2n)}\,\real^{2n}$. The action of $\SO(2n)$ on
the Clifford algebra is through the
spin group $\Spin(2n)\subset\Cliff(2n)$. The Lie group
$\Spin^c(2n)\subset\Cliff(2n)$ is a central extension of $\SO(2n)$ by
the circle group $\U(1)$,
\beq
1~\longrightarrow~\U(1)~\longrightarrow~\Spin^c(2n)~\longrightarrow~
\SO(2n)~\longrightarrow~1 \ ,
\label{Spinc2ndef}\eeq
where the quotient map in eq.~(\ref{Spinc2ndef}) is consistent with
the double covering of $\SO(2n)$ by $\Spin(2n)$ so that
$$\Spin^c(2n)=\Spin(2n)\times_{\zed_2}\U(1) \ . $$

The $G$-manifold $X$ is said to have a \emph{$G$-\spinc structure} or
to be \emph{$\K_G$-oriented} if there is an extension of the coframe
bundle to a principal $\Spin^c(2n)$-bundle $\Fr^*_L$ over $X$ which is
compatible with the $G$-action. The extension $\Fr^*_L$ may be
regarded as a principal circle bundle over $\Fr^*$,
$$
\xymatrix{ & & \U(1)\ar[ld]\ar[d] & \\ 
\hat G~\ar[r]\ar[d] & ~\Spin^c(2n)~\ar[r]\ar[d] & ~
\Fr^*_L~\ar[r]\ar[d]&~ X \ , \\
G~\ar[r] & ~\SO(2n)~\ar[r] & ~\Fr^*~\ar[ru] & }
$$
where the pullback square on the bottom left defines the required
covering of the orbifold group $G<\SO(2n)$ by a subgroup of the \spinc
group $\hat G<\Spin^c(2n)$. This lift is also necessary in order to
account for the spacetime fermions present in string theory. The
kernel of the homomorphism $\hat G\to G$ is identified with the circle
group $\U(1)<\Spin^c(2n)$ in the Clifford algebra $\Cliff(2n)$. We fix
a choice of lift and hence assume that $G$ is a discrete subgroup of
the \spinc group. $\zed_2$-graded Clifford modules are likewise
extended to representations of $\complex[G]\otimes\Cliff(2n)$, with
$\complex[G]$ the group ring of $G$, called $G$-Clifford modules. The
topological obstruction to the existence of a $G$-\spinc structure on
$X$ is the equivariant third integral Stiefel-Whitney class
$(W_3)_G(T_X^*)\in\H_G^3(X;\zed)$ of the cotangent bundle $T_X^*$ in
Borel cohomology.

The associated bundles of half-spinors on $X$ are defined as
\beq
\spinor^\pm\=S\big(T^*_X\big)^\pm~:=~
\Fr^*_L\times_{\Spin^c(2n)}\,\Delta^\pm \ ,
\label{halfspinbun}\eeq
where $\Delta^\pm$ are the irreducible half-spin representations of
$\SO(2n)$. Since $G$ lifts to $\hat G$ in the \spinc group, the
half-spin representations $\Delta^\pm$ restrict to representations of $G$
and the half-spinor bundles (\ref{halfspinbun}) are $G$-bundles. The
$G$-invariant Levi-Civita connection determines a connection one-form
on $\Fr^*$, and together with a choice of $G$-invariant connection
one-form on the principal $\U(1)$-bundle $\Fr^*_L\to\Fr^*$,
they determine a connection one-form on the principal
$\Spin^c(2n)$-bundle $\Fr^*_L\to X$ which is $G$-invariant. This
determines an invariant connection
$$
\nabla^{\spinor\otimes
  E}:=\nabla^\spinor\otimes1+1\otimes\nabla^E\,:\,
\C^\infty\big(X\,,\,\spinor^+\otimes E\big)~\longrightarrow~
\C^\infty\big(X\,,\,T^*_X\otimes\spinor^+\otimes E\big)
$$
where $\nabla^E$ is a $G$-invariant connection on a $G$-bundle $E\to
X$. The contraction given by Clifford multiplication defines a map
$$
\Cl\,:\,\C^\infty\big(X\,,\,T^*_X\otimes\spinor^+\otimes E\big)~
\longrightarrow~\C^\infty\big(X\,,\,\spinor^-\otimes E\big)
$$
which graded commutes with the $G$-action, and the $G$-invariant
\spinc Dirac operator on $X$ with coefficients in $E$ is defined as
the composition
\beq
\Dirac_E^{X}=\Cl\circ\nabla^{\spinor\otimes E} \ .
\label{DiracEX}\eeq

We will view the operator (\ref{DiracEX}) as an operator on
$\Ltwo$-spaces $$\Dirac_E^{X}\,:\,\Ltwo\big(X\,,\,\spinor^+\otimes
E\big)~\longrightarrow~\Ltwo\big(X\,,\,\spinor^-\otimes E\big) \ . $$
It induces a class
$\big[\Dirac_E^X\big]\in\K_0^G(X)$ as follows. The $G$-algebra
$\C_0(X)$ acts on the $\zed_2$-graded $G$-Hilbert space
$\bun:=\Ltwo(X,\spinor\otimes E)$ by multiplication. Define the
bounded $G$-invariant operator
$T:=\Dirac_E^X\,\big((\Dirac_E^X)^2+1\big)^{-1/2}\in\Fred_G$. Then
$\big[\Dirac_E^X\big]$ is represented by the $G$-equivariant Fredholm
module $(\bun,T)$.

\subsection{Geometric definition\label{Topdef}}

The natural geometric description of D-branes in an orbifold
space is provided by the topological version of the groups
$\K_\bullet^G(X)$ due to Baum, Connes and
Douglas~\cite{Baum1982,Baum2000}. This can be defined for an arbitrary
discrete, countable group $G$ on the category of proper, finite
$G$-complexes $X$ and proven to be isomorphic to analytic equivariant
K-homology~\cite{Baum2007}. Recall that the topological equivariant
K-theory $\K_G^\bullet(X)$ is defined by applying the Grothendieck
functor $\K^\bullet$ to the additive category $\Vect^\complex_G(X)$
whose objects are complex $G$-vector bundles over $X$, \emph{i.e.},
$\K_G^\bullet(X):=\K^\bullet\big(\Vect^\complex_G(X)\big)$.
In the homological setting, the relevant category
is instead the additive category of \emph{$G$-equivariant K-cycles}
$\cat{D}^G(X)$, whose objects are triples $(W,E,f)$ where
\begin{itemize}
\item[(a)] $W$ is a manifold without boundary with a smooth proper
  cocompact $G$-action and $G$-\spinc structure;
\item[(b)] $E$ is an object in $\Vect^\complex_G(W)$; and
\item[(c)] $ f:W\to X$ is a $G$-map.
\end{itemize}
Two $G$-equivariant K-cycles $(W,E, f)$ and $(W',E', f'\,)$ are
said to be \emph{isomorphic} if there is a $G$-equivariant
diffeomorphism $h:W\to W'$ preserving the $G$-\spinc structures on
$W,W'$ such that $h^*(E'\,)\cong E$ and $ f'\circ h= f$.

Define an equivalence relation $\sim$ on the category $\cat{D}^G(X)$
generated by the operations of
\begin{itemize}
\item Bordism: $(W_i,E_i, f_i)\in\cat{D}^G(X)$,
  $i=0,1$ are \emph{bordant} if there is a triple $(M,E, f)$ where
  $M$ is a manifold with boundary $\partial M$, with a smooth proper
  cocompact $G$-action and $G$-\spinc structure, $E\to M$ is a complex
  $G$-vector bundle, and $ f:M\to X$ is a $G$-map such that
  $(\partial M,E|_{\partial M}, f|_{\partial
    M})\cong(W_0,E_0, f_0)\amalg(-W_1,E_1, f_1)$. Here $-W_1$
  denotes $W_1$ with the reversed $G$-\spinc structure;
\item Direct sum: If $(W,E, f)\in\cat{D}^G(X)$ and
  $E=E_0\oplus E_1$, then
  $$(W,E, f)\cong(W,E_0, f)\amalg(W,E_1, f) \ ; $$ and
\item Vector bundle modification: Let
  $(W,E, f)\in\cat{D}^G(X)$ and $H$ an even-dimensional $G$-\spinc
  vector bundle over $W$. Let $\widehat{W}=\S(H\oplus\id)$ denote the
  sphere bundle of $H\oplus\id$, which is canonically a $G$-\spinc
  manifold, with $G$-bundle projection $\pi:\widehat{W}\to W$. Let
  $$\spinor(H)=\spinor(H)^+\oplus\spinor(H)^-$$ denote the
  $\zed_2$-graded $G$-bundle over $W$ of spinors on $H$. Set
  $\widehat{E}=\pi^*\big((\spinor(H)^+)^\vee\otimes E\big)$ and
  $\widehat{ f}= f\circ\pi$. Then
  $\big(\,\widehat{W}\,,\,\widehat{E}\,,\,\widehat{
    f}~\big)\in\cat{D}^G(X)$ is the \emph{vector bundle modification}
  of $(W,E, f)$ by $H$.
\end{itemize}
We set
$$
\K^G_{0,1}(X)=\cat{D}_{{\rm even},{\rm odd}}^G(X)\,\big/\,\sim
$$
where the parity refers to the dimension of the K-cycle, which is
preserved by $\sim$.

Using the equivariant Dirac class, one can construct a
homomorphism from the geometric to the analytic K-homology group. On
K-cycles we define $(W,E,f)\mapsto f_*\big[\Dirac_E^W\big]$ and extend
linearly. This map can be used to express $G$-index theorems within
this homological framework and it extends to give an isomorphism
between the two equivariant K-homology groups~\cite{Baum2007}. (See
also ref.~\cite{Reis2006} for a related construction in the
non-equivariant case.)

\appendix{D-brane charges of equivariant
  K-cycles\label{Dbranecharge}}

In this appendix we will review the construction of the equivariant
Gysin homomorphism and how it shows that D-brane charges on the
orbifold $[X/G]$ take values in the equivariant K-theory
$\K_G^\bullet(X)$. Let $X$ and $W$ be smooth compact $G$-manifolds,
and $f:W\rightarrow{X}$ a smooth proper $G$-map. We begin by dealing
with the non-equivariant setting $G=e$. Assume that the
$\zed_2$-graded bundle $\nu$ of eq.~(\ref{normbun}) is of even
rank~$r=2n$ and endowed with a \spinc structure. We will generalize
the construction~\cite{Minasian1997,Olsen2000,Witten1998},
establishing that the charge of a D-brane supported on $W$ with
Chan-Paton gauge bundle $E\to W$ in Type~II superstring theory without
$H$-flux takes values in the complex K-theory of spacetime
$X$, to D-branes represented by generic topological K-cycles $(W,E,f)$,
\emph{i.e.}, including those D-branes which are not representable as
wrapping embedded cycles in $X$. It is based on the diagram
$$
\xymatrix{
\nu\cong U~\ar[d]_\pi \ar[rd]^\jmath & \\
W~\ar[d]_\kappa \ar[r]^f & ~X \\
X\times\real^{2q}~\ar[ru]_{\pi_1} & }
$$
over the brane immersion $f$, which we explain momentarily. The
spin$^c$ condition on the bundle $\nu$ is the appropriate
generalization of the Freed-Witten anomaly cancellation
condition~\cite{Freed1999} to this situation. It amounts to a choice
of line bundle $L\to W$ whose first Chern class
$$c_1(L)~\in~\H^2(W;\zed)$$ obeys $c_1(L)\equiv
f^*w_2(T_X)-w_2(T_W)~{\rm mod}~2$, where $w_2(T_X)$ and
$w_2(T_W)$ are the second Stiefel-Whitney classes of the tangent
bundles of $X$ and $W$. The set of all such K-orientations is an
affine space modelled on $2\,\H^2(W;\zed)$.

Consider first the usual case where $f:W\hookrightarrow X$ is a
smoothly embedded cycle. Then the virtual bundle $\nu$ can be
identified (in KO-theory) with the normal bundle to $ W$ with respect
to $f$, which is the quotient bundle $\pi:f^*(T_X)/T_W\to W$. Upon
choosing a riemannian metric on $X$, we can identify $\nu$ with a
tubular neighbourhood $U$ of $f(W)$ via a diffeomorphism from the open
embedding $\jmath:U\hookrightarrow X$ onto a neighbourhood of the zero
section embedding $W\hookrightarrow\nu$. Let $[\pi^*S(\nu)^+,
\pi^*S(\nu)^-; c(v)]$ be the Atiyah-Bott-Shapiro representative
of the Thom class $\Thom(\nu)$, in the $\K$-theory with compact
vertical support $$\K^r_\cpt(\nu):=\K^r(\nu,\nu\setminus W) \ , $$ which
restricts to the Bott class $u^{-n}\in\K^{-r}(\pt)$ on each fibre of
$\nu$. Here $$S(\nu)^\pm~\longrightarrow~W$$ are the half-spinor
bundles associated to $\nu$ and the morphism $c(v) :\pi^*S(\nu)^+ \to
\pi^*S(\nu)^-$ is given by Clifford multiplication by the tautological
section $v$ of the bundle $\pi^*\nu\to\nu$ which assigns to a vector
in $\nu$ the same vector in $\pi^*\nu$.

Then one can define the Gysin homomorphism in ordinary K-theory 
\begin{displaymath}
f^{\K}_{!}\,:\,\K^\bullet( W)~\longrightarrow~{\K^\bullet(X)} \ .
\end{displaymath}
It is defined as the composition of the Thom isomorphism
\bea
\K^\bullet(W)&\xrightarrow{\approx}&{\K^\bullet_\cpt(\nu)}
\nonumber \\ \xi&\longmapsto&\pi^*(\xi)\otimes\Thom(\nu) \nonumber
\eea
with the natural ``extension by zero'' homomorphism 
$\jmath:\K^\bullet_\cpt(\nu)\to\K^\bullet(X)$ given by composing
$\K^\bullet(U,U\setminus W)\to\K^\bullet(X,X\setminus
W)\to\K^\bullet(X)$, where the first map is the excision isomorphism
and the second map is induced by the inclusion
$(X,\pt)\hookrightarrow(X,W)$. For a general smooth proper map $f:W\to
X$, we use the fact that every smooth compact manifold $W$ can be smoothly
embedded in $\real^{2q}$ for $q$ sufficiently large to define a parametrized
version that yields an embedding $$\kappa\,:\,W~\longrightarrow~
X\times\real^{2q} \ , $$ whose normal bundle is spin$^c$. The
corresponding Gysin map is a homomorphism
$$\kappa_!^\K\,:\,\K^\bullet(W)~\longrightarrow~
\K^\bullet_\cpt\big(X\times\real^{2q}\big) \  . $$ The
Gysin homomorphism $f_!^\K:\K^\bullet(W)\to\K^\bullet(X)$ is then
defined as the composition of $\kappa_!^\K$ with the inverse Thom
isomorphism $\K_{\cpt}^\bullet(X\times\real^{2q})\cong\K^\bullet(X)$
for the trivial \spinc bundle
$$\pi_1\,:\,X\times\real^{2q}~\longrightarrow~X \ . $$ By
homotopy invariance of K-theory and functoriality for pushforward
maps, the map $f_!^\K$ is independent of the choice of identification
of the normal bundle with a tubular neighbourhood and of Whitney
embedding $W\hookrightarrow\real^{2q}$.

Let us now consider the $G$-actions on $ W$ and on $X$. In
a similar way as in ordinary K-theory, if $\nu$ is $\K_{G}$-oriented
then one has the equivariant Thom isomorphism 
\bea
\K^\bullet_{G}( W)&\xrightarrow{\approx}&{\K^\bullet_{G,\cpt}(\nu)}
\nonumber\\ \xi&\longmapsto& \pi^*(\xi)\otimes\Thom_G(\nu) \ ,
\nonumber
\eea
where the equivariant Thom class $\Thom_G(\nu)\in\K_{G,\cpt}^r(\nu)$
is defined in the same way as above using the $G$-\spinc structure on
$\nu$ and the equivariant version of the Atiyah-Bott-Shapiro
construction~\cite{Landweber2005}. The associated Gysin
homomorphism, constructed as above via a choice of $G$-invariant
riemannian metric on $X$ and of $G$-invariant Whitney embedding
$W\hookrightarrow\real^{2q}$ with $G$ acting trivially on
$\real^{2q}$, is the pushforward map
$f_!^{\K_G}:\K^\bullet_{G}(W)\to\K^\bullet_{G}(X)$. This establishes
that the charge of a fractional D-brane in the Type~II spacetime
orbifold $[X/G]$, associated to a generic $G$-equivariant K-cycle
$(W,E,f)\in\cat{D}^G(X)$ on the covering space $X$, takes values
$f_!^{\K_G}\big([E]\big)\in\K^\bullet_{G}(X)$ in the equivariant
K-theory of $X$.

\end{document}